\documentclass[11pt,letterpaper,fleqn]{article}
\usepackage[T1]{fontenc}
\usepackage{lmodern}
\usepackage{amsmath,amssymb,amsthm,mathtools,bm,mathrsfs}
\usepackage{booktabs,array,longtable,enumitem}
\usepackage{cite}
\usepackage[unicode,hidelinks]{hyperref}
\numberwithin{equation}{section}
\allowdisplaybreaks[2]
\newcommand{\mathsfbi}[1]{\bm{\mathsf{#1}}}
\theoremstyle{plain}
\newtheorem{theorem}{Theorem}[section]
\newtheorem{lemma}[theorem]{Lemma}
\newtheorem{proposition}[theorem]{Proposition}
\newtheorem{corollary}[theorem]{Corollary}
\theoremstyle{definition}

\theoremstyle{remark}
\newtheorem{remark}[theorem]{Remark}
\newcommand{\Div}{\operatorname{Div}}
\newcommand{\curl}{\operatorname{curl}}
\newcommand{\tr}{\operatorname{tr}}
\newcommand{\sym}{\operatorname{sym}}

\newcommand{\ord}{\operatorname{ord}}

\newcommand{\diag}{\operatorname{diag}}
\newcommand{\cD}{\mathcal D}
\newcommand{\cE}{\mathcal E}

\newcommand{\cR}{\mathcal R}

\newcommand{\cN}{\mathcal N}
\newcommand{\cF}{\mathcal F}

\newcommand{\R}{\mathbb R}
\newcommand{\C}{\mathbb C}

\newcommand{\EulerOp}{\mathbf E}

\newcommand{\Tr}{\mathrm{T}}

\newcommand{\uu}{{\boldsymbol{u}}}
\newcommand{\xx}{{\boldsymbol{x}}}
\newcommand{\ww}{{\boldsymbol{\omega}}}
\newcommand{\xxi}{{\boldsymbol{\xi}}}
\newcommand{\LL}{{\boldsymbol{\Lambda}}}
\newcommand{\matA}{{\mathsfbi{A}}}
\newcommand{\matP}{{\mathsfbi{P}}}
\newcommand{\matS}{{\mathsfbi{S}}}
\newcommand{\matGamma}{{\mathsfbi{\Gamma}}}
\hypersetup{pdftitle={Local conservation laws and symmetries of three-dimensional incompressible Euler flow},pdfauthor={Alexey Shevyakov}}
\title{Local conservation laws and symmetries of three-dimensional incompressible Euler flow}
\author{Alexey Shevyakov\thanks{Electronic mail: \texttt{shevyakov@math.usask.ca}.}\\[3pt]
{\small\emph{Department of Mathematics and Statistics, University of Saskatchewan,}}\\
{\small\emph{Saskatoon, S7N 5E6 Canada}}}
\date{September 20, 2026}
\begin{document}
\maketitle
\begin{abstract}
We classify all smooth local conservation laws of three-dimensional
homogeneous incompressible Euler flow in velocity and pressure, allowing
explicit time and position and derivatives of arbitrary finite order.
Modulo currents vanishing on solutions and currents with identically
zero divergence, the classical mass, momentum, angular-momentum,
kinetic-energy and helicity balances, together with generalised
time-dependent momentum, exhaust the classification.
We also classify all smooth finite-order local symmetry characteristics
and adjoint symmetries: differential expressions satisfying the
linearised Euler equations and their formal adjoint on solutions,
respectively. Every symmetry characteristic is equivalent on solutions
to one induced by a point transformation. Every adjoint symmetry has a
multiplier representative, whose product with the Euler equations is
identically a divergence. These representatives exhaust the restrictions
of all local conservation-law multipliers to solutions.
The conservation-law proof has three stages: classification of adjoint
symmetries using their determining equations, verification of explicit
multiplier identities, and the correspondence between multiplier
restrictions and current classes. A common analysis of highest-derivative
responses reduces both symmetry problems to order at most one.
\end{abstract}
\section{Introduction}\label{sec:introduction}

The unforced motion of a homogeneous incompressible inviscid fluid in three
spatial dimensions is governed by the Euler equations
\begin{equation}\label{eq:s1-1}
 \boldsymbol F=\uu_t+(\uu\cdot\nabla)\uu+\nabla p=0,
 \qquad G=\nabla\cdot\uu=0.
\end{equation}
Here $\uu$ is the velocity, $p$ is the pressure divided by the constant
density, and $\xx$ and $t$ are Cartesian position and time.
The vorticity, kinetic energy per unit mass and helicity density are
\begin{equation}\label{eq:intro-densities}
 \ww=\nabla\times\uu,\qquad K=\tfrac12|\uu|^2,
 \qquad h=\uu\cdot\ww.
\end{equation}
A local conservation law has a density $\rho$ and flux $\boldsymbol J$
depending on the fields and finitely many derivatives, such that
\begin{equation}\label{eq:current-divergence}
 \partial_t\rho+\nabla\cdot\boldsymbol J=0
 \qquad\text{on solutions of \eqref{eq:s1-1}}.
\end{equation}
For a vector density, the flux is a matrix and its divergence is taken
row by row. The tensor product has components
$(\boldsymbol a\otimes\boldsymbol b)_{ij}=a_i b_j$; $\mathsfbi I$ denotes
the identity matrix. The classical currents are as follows
\cite{Batchelor2000,CheviakovOberlack2014}.

\emph{Mass balance (incompressibility).} For any smooth function $c_0$ of time,
\begin{equation}\label{eq:intro-mass}
 \rho=0,\qquad \boldsymbol J=c_0(t)\uu.
\end{equation}
The constant mass density has zero time derivative and is omitted here.
The choice $c_0=1$ recovers the second Euler equation \eqref{eq:s1-1}.

\emph{Linear momentum.} Its three components have the vector density and
matrix flux
\begin{equation}\label{eq:intro-momentum}
 \boldsymbol\rho=\uu,\qquad
 \mathsfbi J=\uu\otimes\uu+p\mathsfbi I.
\end{equation}

\emph{Angular momentum.} The cross-product matrix gives
\begin{equation}\label{eq:intro-angular}
 \mathsfbi C_{\xx}\boldsymbol a=\xx\times\boldsymbol a,\qquad
 \boldsymbol\rho=\xx\times\uu,\qquad
 \mathsfbi J=\mathsfbi C_{\xx}(\uu\otimes\uu+p\mathsfbi I).
\end{equation}

\emph{Kinetic energy.} The density $K$ in \eqref{eq:intro-densities} gives
\begin{equation}\label{eq:intro-energy}
 \rho=K,\qquad \boldsymbol J=(K+p)\uu.
\end{equation}

\emph{Helicity.} The density $h$ in \eqref{eq:intro-densities} gives
\begin{equation}\label{eq:intro-helicity}
 \rho=h,\qquad
 \boldsymbol J=h\uu+(p-K)\ww.
\end{equation}
Helicity conservation was established by Moreau~\cite{Moreau1961}.
For linked vortex tubes, its integral relates their linkage to their
circulations \cite{Moffatt1969}.

\emph{Generalised linear momentum.} An arbitrary smooth vector function
$\boldsymbol b$ of time gives
\begin{equation}\label{eq:intro-general-momentum}
 \rho=\boldsymbol b(t)\cdot\uu,\qquad
 \boldsymbol J=(\boldsymbol b\cdot\uu)\uu+p\boldsymbol b
               -(\boldsymbol b'\cdot\xx)\uu.
\end{equation}
Constant $\boldsymbol b$ gives the components of
\eqref{eq:intro-momentum}; a vector linear in time gives the usual
time-dependent momentum current associated with Galilean boosts.
The correction involving $\boldsymbol b'$ follows from incompressibility.
Each local balance gives a conserved volume integral when its boundary
flux vanishes.

Energy and helicity constrain spectral transfer in three-dimensional
turbulence and enter the statistical equilibria of Euler systems
truncated to finitely many Fourier modes
\cite{Kraichnan1973,AlexakisBiferale2018}.
In the Eulerian Hamiltonian formulation of periodic flow, helicity is a
Casimir: its Poisson bracket with every functional vanishes \cite{Salmon1988}.
The completeness problem is to determine whether higher derivatives of
velocity and pressure yield further local conservation-law classes.
We identify two currents when their difference is the sum of a current
vanishing on solutions and one with identically zero divergence.
Theorem~\ref{thm:main} proves that
\eqref{eq:intro-mass}--\eqref{eq:intro-general-momentum} exhaust them
at every finite differential order.

The classical point symmetries act on time, position, velocity and
pressure. Their infinitesimal generators are
\begin{equation}\label{eq:intro-point-generator}
 \mathcal V=X^0\partial_t+\boldsymbol X\cdot\partial_{\xx}
              +\boldsymbol U\cdot\partial_{\uu}+U^0\partial_p,
\end{equation}
with coefficients depending only on these variables.
Olver~\cite[Theorem~2.1]{Olver1982} gives the complete family
\begin{equation}\label{eq:sym-family}
 \begin{aligned}
 X^0&=\varkappa_1t+\varkappa_0,\\
 \boldsymbol X&=\varsigma\xx+\boldsymbol\Omega\times\xx+\boldsymbol b(t),\\
 \boldsymbol U&=(\varsigma-\varkappa_1)\uu
                    +\boldsymbol\Omega\times\uu+\boldsymbol b'(t),\\
 U^0&=2(\varsigma-\varkappa_1)p-\boldsymbol b''(t)\cdot\xx+\varpi(t).
 \end{aligned}
\end{equation}
Here $\varkappa_0,\varkappa_1,\varsigma$ and $\boldsymbol\Omega$
are constant, while $\boldsymbol b$ and $\varpi$ are arbitrary smooth
vector and scalar functions of time. Primes denote time derivatives.
The named generators are
\begin{equation}\label{eq:intro-point-generators}
 \begin{aligned}
 \text{time translation:}\quad&\partial_t,\\
 \text{rotations:}\quad&
       (\boldsymbol\Omega\times\xx)\cdot\partial_{\xx}
           +(\boldsymbol\Omega\times\uu)\cdot\partial_{\uu},\\
 \text{time dilation:}\quad&
       t\partial_t-\uu\cdot\partial_{\uu}-2p\partial_p,\\
 \text{space dilation:}\quad&
       \xx\cdot\partial_{\xx}+\uu\cdot\partial_{\uu}+2p\partial_p,\\
 \text{moving frames:}\quad&
       \boldsymbol b\cdot\partial_{\xx}+\boldsymbol b'\cdot\partial_{\uu}
           -(\boldsymbol b''\cdot\xx)\partial_p,\\
 \text{pressure gauge:}\quad&\varpi(t)\partial_p.
 \end{aligned}
\end{equation}
The two dilations are independent. The moving-frame family consists
of time-dependent spatial translations with the indicated velocity
and pressure corrections.
Constant $\boldsymbol b$ gives ordinary translations; taking
$\boldsymbol b(t)=t\boldsymbol v$ with constant $\boldsymbol v$ gives
Galilean boosts. The pressure gauge adds a spatially uniform function
of time to pressure.

At fixed time and position, the infinitesimal field variation, called
the \emph{evolutionary characteristic}, is
\begin{equation}\label{eq:sym-point-characteristic}
 \binom{\boldsymbol r}{h}=
 \begin{pmatrix}
 \boldsymbol U-X^0\uu_t-(\boldsymbol X\cdot\nabla)\uu\\
 U^0-X^0p_t-\boldsymbol X\cdot\nabla p
 \end{pmatrix}.
\end{equation}
It has derivative order at most one.
A local symmetry characteristic may depend on derivatives of any finite
order and satisfies the linearised Euler equations on solutions.
Theorem~\ref{thm:symmetries} proves that every such smooth characteristic
agrees on Euler with a member of \eqref{eq:sym-point-characteristic}.

In two dimensions, vortex stretching is absent and the scalar vorticity
is materially conserved, meaning constant along fluid trajectories.
Every smooth function $\Phi$ therefore gives
\begin{equation}\label{eq:intro-enstrophy}
 \zeta=\partial_1u_2-\partial_2u_1,\qquad
 \partial_t\Phi(\zeta)+\nabla\cdot[\uu\Phi(\zeta)]=0.
\end{equation}
The associated integrals are generalised enstrophies; ordinary enstrophy
corresponds to $\Phi(\zeta)=\zeta^2/2$. Energy and enstrophy provide the
familiar constraints on the inverse energy and direct enstrophy cascades
\cite{AlexakisBiferale2018}.
Additional balances also occur in axisymmetric and helically symmetric
reductions \cite{KelbinCheviakovOberlack2013}.
On closed surfaces, the Casimir classification also involves integrals
over regions defined by vorticity contours and circulation data
\cite{IzosimovKhesin2017}.

Several other familiar invariants involve material geometry or additional
fields. Kelvin circulation is an integral around a moving fluid loop;
Cauchy invariants relate vorticity to derivatives of the Lagrangian flow
map, which sends initial particle positions to their positions at time
$t$ \cite{Batchelor2000,BesseFrisch2017}. Volume-preserving relabelling changes
these initial labels without changing the physical flow. Its relation
to material invariants, and the derivation of momentum, angular momentum
and energy from spatial translations, rotations and time translations
by Noether's theorem, are described in \cite{Salmon1988,Olver1993}.
These statements require their respective material or variational
formulations; the present theorem concerns volume currents in the
primitive Eulerian variables $\uu,p$.

For Euler flow augmented by a transported scalar $\theta$, Ertel's
theorem gives the potential vorticity $q$ and its material conservation:
\begin{equation}\label{eq:intro-ertel}
 \begin{gathered}
 Y=\partial_t+\uu\cdot\nabla,\qquad Y\theta=0,
 \qquad q=\ww\cdot\nabla\theta,\qquad Yq=0,\\
 \partial_t\Phi(\theta,q)+\nabla\cdot[\uu\Phi(\theta,q)]=0.
 \end{gathered}
\end{equation}
This is the homogeneous incompressible form of Ertel's theorem~\cite{Ertel1942}.
Particle labels or auxiliary potentials obtained from transport equations
are generally not finite differential functions of $\uu,p$.
The resulting augmented systems admit further local or nonlocal
invariants \cite{CheviakovOberlack2014,AncoWebb2020}.

The vorticity balance itself is the curl of the momentum equation:
\begin{equation}\label{eq:intro-vorticity}
 \partial_t\ww+\nabla\cdot(\ww\otimes\uu-\uu\otimes\ww)
 =\partial_t\ww+\nabla\times(\ww\times\uu)=0.
\end{equation}
Its volume integral is a boundary quantity, since
\begin{equation}\label{eq:intro-vorticity-boundary}
 \int_V\ww\,\mathrm dV
 =\int_{\partial V}\boldsymbol n\times\uu\,\mathrm dS,
\end{equation}
where $\boldsymbol n$ is the outward unit normal.
The vorticity current \eqref{eq:intro-vorticity} and the finite local
vorticity hierarchies of Cheviakov and Oberlack~\cite{CheviakovOberlack2014} represent zero
classes under this equivalence.
Their balances retain physical information; that paper explicitly
discusses this distinction in its treatment of trivial currents.

For three-dimensional Euler flow, Serre~\cite{Serre1984} classified conserved
integrals with densities depending
only on velocity and its first spatial derivatives, obtaining linear
momentum, energy and helicity. Cheviakov and Oberlack~\cite{CheviakovOberlack2014} classified
primitive-variable multipliers through second order and left the
higher-order Euler problem open. Complete local classifications were
obtained by Gusyatnikova and Yumaguzhin~\cite{GusyatnikovaYumaguzhin1989}
for three-dimensional incompressible Navier--Stokes, and by Khorkova
and Verbovetsky~\cite{KhorkovaVerbovetsky1995} for
the related $k$--$\varepsilon$ turbulence model. The latter proof reduces
the derivative dependence of the functions encoding the conservation
laws. Euler flow requires a separate order-reduction argument because
the viscous highest-order terms are absent.

Olver~\cite[\S2, p.~240]{Olver1982} conjectured that every local
generalised symmetry of Euler is equivalent to the evolutionary
form of a point symmetry. Bihlo and Popovych~\cite[Remark~16]{BihloPopovych2020}
reported this question as still open in 2020.
For three-dimensional incompressible Navier--Stokes,
Gusyatnikova and Yumaguzhin~\cite{GusyatnikovaYumaguzhin1989}
obtained the corresponding complete
symmetry classification. Theorem~\ref{thm:symmetries} gives the Euler
classification in the primitive variables.

A multiplier is a differential expression whose product with the
equations is identically a divergence. Every multiplier restricts to an
\emph{adjoint symmetry}: a differential expression solving the formal
adjoint of the linearised equations on solutions. In general, the
converse requires additional integrability conditions. Our proof has
three stages. First, we classify all smooth local adjoint symmetries of
Euler of arbitrary finite order, using only the adjoint equations.
Second, explicit currents show
that every classified adjoint symmetry has a multiplier representative.
Third, the correspondence between multiplier restrictions and current
classes gives the conservation-law classification. The first stage thus
establishes a stronger classification than that of multipliers alone.
This statement concerns values on Euler; an arbitrary extension away
from Euler need not satisfy the multiplier identity.

Standard references for local currents, their equivalence and the direct
construction method include \cite{Olver1993,AncoBluman2002b,BlumanCheviakovAnco2010};
connections with other analytical properties of nonlinear partial
differential equations are developed by Cheviakov and Zhao~\cite{CheviakovZhao2024}.
Our classification allows explicit time and
position, pressure, and arbitrary finite numbers of time and spatial
derivatives. It complements geometric uniqueness results for helicity
\cite{EncisoPeraltaSalasTorres2016}. Conservation at weak regularity is
a separate question: Onsager's conjecture concerns kinetic energy
\cite{ConstantinETiti1994,Isett2018}, while helicity conservation has
its own regularity conditions \cite{DeRosa2020,WangWeiWuYe2024}.

Section~\ref{sec:1} states both classifications and their equivalences.
Section~\ref{sec:2} provides free derivative coordinates and proves
that two families of characteristic variations span every admissible
highest-order variation. Thus, when both responses vanish, the
derivative order decreases. Sections~\ref{sec:4}--\ref{sec:5} obtain
equations for these responses and classify their coefficients.
For adjoint symmetries, the resulting vorticity term is tested against
the full equations in \S\ref{sec:8}; \S\ref{sec:9} solves the remaining
order-zero equations and completes the conservation-law classification.
Section~\ref{sec:sym} applies the same coefficient comparisons to
ordinary symmetries. There the polynomial continuations of the two
responses must agree where the characteristic branches meet, which
excludes orders above one.
The first-order equations then determine the point family
\eqref{eq:sym-family}.

\section{Local conservation laws, symmetries and adjoint symmetries}\label{sec:1}

A \emph{differential function} depends smoothly on the independent
variables, the fields, and finitely many field derivatives. Its
\emph{joint order}, denoted by $\ord f$ for a function $f$, is the largest
total number of time and spatial differentiations on any field argument.
An identity holds \emph{on Euler}
after imposing \eqref{eq:s1-1} and its differentiated consequences.
All statements are local on open sets of these compatible derivative values.
The total derivatives $D_t,D_i$ apply the chain rule to differential
functions; on fields and currents they agree with the derivatives in
Section~\ref{sec:introduction}. We write $\Div$ and $\curl$ for the
corresponding spatial divergence and curl.
Spatial indices range from one to three, with repeated indices summed.
Vectors are columns, and the superscript $\Tr$ denotes transpose.
The velocity gradient and cross-product matrix satisfy
\begin{equation}\label{eq:s1-2}
 \begin{gathered}
 Y=D_t+u_iD_i,\qquad A_{ij}=D_ju_i,\qquad
 \mathsfbi C_{\xxi}\boldsymbol v=\xxi\times\boldsymbol v,\\
 \matA-\matA^{\Tr}=\mathsfbi C_{\ww},\qquad \tr\matA=G.
 \end{gathered}
\end{equation}

Two conservation currents are \emph{equivalent} if their difference is
the sum of a current vanishing on Euler and a current whose divergence
vanishes identically for unrestricted fields.
A current is \emph{local} when its components are differential functions,
and \emph{trivial} when it is equivalent to zero.
A \emph{multiplier} is a pair of differential functions satisfying an
unrestricted identity of the form
\begin{equation}\label{eq:s1-4}
 E=\binom{\boldsymbol F}{G},\qquad
 Q=\binom{\LL}{\Lambda^0},\qquad
 \LL\cdot\boldsymbol F+\Lambda^0G=D_t\rho+\Div\boldsymbol J.
\end{equation}
Here $\LL$ is a three-component column and $\Lambda^0$ is scalar.
Two multipliers are identified here when they agree on Euler.
The corresponding \emph{adjoint equations} are
\begin{equation}\label{eq:s3-adjoint-equations}
 Y\LL-\matA^{\Tr}\LL+\nabla\Lambda^0=0,
 \qquad \Div\LL=0\qquad\text{on Euler}.
\end{equation}
A \emph{local adjoint symmetry} is a pair of differential functions
satisfying \eqref{eq:s3-adjoint-equations}. Pairs agreeing on Euler
represent the same adjoint symmetry. Every multiplier restricts to
an adjoint symmetry; the converse generally requires additional
integrability conditions.

Let $c,\nu$ and $\boldsymbol\Omega$ be constants, and let
$\boldsymbol b(t),c_0(t)$ be arbitrary smooth functions. Define
\begin{equation}\label{eq:s1-5}
 \begin{aligned}
 \LL&=c\ww+\nu\uu+\boldsymbol\Omega\times\xx+\boldsymbol b(t),\\
 \Lambda^0&=\nu(p+K)
 +(\boldsymbol\Omega\times\xx+\boldsymbol b(t))\cdot\uu
 -\boldsymbol b'(t)\cdot\xx+c_0(t).
 \end{aligned}
\end{equation}
The corresponding density and flux are
\begin{equation}\label{eq:s9-explicit-currents}
 \begin{gathered}
 \boldsymbol\beta=\boldsymbol\Omega\times\xx+\boldsymbol b(t),\\
 \begin{aligned}
 \rho={}&\tfrac12c\,h+\nu K+\boldsymbol\beta\cdot\uu,\\
 \boldsymbol J={}&c\bigl[(p+K)\ww+\tfrac12\uu\times\uu_t\bigr]
       +\nu(p+K)\uu\\
    &+(\boldsymbol\beta\cdot\uu)\uu+p\boldsymbol\beta
       -(\boldsymbol b'\cdot\xx)\uu+c_0\uu.
 \end{aligned}
 \end{gathered}
\end{equation}
The following product rules verify the unrestricted multiplier
identity \eqref{eq:s1-4}. The momentum identity uses the skew
gradient and zero divergence of $\boldsymbol\beta$:
\begin{equation}\label{eq:s9-current-identities}
 \begin{aligned}
 \uu_t\cdot\ww
    &=D_t(\tfrac12h)+\tfrac12\Div(\uu\times\uu_t),\\
 (\matA\uu+\nabla p)\cdot\ww
    &=\Div[(p+K)\ww],\\
 D_tK+\Div[(p+K)\uu]
    &=\uu\cdot\boldsymbol F+(p+K)G,\\
 D_t(\boldsymbol\beta\cdot\uu)
    +\Div[(\boldsymbol\beta\cdot\uu)\uu+p\boldsymbol\beta]
    &=\boldsymbol\beta\cdot\boldsymbol F
      +(\boldsymbol\beta\cdot\uu)G+\boldsymbol b'\cdot\uu,\\
 \Div[-(\boldsymbol b'\cdot\xx)\uu]
    &=-\boldsymbol b'\cdot\uu-(\boldsymbol b'\cdot\xx)G,\\
 \Div(c_0\uu)&=c_0G.
 \end{aligned}
\end{equation}
The helicity flux in \eqref{eq:s9-explicit-currents} is related to
\eqref{eq:intro-helicity} by the unrestricted identity
\begin{equation}\label{eq:s9-helicity-flux-equivalence}
 \begin{aligned}
 &2\bigl[(p+K)\ww+\tfrac12\uu\times\uu_t\bigr]
       -\bigl[h\uu+(p-K)\ww\bigr]\\
 &\qquad=\curl[(p+K)\uu]+\uu\times\boldsymbol F.
 \end{aligned}
\end{equation}
On Euler the last term vanishes and the curl has identically zero
divergence, so the coefficient $c$ in \eqref{eq:s9-explicit-currents}
multiplies one half of the standard helicity current.

The following theorem makes the multiplier family \eqref{eq:s1-5}
complete up to equality on Euler, and the current family
\eqref{eq:s9-explicit-currents} complete up to current equivalence.

\begin{theorem}[Local conservation laws and adjoint symmetries]\label{thm:main}
Every smooth local conservation current of finite joint order is
equivalent to a current \eqref{eq:s9-explicit-currents}. Its class is
the sum of $\nu$ times the energy current \eqref{eq:intro-energy},
the angular-momentum current \eqref{eq:intro-angular} contracted with
$\boldsymbol\Omega$, $c/2$ times the helicity current
\eqref{eq:intro-helicity}, and the currents \eqref{eq:intro-mass} and
\eqref{eq:intro-general-momentum} determined by $c_0(t)$ and
$\boldsymbol b(t)$.

More generally, every smooth local adjoint symmetry of arbitrary
finite joint order agrees on Euler with a pair \eqref{eq:s1-5}.
Thus every such adjoint symmetry has a multiplier representative
of joint order at most one.
\end{theorem}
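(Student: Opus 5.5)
The proof follows the three-stage scheme announced in the introduction. The main work is the second assertion: classifying all local adjoint symmetries $(\LL,\Lambda^0)$ of arbitrary finite joint order $n$ from \eqref{eq:s3-adjoint-equations}.

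\textbf{Setting up the order reduction.} I would first fix free derivative coordinates on Euler. Pressure derivatives are eliminated except for a harmonic-type remainder, using the Poisson equation $\Delta p=-\tr(\matA^2)$. Time derivatives of $\uu$ are replaced via $\boldsymbol F=0$, and divergence-type velocity derivatives via $G=0$. In these coordinates the highest-order part of $(\LL,\Lambda^0)$ is a polynomial-free dependence on the top jet. Differentiating \eqref{eq:s3-adjoint-equations}, the terms of order $n+1$ come only from $Y\LL$, $\nabla\Lambda^0$ and $\Div\LL$. Their vanishing gives linear algebraic constraints on the symbols $\partial\LL/\partial u_{\sigma}$ and $\partial\Lambda^0/\partial u_\sigma$ at top order $|\sigma|=n$.

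The key lemma is that two families of characteristic variations span every admissible top-order variation. I would take one family along the material (transport) direction and one along the vorticity/curl direction. If the responses of $(\LL,\Lambda^0)$ to both families vanish, then the top-order dependence is absent, and $n$ drops by one. The next step is therefore to derive equations for these two responses from the order-$n$ and order-$(n+1)$ parts of \eqref{eq:s3-adjoint-equations}. Comparing coefficients of the free top-order variables, I expect to show that the responses are constant multiples of a fixed structure for $n\ge2$. The only surviving possibility should be a term proportional to $\ww$ in $\LL$. I would then test this vorticity term against the full equations, including the lower-order coupling through $\matA^{\Tr}\LL$, to show that it cannot raise the order beyond one. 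By induction on $n$, this reduces everything to joint order at most one. This coefficient comparison, the spanning lemma, and especially excluding high-order vorticity-like responses are where I expect the main obstacle. The naive symbol analysis is degenerate because of the hyperbolic transport character of Euler, and there is no viscous term to break it. My first attempt here would use the Cauchy-type identity $Y\ww=\matA\ww$: it shows that $\LL=\ww$ solves the first adjoint equation with $\Lambda^0$ of order zero, and I would try to show that any top-order response must factor through it.

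\textbf{Low-order equations.} Once order at most one is reached, I would split \eqref{eq:s3-adjoint-equations} by derivatives of $\matA$, $\uu_t$ and $\nabla p$. This should give $\LL=c\ww+\boldsymbol L(t,\xx,\uu,p)$. The equation $\Div\LL=0$ on Euler, together with the first equation, then forces $\boldsymbol L$ to be linear in $\uu$ with coefficient $\nu\mathsfbi I$, and forces $\partial_{\xx}\boldsymbol L$ to be skew with zero divergence. This leaves $\boldsymbol\Omega\times\xx+\boldsymbol b(t)$. Integrating the $\nabla\Lambda^0$ equation yields $\Lambda^0$ as in \eqref{eq:s1-5}, up to the arbitrary $c_0(t)$.

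\textbf{Passing to currents.} The identities \eqref{eq:s9-current-identities} verify that each pair \eqref{eq:s1-5} is a genuine multiplier with current \eqref{eq:s9-explicit-currents}. For the final step I would use the standard correspondence for normal, totally nondegenerate systems: after Kovalevskaya-type coordinates, current classes correspond bijectively to multiplier restrictions on Euler. Here a normal form must accommodate the constraint $G=0$, for example by treating $\Delta p$ as the eliminated derivative. Every current therefore has a multiplier whose restriction is an adjoint symmetry, hence one of \eqref{eq:s1-5}. Subtracting \eqref{eq:s9-explicit-currents} leaves a current with multiplier vanishing on Euler, which is trivial. Finally, \eqref{eq:s9-helicity-flux-equivalence} and a direct comparison identify the class with the stated combination of \eqref{eq:intro-mass}--\eqref{eq:intro-general-momentum}.
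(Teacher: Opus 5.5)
Your outline follows the paper's three stages, but the middle stage is the only hard one, and there you state expectations rather than arguments. The one structural claim you do make about it fails.

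\textbf{The spanning families.} As you describe them, your two families (``material'' and ``vorticity/curl'') are both real, and both lie on the transport characteristic $\alpha=\tau+\uu\cdot\xxi=0$. The amplitudes that change vorticity are exactly $(\boldsymbol v,0)$ with $\boldsymbol v\perp\xxi$. Since $\det\sigma_E=-\alpha^2\xxi^{\Tr}\xxi$, the only other real characteristic covectors are pure time, and their kernel is the pressure line. Real characteristic increments therefore never change spatial pressure derivatives. Already at order one they miss the admissible increment $\delta\nabla p=\boldsymbol e_1$, $\delta\uu_t=-\boldsymbol e_1$.

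So the spanning lemma as you formulate it is false. The second family must be the complex null cone $\xxi^{\Tr}\xxi=0$ with amplitude $(-\xxi/\alpha,1)^{\Tr}$ (compare the Laplace example in \S\ref{sec:2}). You then need a separate proof that an adjoint symmetry has zero response to this family. In the paper this is Theorem~\ref{thm:harmonic}: the response reduces to $H_N=\alpha Z_N-\boldsymbol W_N\xxi$; a corrected test gives $\xxi\cdot D_{\xx}H_N\equiv0$ modulo $\xxi^{\Tr}\xxi$; descent makes $H_N$ field-independent; and transverse kernel preservation gives it the root $\tau=-\uu\cdot\xxi$ for every nearby $\uu$, so it vanishes. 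This is not automatic. For ordinary symmetries the same response is nonzero (translations), so it must come from the adjoint structure, and nothing in your plan supplies it.

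\textbf{The transverse response.} Your order-$(n+1)$ comparison yields only kernel preservation, $\sigma_E\sigma_{Q,N}=T_N\sigma_E$ with unknown $T_N$. This is satisfied by the symbol $c\chi^{N-1}\mathsfbi C_{\xxi}$ of $c\,\cD^{N-1}\ww$, and by any $\matS$ that preserves the transverse plane. It therefore cannot reduce the top term to ``a term proportional to $\ww$''; the correct order-$N$ candidate is $c\,\cD^{N-1}\ww$.

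Pinning the response down requires the next degree, where the unknown residual operator must be suppressed. The paper uses corrected short-wave test fields so that $TL$ drops a degree, which gives the transport identity \eqref{eq:s4-8}. It then varies free harmonic pressure and harmonic-gradient velocity data to obtain a covariance equation. Bracket closure of that equation (Lemma~\ref{lem:covariance}) forces $\matS_N=c\chi^{N-1}\mathsfbi C_{\xxi}$.

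\textbf{Excluding the candidate.} This does not follow from $Y\ww=\matA\ww$, which only shows that $\ww$ itself is an adjoint symmetry; it needs lower-order information. At $N=2$, the divergence equation $c\,\tr(\mathsfbi W^2)+\Div\boldsymbol R=0$ is quadratic in the free trace-free vorticity gradient $\mathsfbi W$, while $\Div\boldsymbol R$ is affine in it, so $c=0$. For $N=d+1\ge3$, after subtracting a corrected candidate, the symmetric part $-c(d+1)\chi^d\matP(\mathsfbi W+\mathsfbi W^{\Tr})\matP$ of the forced transverse source must vanish.

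\textbf{Current correspondence.} Eliminating $\Delta p$, as you sketch it, replaces $G=0$ by a differential consequence. It does not give a normal form of the system $(\boldsymbol F,G)$ whose multipliers you are classifying. The paper instead passes to the tilted coordinate $s=x_3+at$, in which all four $s$-derivatives are solvable because $\det\cN=-b_*^2\ne0$. It then proves the correspondence directly in Proposition~\ref{prop:current-correspondence}.
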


\begin{remark}\label{rem:vorticity-equivalence}
The equivalence above also explains the arbitrary-function vorticity
currents of Cheviakov and Oberlack~\cite[\S3 and appendix A]{CheviakovOberlack2014}.
For a differential function $f$ of the primitive variables, put $\boldsymbol H=f\ww$.
The product rule gives
\begin{equation}\label{eq:vorticity-equivalence}
 \begin{aligned}
 &(\ww\cdot\nabla f,
       (\ww\times\uu)\times\nabla f-(D_tf)\ww)\\
 &\quad=(\Div\boldsymbol H,-D_t\boldsymbol H-\curl[f(\ww\times\uu)])
       +(0,f(D_t\ww+\curl(\ww\times\uu))).
 \end{aligned}
\end{equation}
The first current on the right has identically zero divergence; the
second vanishes on Euler. Choosing $f=x_j$ gives the $j$th component of
the vorticity current \eqref{eq:intro-vorticity}.
This also covers their \S3.6 hierarchies: Type I repeatedly
differentiates $f$ along $\ww$, while Type II applies curls or time
derivatives to the pair $(\ww,\ww\times\uu)$. These operations
preserve the divergence identity and the vorticity equation, so
\eqref{eq:vorticity-equivalence} applies to each transformed pair.
\end{remark}

The same method classifies the local symmetries of Euler flow.
A \emph{local symmetry} is a pair $q=(\boldsymbol r,h)$ of differential
functions satisfying the linearised Euler equations on Euler,
\begin{equation}\label{eq:sym-determining}
 Y\boldsymbol r+\matA\boldsymbol r+\nabla h=0,\qquad
 \Div\boldsymbol r=0\qquad\text{on Euler}.
\end{equation}
It is the evolutionary form of the infinitesimal transformation
$\uu\mapsto\uu+\epsilon\boldsymbol r$, $p\mapsto p+\epsilon h$, which
maps solutions to solutions to first order in $\epsilon$; pairs
agreeing on Euler represent the same symmetry. For a point generator
\eqref{eq:intro-point-generator}, this pair is the characteristic
\eqref{eq:sym-point-characteristic}.

\begin{theorem}[Local symmetries]\label{thm:symmetries}
Every smooth local symmetry of finite joint order agrees on Euler with
the characteristic \eqref{eq:sym-point-characteristic} of a point
symmetry \eqref{eq:sym-family}. In particular, every local symmetry
has a representative of joint order at most one.
\end{theorem}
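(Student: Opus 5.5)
The plan is to run the order-reduction scheme outlined in the introduction for the linearised equations \eqref{eq:sym-determining}, mirroring the adjoint-symmetry argument used for Theorem~\ref{thm:main}.

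\emph{Free coordinates and highest-order terms.} First, parametrise Euler by free derivative coordinates. I take all pure spatial derivatives of $\uu$ subject to the differentiated constraint $\Div\uu=0$. The pressure derivatives are then fixed by the Poisson equation $\Delta p=-\tr(\matA^2)$, up to the pressure's harmonic part. Time derivatives are eliminated using $\boldsymbol F=0$. Let $n=\ord(\boldsymbol r,h)$ on Euler. Apply $Y$ to $\boldsymbol r$: its top-order part is a derivative of order $n+1$ of the highest-order part of $\boldsymbol r$. Reduced through the free coordinates, this part comprises two contributions. One is the transport term from $u_iD_i$. The other is the nonlocal-looking pressure response, which enters through $\nabla p$ inside the time derivatives of $\uu$. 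The result of Section~\ref{sec:2} says these two characteristic families of variations span every admissible highest-order variation. So it suffices to show that the two \emph{responses} of $(\boldsymbol r,h)$ to them vanish, for then the order drops. The coefficient equations for these responses are obtained exactly as in Sections~\ref{sec:4}--\ref{sec:5}. For the symmetry problem, $\matA\boldsymbol r$ replaces $-\matA^{\Tr}\LL$, and the pair $(\boldsymbol r,h)$ plays the role of $(\LL,\Lambda^0)$. The same comparisons then show that each response has polynomial dependence on the symbol variable $\xxi$ along its characteristic branch.

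\emph{Excluding order above one.} This is where the symmetry case differs from the adjoint case, and I expect it to be the main obstacle. For adjoint symmetries a surviving vorticity-type term ($c\ww$) persists and must be tested against the full equations. For symmetries I would instead argue as follows. The transport branch is $\xxi$ with $\xxi\cdot\uu$ matched to the time frequency. The incompressibility/pressure branch is the constraint $\xxi\cdot\boldsymbol r$-type relation. Both responses are polynomial continuations of a single top symbol of $\boldsymbol r$, so they must coincide on the intersection of the two branches. Comparing degrees in $\xxi$ there, an order $n\ge2$ response would have to be both divisible by the branch equation and of degree too low to carry that factor. Hence the response vanishes, and induction reduces to $n\le1$. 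The step I would check most carefully is whether the divergence condition $\Div\boldsymbol r=0$ forbids a top-order term proportional to $\curl$ of something. A vorticity analogue of this kind is exactly what survives in the adjoint problem. I would try first to show that such a term yields, in $Y\boldsymbol r+\matA\boldsymbol r$, a contribution $\matA\ww$-like that is not a gradient modulo Euler. This would contradict the first equation after taking a curl to eliminate $h$.

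\emph{First-order classification.} With $n\le1$, write $\boldsymbol r,h$ as functions of $t,\xx,\uu,p,\matA,\nabla p$, after eliminating $\uu_t,p_t$ on Euler. Substitute these into \eqref{eq:sym-determining} and split with respect to the free second-order jets. This forces $\boldsymbol r$ to be affine in $\matA$ with coefficient $-X^0\uu_t-(\boldsymbol X\cdot\nabla)\uu$ after re-substituting $\uu_t$, and $h$ to be correspondingly affine in $\nabla p$. Lower-order splitting then reduces to the classical point-symmetry determining equations. Their general solution is Olver's family \eqref{eq:sym-family}, and the characteristic is \eqref{eq:sym-point-characteristic}, which gives the theorem.
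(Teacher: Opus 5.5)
\paragraph{Where the proposal breaks down.} The gap is the exclusion of orders $N\ge2$. It starts with your claim that the response equations come out ``exactly as in Sections~\ref{sec:4}--\ref{sec:5}''. For a symmetry the harmonic response need not vanish: the translation $-(\boldsymbol c\cdot D_{\xx})(\uu,p)$ has harmonic response $-(\boldsymbol c\cdot\xxi)\boldsymbol h_{\boldsymbol\zeta}$. With $+\matA$ replacing $-\matA^{\Tr}$, the two order-zero contributions to the harmonic identity add instead of cancelling. This gives $\alpha\,\xxi\cdot D_{\xx}H_N=\gamma H_N$ modulo $\xxi^{\Tr}\xxi$, with $\gamma=\xxi^{\Tr}\matA\xxi$, and the multiplication term blocks the directional descent of Theorem~\ref{thm:harmonic}. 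What must be proved instead is the following. Divide by $\alpha$, using the transverse root $H_N(-\uu\cdot\xxi,\xxi)=-w_N\xxi^{\Tr}\xxi$. The quotient $K_N$ then satisfies \eqref{eq:finite-harmonic-transport}, and is therefore a field-independent polynomial $\Phi_N(t,\xx,\tau,\xxi)$. Likewise, the transverse transport now contains $+\matA\matS-\matS\matA$, which is not the covariance equation of Lemma~\ref{lem:covariance}. Only after left multiplication by $\mathsfbi C_{\xxi}$ does it become that equation for $\mathsfbi C_{\xxi}\matS$. The lemma then yields $\matS=c_N(t)\chi^N\matP$. This works only for $N\ge2$: at $N=1$ the coefficient-variation term $-(\xxi\cdot\boldsymbol r)\boldsymbol v_0$ is transverse and survives projection. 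That, not a degree count, is what separates first order from higher orders. Incidentally, this step already disposes of the curl-type top term you worried about. Its response is a multiple of $\mathsfbi C_{\xxi}$, which is impossible because $\mathsfbi C_{\xxi}\matS$ must be skew; the case that survives is of identity type.

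\paragraph{What actually closes the argument.} The branches meet where $\alpha=\xxi^{\Tr}\xxi=0$. This set is reachable only through complex null $\xxi$, and only via the polynomial block $\mathsfbi V(\xxi)=\sigma^u_{\boldsymbol r,N}(-\uu\cdot\xxi,\xxi)$, since $\matP$ has denominator $\xxi^{\Tr}\xxi$. With both facts above in hand, the comparison there gives $c_N(\ww\cdot\xxi)^N=\Phi_N(t,\xx,-\uu\cdot\xxi,\xxi)$. The conclusion comes from field dependence. Vorticity is a free datum absent from the right-hand side, so $c_N=0$, and then $\Phi_N$ vanishes for an open set of values of $\tau$. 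Your divisibility-and-degree argument cannot replace this. At the intersection, kernel preservation on the two branches only forces $\mathsfbi V(\xxi)\xxi$ to be a multiple of $\xxi$, which imposes no degree restriction. Moreover, a homogeneous response of degree $N\ge2$ can perfectly well contain a factor $\alpha$ or $\xxi^{\Tr}\xxi$; a degree count would, if anything, exclude low orders rather than high ones. Your instinct about a quadratic $\matA\ww$ contribution is right at first order, where the remainder $k\ww$ is removed exactly by the term $2k\matA\ww$. Finally, in your first-order step $p_t$ cannot be eliminated on Euler. It is a free coordinate, and it is needed, for instance, in the time-translation characteristic $h=-p_t$.
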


The vector equations \eqref{eq:s3-adjoint-equations} and
\eqref{eq:sym-determining} differ only in multiplication by
$-\matA^{\Tr}$ and $+\matA$, so their highest-derivative comparisons
can be treated together. The proof first determines all possible
highest-order responses to compatible variations of Euler derivatives.
For adjoint symmetries, subtracting the resulting vorticity term and
examining the remainder excludes orders above one.
For ordinary symmetries, agreement of the two response families already
excludes those orders. The remaining equations give the explicit
families \eqref{eq:s1-5} and \eqref{eq:sym-family}.

\section{Free derivatives and characteristic responses}\label{sec:2}

The proof reduces derivative order one step at a time. At order $m$,
we fix all lower-order data and test the dependence on the remaining
order-$m$ derivatives. This section describes all admissible variations
of those derivatives and proves that characteristic variations suffice
to detect that dependence. The argument applies to any smooth local
function on Euler jets.

\subsection{Free jet coordinates}

A \emph{jet of order $m$}, denoted by $j^m\boldsymbol{w}$, lists the field
values $\boldsymbol{w}=(u_1,u_2,u_3,p)^{\Tr}$ and all their time and spatial
derivatives of total order at most $m$ at one point, with mixed partials
identified.  A multi-index $I=(I_0,I_1,I_2,I_3)$ has nonnegative integer
entries; $D_I$ denotes the corresponding iterated total derivative
in $(t,x_1,x_2,x_3)$.
Write $|I|=\sum_\mu I_\mu$ and $\boldsymbol{w}_I=D_I\boldsymbol{w}$.
Compatibility means satisfying all available Euler equations:
\begin{equation}\label{eq:s1-3}
 j^m\boldsymbol{w}=(\boldsymbol{w}_I)_{|I|\le m},\qquad
 \cE_m=\{j^m\boldsymbol{w}:D_IE=0\text{ for }|I|\le m-1\}.
\end{equation}
At order zero, Euler imposes no algebraic restrictions on velocity and
pressure values.  The undifferentiated Euler equations first enter at
jet order one.  A \emph{prolongation} adds higher
derivatives satisfying \eqref{eq:s1-3}; $\cE_\infty$ denotes compatible
lists at all orders.  Fixing derivatives through order $m-1$ leaves
the \emph{highest-order fibre} of admissible order-$m$ entries;
their differences are \emph{vertical increments}.

At the selected jet, choose a constant $a$ different from $-u_3$.
Then $b_*=a+u_3$ remains nonzero in a sufficiently small neighbourhood.
Introduce the independent coordinates
\begin{equation}\label{eq:s2-tilted-coordinates}
 s=x_3+at,\qquad y=(y_0,y_1,y_2)=(t,x_1,x_2),\qquad
 D_t=D_{y_0}+aD_s,\quad D_{x_3}=D_s.
\end{equation}
The $s$ direction at fixed $y$ is transverse to the hypersurfaces $s=\mathrm{constant}$;
the coordinates $y$ run along them.  Derivatives in $y$ at fixed $s$ are
called \emph{tangential derivatives}.  Define
\begin{equation}\label{eq:s2-tangential-transport}
 \mathcal{Y}=D_{y_0}+u_1D_{y_1}+u_2D_{y_2},\qquad
 Y=\mathcal{Y}+b_*D_s.
\end{equation}
In coordinates \eqref{eq:s2-tilted-coordinates}, Euler can be solved
for all four transverse derivatives:
\begin{equation}\label{eq:s2-explicit-solved-form}
 \begin{aligned}
 D_su_i&=-\frac{\mathcal{Y}u_i+D_{y_i}p}{b_*},\qquad i=1,2,\\
 D_su_3&=-D_{y_1}u_1-D_{y_2}u_2,\\
 D_sp&=-\mathcal{Y}u_3+b_*(D_{y_1}u_1+D_{y_2}u_2).
 \end{aligned}
\end{equation}
Denote the right-hand sides of \eqref{eq:s2-explicit-solved-form} by
$\cF$.  The corresponding matrix form is
\begin{equation}\label{eq:s3-normal-form}
 \begin{gathered}
 E=\cN \boldsymbol{n},\qquad \boldsymbol{n}=\boldsymbol{w}_s-\cF(\boldsymbol{w},D_y\boldsymbol{w}),\qquad
 \cN=\begin{pmatrix}b_*\mathsfbi{I}&\boldsymbol{e}_3\\\boldsymbol{e}_3^{\Tr}&0\end{pmatrix},
 \qquad\det\cN=-b_*^2,\\
 \cF=-\cN^{-1}\bigl(E|_{\boldsymbol{w}_s=0}\bigr).
 \end{gathered}
\end{equation}
Here $\boldsymbol{e}_3=(0,0,1)^{\Tr}$, and the restriction sets transverse derivatives to
zero after the coordinate change.  Thus $\cF$ is linear in first
tangential derivatives, with smooth coefficients depending on $\boldsymbol{w}$.
The \emph{residual} $\boldsymbol{n}$ measures failure of the solved equations.

\begin{lemma}[Free coordinates]\label{lem:free}\label{prop:fiber}
At every order $m$, independent coordinates for unrestricted jets are
\begin{equation}\label{eq:s2-residual-coordinates}
 \{D_y^J\boldsymbol{w}:|J|\le m\},\qquad
 \{\boldsymbol{n}_I=D_I\boldsymbol{n}:|I|\le m-1\}.
\end{equation}
Here $I$ differentiates in $(s,y)$. Both families can be prescribed
independently on unrestricted jets: the residual entries replace the
derivatives containing $s$. On compatible Euler jets all residual entries
are zero, so only the tangential family remains free.
Every finite compatible jet prolongs.  For $m\ge1$, its
highest-order fibre is affine---a translate of a linear subspace---with
$4\binom{m+2}{2}$ free coordinates.
\end{lemma}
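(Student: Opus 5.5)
The plan is to treat the change of jet coordinates as a triangular (unipotent-up-to-invertible) substitution, ordered by the number of $s$-differentiations. In the tilted coordinates \eqref{eq:s2-tilted-coordinates}, an unrestricted jet of order $m$ is described by the entries $D_s^kD_y^J\boldsymbol w$ with $k+|J|\le m$. I would show that the entries with $k\ge1$ can be exchanged for the residual entries $\boldsymbol n_I$ with $|I|\le m-1$, where $I$ ranges over multi-indices in $(s,y)$ with $s$-count $k-1$.

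The key step is an induction on $k$, using
\[
D_s^{k}D_y^J\boldsymbol w=D_s^{k-1}D_y^J\boldsymbol n+D_s^{k-1}D_y^J\cF(\boldsymbol w,D_y\boldsymbol w).
\]
By the chain rule, the last term is a smooth function of entries with at most $k-1$ derivatives in $s$ and total order at most $k+|J|$. Each such entry is already expressed, by the induction hypothesis, through tangential entries and residual entries of order at most $m-1$. The map from the original coordinates to \eqref{eq:s2-residual-coordinates} is therefore a smooth triangular change of variables: each new coordinate equals an old one plus a function of earlier coordinates. Such a map is a diffeomorphism with a triangular inverse, so the two families are independent and can be prescribed arbitrarily. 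The invertibility of $\cN$, with $\det\cN=-b_*^2\neq0$, is what allows $E$ and all $D_IE$ to be traded for $\boldsymbol n$ and $D_I\boldsymbol n$. Indeed, $D_IE=\cN D_I\boldsymbol n+{}$(terms involving $D_{I'}\boldsymbol n$ with $I'<I$), and this is again triangular, so $\cE_m$ is exactly the locus $\{\boldsymbol n_I=0,\ |I|\le m-1\}$.

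On compatible jets the residual coordinates then vanish, and the tangential family $\{D_y^J\boldsymbol w:|J|\le m\}$ parametrises $\cE_m$ freely. Prolongation follows immediately: given a compatible $m$-jet, choose the new tangential entries of order $m+1$ arbitrarily, set the new residuals to zero, and reconstruct the transverse entries through the triangular formulas. The same argument handles $\cE_\infty$ by passing to the limit along these compatible choices.

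For the fibre, fix all entries through order $m-1$. The free data are the tangential entries $D_y^J\boldsymbol w$ with $|J|=m$, giving $4\binom{m+2}{2}$ of them, since there are $\binom{m+2}{2}$ monomials of degree $m$ in three variables. The transverse order-$m$ entries are then determined. Because $\cF$ is linear in first tangential derivatives, $D_s^{k-1}D_y^J\cF$ is affine in the top-order entries, with coefficients depending on lower-order data. So the fibre is an affine image of $\R^{4\binom{m+2}{2}}$, a translate of a linear subspace. The main point needing care is exactly this affineness for $m\ge1$: one must check that top-order entries enter only linearly, since any product involving them would raise the order. Finally, the original $(t,\xx)$ jet coordinates correspond to these through an invertible linear substitution, so all statements transfer back.
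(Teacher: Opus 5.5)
Your proposal is correct and is essentially the paper's proof. Both use the recursion $D_s^{k}D_y^J\boldsymbol w=D_s^{k-1}D_y^J\cF+D_s^{k-1}D_y^J\boldsymbol n$ as a triangular change of coordinates, invertibility of $\cN$ to identify the compatible locus with vanishing residuals, zero residuals for prolongation, and linearity of $\cF$ in first tangential derivatives for the affine fibre with $4\binom{m+2}{2}$ free entries. The only difference is cosmetic: you induct on the number of $s$-derivatives within the order bound $m$, while the paper inducts on total order and then on transverse count, and either ordering makes the substitution triangular.
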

\begin{proof}
Differentiating $E=\cN \boldsymbol{n}$ gives leading term $\cN D_I\boldsymbol{n}$ and terms
containing lower residual derivatives.  Invertibility of $\cN$ proves
the compatibility assertion by induction.  Each transverse derivative
is reconstructed from
\begin{equation}\label{eq:s2-principal-recursion}
 D_s^{k+1}D_y^J\boldsymbol{w}
   =D_s^kD_y^J\cF+D_s^kD_y^J\boldsymbol{n}.
\end{equation}
The $\cF$ term has joint order at most $k+|J|+1$ and at most $k$
transverse differentiations.  Inducting in total order and then the number
of transverse differentiations proves independence of \eqref{eq:s2-residual-coordinates}:
each equation solves for a distinct derivative.  Setting residuals
to zero extends any tangential list.  Linearity of $\cF$ in first
derivatives makes the reconstruction affine in the highest entries;
there are $\binom{m+2}{2}$ tangential indices of length $m$ per field.
\end{proof}

Expressing $f$ in the coordinates \eqref{eq:s2-residual-coordinates}
and setting all residual derivatives to zero defines its
\emph{tangential representative}:
\begin{equation}\label{eq:s2-tangential-representative}
 \cR(f)=f\big|_{\boldsymbol{n}_I=0\ \text{for all }I}.
\end{equation}
The recursion \eqref{eq:s2-principal-recursion} preserves joint order;
freedom of tangential jets implies that $f=g$ on Euler exactly when
$\cR(f)=\cR(g)$.  An \emph{ambient extension} is a smooth function of
unrestricted jets with the prescribed values on Euler.  Its least
order equals that of $\cR(f)$: reduction preserves every order bound,
and $\cR(f)$ itself is an extension.

\subsection{Highest-order increments and linearisation}

Fix a compatible jet through order $m-1$, with $m\ge1$, and hold the
independent variables fixed. Let $\varepsilon_\mu$ be the multi-index
with one in position $\mu$ and zero elsewhere. In the solved
coordinates $(s,y_0,y_1,y_2)$, the real space of admissible order-$m$
increments is
\begin{equation}\label{eq:s2-prolonged-symbol}
 \begin{gathered}
 g_m=\left\{\boldsymbol{T}=(\boldsymbol{T}_J)_{|J|=m}:
 \sum_{\mu=0}^3
 \frac{\partial\boldsymbol n}{\partial\boldsymbol{w}_{\varepsilon_\mu}}
 \boldsymbol{T}_{I+\varepsilon_\mu}=0
 \quad\text{for all }|I|=m-1\right\},\\
 \delta\boldsymbol{w}_J=\boldsymbol{T}_J\quad(|J|=m),\qquad
 \delta\boldsymbol{w}_J=0\quad(|J|<m).
 \end{gathered}
\end{equation}
The lower entries in this display are held fixed, including field values.
The derivative of $\boldsymbol n$ in \eqref{eq:s2-prolonged-symbol} is a four-by-four
Jacobian matrix. These are the homogeneous equations obtained by varying
the highest derivatives in $D_I\boldsymbol n=0$; all remaining terms have zero
variation. The space $g_m$ is called the \emph{order-$m$ prolonged symbol}
of Euler. It is the translation space of the affine highest-order fibre.
Allowing complex entries in \eqref{eq:s2-prolonged-symbol} defines
its \emph{complexification}, denoted by $g_m^{\C}$.

By Lemma~\ref{lem:free}, an increment in $g_m$ is determined uniquely by
its free tangential entries
\begin{equation}\label{eq:s2-symbol-tangential-entries}
 \boldsymbol{T}_J^y=\delta(D_y^J\boldsymbol{w}),\qquad
 J=(J_0,J_1,J_2),\quad |J|=m.
\end{equation}
Here $J$ indexes the three $y$ coordinates, whereas the indices in
\eqref{eq:s2-prolonged-symbol} have four components. At a chosen point
of this fibre, a function $f$ of order at most $m$ has the
\emph{highest-order fibre differential}
\begin{equation}\label{eq:s2-highest-fibre-differential}
 \mathrm d_m f(\boldsymbol{T})
   =\sum_{|J|=m}
    \frac{\partial\cR(f)}{\partial(D_y^J\boldsymbol{w})}
    \boldsymbol{T}_J^y.
\end{equation}
Its coefficients are evaluated at that full jet and may depend
smoothly on all its entries, including those of order $m$. The
differential is linear in the increment even when $f$ is nonlinear.
It tests only the highest-order dependence; $f$ may still depend on
every lower-order entry.

The \emph{Fr\'echet linearisation} gives the first variation of a
scalar or vector function $f$ in a test field $\boldsymbol{z}=(\boldsymbol{v},\pi)^{\Tr}$:
\begin{equation}\label{eq:s2-linearization}
 \ell_f(\boldsymbol{z})=\sum_{a,I}\frac{\partial f}{\partial w_I^a}D_Iz^a,
 \qquad \widehat L=\ell_{\boldsymbol n}=\mathsfbi I D_s-\ell_{\cF}.
\end{equation}
The \emph{differential-operator order}, denoted by $\ord_{\rm op}$, is the
largest total derivative order applied to the input. The operator
$\widehat L$ therefore describes first variations of the solved equations.
Derivatives appearing
in the coefficient functions do not contribute to this order.

\begin{lemma}[Finite residual estimate]\label{lem:finite-residual}
If $f$ has ambient joint order at most $r\ge1$ and vanishes on Euler,
then, on $\cE_\infty$,
\begin{equation}\label{eq:s2-finite-residual}
 \begin{gathered}
 \ell_f=\widehat T\circ\widehat L,\qquad
 \widehat T(\boldsymbol{h})=\sum_{|I|\le r-1}A_ID_I\boldsymbol{h},\\
 \ord_{\rm op}\widehat T\le r-1.
 \end{gathered}
\end{equation}
Here $\boldsymbol{h}$ is an arbitrary four-component input and the
coefficient matrices $A_I$ depend smoothly on finitely many field
derivatives.  For scalar $f$, these matrices are rows.
\end{lemma}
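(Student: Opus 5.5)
We prove the estimate by writing $f$ in the free coordinates of Lemma~\ref{lem:free} and differentiating. Since $f$ has ambient joint order at most $r$, it is a smooth function of the tangential family $\{D_y^J\boldsymbol w:|J|\le r\}$ and the residual family $\{\boldsymbol n_I:|I|\le r-1\}$. The recursion \eqref{eq:s2-principal-recursion} preserves joint order, so this rewriting introduces no higher entries. Because $f$ vanishes on Euler, $\cR(f)=0$. Hence $f(\text{tangential},\boldsymbol n_\bullet)$ vanishes identically when all residual entries are zero. By Hadamard's lemma, applied locally in the residual variables with the tangential ones as smooth parameters,
\begin{equation*}
 f=\sum_{|I|\le r-1} B_I\cdot \boldsymbol n_I ,
\end{equation*}
where the $B_I$ are smooth rows (for scalar $f$; matrices for vector $f$). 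They depend on the same coordinates and so have joint order at most $r$.

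Next we linearise. The Fr\'echet derivative obeys the product rule and commutes with total derivatives: $\ell_{D_I\boldsymbol n}=D_I\circ\ell_{\boldsymbol n}=D_I\circ\widehat L$. This holds because $\ell$ is defined through the prolonged vertical vector field, which commutes with $D_\mu$; in the tilted coordinates \eqref{eq:s2-tilted-coordinates} the $D_I$ are constant-coefficient combinations of $D_t,D_{x_i}$, so the commutation survives. Therefore
\begin{equation*}
 \ell_f(\boldsymbol z)=\sum_{|I|\le r-1}\bigl(\ell_{B_I}(\boldsymbol z)\bigr)\cdot\boldsymbol n_I+\sum_{|I|\le r-1}B_I\,D_I\widehat L(\boldsymbol z).
\end{equation*}
On $\cE_\infty$ every $\boldsymbol n_I$ vanishes, so the first sum drops out. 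This leaves $\ell_f=\widehat T\circ\widehat L$ with $A_I=B_I|_{\cE_\infty}$ and $\ord_{\rm op}\widehat T\le r-1$, as claimed. The coefficients depend on finitely many field derivatives, namely those through order $r$.

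The step most likely to need care is the Hadamard factorisation. It requires the residual entries to be genuinely independent coordinates that can be set to zero while the tangential ones are held fixed, and the domain must be star-shaped in those variables. Lemma~\ref{lem:free} supplies the independence. Star-shapedness holds locally: the statement is local near compatible jets, and we may restrict to a product neighbourhood that is convex in the residual directions. A secondary point is the identity $\ell_{D_I\boldsymbol n}=D_I\ell_{\boldsymbol n}$ when the $D_I$ are taken in the $(s,y)$ frame. This follows from linearity of the coordinate change \eqref{eq:s2-tilted-coordinates}. The constant $a$ is fixed, and $b_*$ enters only through $\cF$, so $\widehat L$ already accounts for it.
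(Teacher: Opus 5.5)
Your proposal is correct and follows essentially the same route as the paper. In the free coordinates of Lemma~\ref{lem:free}, you factor $f=\sum_{|I|<r}f_I\boldsymbol n_I$ by integrating along segments on a chart convex in the residuals (Hadamard's lemma). You then linearise with the product rule and the commutation of linearisation with total derivatives, so the terms multiplied by residuals drop out on $\cE_\infty$ and each residual contributes $D_I\circ\widehat L$. Your remark on the linear change to the $(s,y)$ frame matches the paper's closing observation that the order bound is preserved by that coordinate change.
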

\begin{proof}
Prolongability makes $f$ zero on the $r$-jets with zero residuals.
Shrink to a chart convex in the residuals and integrate its partial
derivatives along segments from zero.  This gives smooth $f_I$ with
\begin{equation}\label{eq:s2-residual-factorization}
 f=\sum_{|I|<r}f_I\boldsymbol{n}_I,\qquad
 \ell_f\big|_{\cE_\infty}
   =\sum_{|I|<r}f_I\big|_{\cE_\infty}
                  D_I\circ\widehat L.
\end{equation}
Here variations of $f_I$ multiply zero residuals. Linearisation
commutes with total differentiation, so each residual contributes
$D_I\circ\widehat L$. Thus \eqref{eq:s2-residual-factorization} has the bound
\eqref{eq:s2-finite-residual}, preserved by the linear coordinate change.
\end{proof}

\subsection{Characteristic spanning and order reduction}

A \emph{covector} is a linear function of a displacement. Write its
components in the physical and solved coordinates as
\begin{equation}\label{eq:s2-covector-split}
 \boldsymbol\zeta=(\tau,\xxi),\qquad
 \boldsymbol\eta=(\tau-a\xi_3,\xi_1,\xi_2),\qquad \lambda=\xi_3.
\end{equation}
Thus the same covector is $\tau\,\mathrm dt+\xi_i\,\mathrm dx_i
=\lambda\,\mathrm ds+\eta_j\,\mathrm dy_j$. The \emph{principal symbol}
replaces derivatives of the input by covector components. In physical
coordinates the degree-$m$ convention is
\begin{equation}\label{eq:s2-general-symbol-definition}
 \sigma_m\Bigl(\sum_{|I|\le m}K_ID_I\Bigr)(\boldsymbol\zeta)
 =\sum_{|I|=m}K_I\tau^{I_0}\xi_1^{I_1}\xi_2^{I_2}\xi_3^{I_3}.
\end{equation}
There is no factor of $\mathrm i$. In the solved chart the same definition
uses $(\lambda,\boldsymbol\eta)$; a linear change of independent variables
does not change the symbol evaluated on the same covector.

The solved form gives its characteristic matrix directly. Define the
four-by-four matrix $\mathsfbi B$ by the tangential derivative coefficients
of $\cF$:
\begin{equation}\label{eq:s2-solved-symbol}
 \mathsfbi B(\boldsymbol\eta)
 =\sum_{j=0}^2\frac{\partial\cF}{\partial(D_{y_j}\boldsymbol w)}\eta_j,
 \qquad
 \sigma_1(\widehat L)(\boldsymbol\zeta)
 =\lambda\mathsfbi I-\mathsfbi B(\boldsymbol\eta).
\end{equation}
The matrix $\mathsfbi B$ encodes the tangential derivative terms, with
the background field values fixed. A nonzero real or complex covector is
\emph{characteristic} if this matrix symbol is singular. A
\emph{characteristic amplitude} is a nonzero column
$\boldsymbol z=(\boldsymbol v,\pi)^{\Tr}$ in its kernel: its entries
give the velocity and pressure components of a perturbation. In particular,
\begin{equation}\label{eq:s2-characteristic-eigenvalue}
 (\lambda\mathsfbi I-\mathsfbi B(\boldsymbol\eta))\boldsymbol z=0
 \quad\Longleftrightarrow\quad
 \mathsfbi B(\boldsymbol\eta)\boldsymbol z=\lambda\boldsymbol z.
\end{equation}
For fixed tangential covector, the characteristic transverse components
are therefore eigenvalues of $\mathsfbi B$, and amplitudes are their
eigenvectors. This interpretation is what will make the spanning proof
finite-dimensional.

Substitution of the explicit right-hand sides
\eqref{eq:s2-explicit-solved-form} makes the kernel equations equivalent to
\begin{equation}\label{eq:s2-kernel-equations}
 \alpha\boldsymbol v+\xxi\pi=0,\qquad
 \xxi^{\Tr}\boldsymbol v=0,\qquad
 \alpha=\tau+\uu\cdot\xxi,\qquad q_{\xxi}=\xxi^{\Tr}\xxi.
\end{equation}
For nonzero complex covectors, these equations have a nonzero amplitude
exactly when $\alpha=0$ or $q_{\xxi}=0$. We use the following two
families as tests. If $\alpha=0$ and $\xxi$ is real and nonzero,
the equations give $\pi=0$ and $\boldsymbol v\perp\xxi$. If
$q_{\xxi}=0$ and $\alpha\ne0$, they give
$\boldsymbol v=-\pi\xxi/\alpha$. A nonzero amplitude then has
$\pi\ne0$; normalising its pressure component to one gives
\begin{equation}\label{eq:s2-characteristic-families}
\begin{array}{lll}
 \text{transverse:}&\alpha=0,\quad\xxi\in\R^3\setminus\{0\},
 &\boldsymbol z=(\boldsymbol v,0)^{\Tr},\quad\xxi^{\Tr}\boldsymbol v=0;\\[2pt]
 \text{harmonic:}&q_{\xxi}=0,\quad\xxi\in\C^3,\quad\alpha\ne0,
 &\boldsymbol z=\boldsymbol h_{\boldsymbol\zeta}:=(-\xxi/\alpha,1)^{\Tr}.
\end{array}
\end{equation}
All nonzero complex multiples of $\boldsymbol h_{\boldsymbol\zeta}$
are harmonic amplitudes. The harmonic family includes every nonzero
real multiple of $\mathrm dt$. Its real kernel is the pressure line:
\begin{equation}\label{eq:s2-pure-time-kernel}
 \ker_{\R}\sigma_1(\widehat L)(\tau,\boldsymbol0)
 =\{(\boldsymbol0,\pi)^{\Tr}:\pi\in\R\},
 \qquad \tau\in\R\setminus\{0\}.
\end{equation}
This kernel has dimension one over $\R$, and over $\C$ after
complexification. At each real transverse covector the real kernel
has dimension two, and its complexification has dimension two over
$\C$; every harmonic kernel has complex dimension one.
The remaining complex characteristic covectors have $\alpha=0$ and
nonreal $\xxi$. Their amplitudes have zero pressure and satisfy
$\xxi^{\Tr}\boldsymbol v=0$, giving a two-dimensional complex kernel.
The harmonic condition defines the null cone of the Laplacian symbol.
All background jets remain real; their differentials are extended complex
linearly when testing complex covectors or amplitudes. No analytic extension
of the coefficient functions is assumed, and transpose remains bilinear.

Each characteristic pair gives a particularly simple highest-order
increment. With $J$ and $I$ now indexing $(s,y)$, take
\begin{equation}\label{eq:s2-characteristic-increments}
 \boldsymbol T_J=\boldsymbol z(\lambda,\boldsymbol\eta)^J
 \quad\Longrightarrow\quad
 \delta(D_I\boldsymbol n)
 =(\lambda,\boldsymbol\eta)^I
 (\lambda\mathsfbi I-\mathsfbi B(\boldsymbol\eta))\boldsymbol z=0.
\end{equation}
Here $|J|=m$ and $|I|=m-1$. This variation changes only derivatives of
total order $m$. The independent variables, field values, and all derivatives
of lower order are held fixed. The covector power is the scalar multi-index
monomial, with no summation over $J$; collecting the entries gives the
symmetric tensor $\boldsymbol z\otimes\boldsymbol\zeta^{\otimes m}$.
We call this a \emph{characteristic increment}. It belongs to $g_m^{\C}$,
and its real and imaginary parts belong to $g_m$. For a general PDE these
special tensors need not span the admissible highest-order increments;
that assertion requires proof here.

The scalar Laplace equation illustrates why complex covectors matter.
For $\psi_{xx}+\psi_{yy}=0$, a Hessian increment has zero trace. The two
complex characteristic covectors give
\begin{equation}\label{eq:s2-laplace-characteristic-squares}
 \boldsymbol k_\pm=(1,\pm\mathrm i)^{\Tr},\qquad
 \boldsymbol k_\pm\boldsymbol k_\pm^{\Tr}
 =\begin{pmatrix}1&\pm\mathrm i\\\pm\mathrm i&-1\end{pmatrix}.
\end{equation}
Their real and imaginary parts span all real trace-free Hessian increments,
including the mixed-derivative direction. The differential of $\psi_{xy}$
on the $+$ tensor is $\mathrm i$, so this dependence is detected. Real
characteristic covectors alone provide no nonzero test for this equation.

\begin{lemma}[Characteristic spanning]\label{lem:spanning}
For every $m\ge1$, real and imaginary parts of increments from the two
families \eqref{eq:s2-characteristic-families} span $g_m$. Consequently,
a polynomial row $r(\boldsymbol\eta)$ that annihilates all their amplitudes
at the corresponding tangential covectors is zero.
\end{lemma}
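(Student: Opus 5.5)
The plan is to dualise, so that spanning becomes the vanishing of a polynomial row. That row is then killed in two steps: the transverse family forces its velocity part to be parallel to a linear vector field, and the harmonic family, evaluated at both points of the null cone over the same tangential covector, forces everything to vanish.

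\emph{Reduction to the row statement.} By Lemma~\ref{lem:free}, an increment $\boldsymbol T\in g_m$ is determined by its free tangential entries $\boldsymbol T^y_J$ with $|J|=m$. Hence a linear functional on $g_m^{\C}$ is a row $r(\boldsymbol\eta)=\sum_{|J|=m}c_J\boldsymbol\eta^J$ of four homogeneous degree-$m$ polynomials acting by $\boldsymbol T\mapsto\sum_J c_J\boldsymbol T^y_J$. On a characteristic increment \eqref{eq:s2-characteristic-increments}, the tangential entries are $\boldsymbol z\,\boldsymbol\eta^J$, so the functional takes the value $r(\boldsymbol\eta)\boldsymbol z$. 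It therefore suffices to prove the second sentence of the lemma, with $r$ allowed complex:
\begin{itemize}
\item that statement shows the characteristic increments span $g_m^{\C}$ over $\C$;
\item since $g_m$ is cut out by real equations, $g_m^{\C}=g_m\otimes\C$;
\item the real and imaginary parts of a complex spanning set of $g_m\otimes\C$ span $g_m$ over $\R$.
\end{itemize}
Write $r=(\boldsymbol r_v,r_p)$ and $\xxi'=(\xi_1,\xi_2)=(\eta_1,\eta_2)$. Using $\eta_0=\tau-a\xi_3$ and $b_*=a+u_3$, one gets $\alpha=\eta_0+\uu'\cdot\xxi'+b_*\lambda$, where $\uu'=(u_1,u_2)$.

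\emph{Transverse family.} For a real tangential covector $\boldsymbol\eta$, the condition $\alpha=0$ fixes the transverse component $\lambda_T(\boldsymbol\eta)=-(\eta_0+\uu'\cdot\xxi')/b_*$. Put $\xxi_T(\boldsymbol\eta)=(\eta_1,\eta_2,\lambda_T)$; this is linear in $\boldsymbol\eta$ and nonzero off a proper subspace. Annihilation of all $\boldsymbol v\perp\xxi_T$ gives $\boldsymbol r_v(\boldsymbol\eta)\times\xxi_T(\boldsymbol\eta)=0$ on an open real set. Since this is a polynomial identity, it holds for all complex $\boldsymbol\eta$. Hence, wherever $q(\xxi_T)=\xxi_T^{\Tr}\xxi_T\neq0$,
\[
\boldsymbol r_v=\phi\,\xxi_T,\qquad \phi=\boldsymbol r_v\cdot\xxi_T/q(\xxi_T),
\]
with $\phi$ rational in $\boldsymbol\eta$.

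\emph{Harmonic family.} Take complex $\boldsymbol\eta$ with $\eta_1^2+\eta_2^2\neq0$. Both roots $\lambda=\pm\sqrt{-(\eta_1^2+\eta_2^2)}$ give null covectors $\xxi=(\eta_1,\eta_2,\lambda)$ over the same tangential covector, and generically $\alpha\neq0$ for both. The condition $r(\boldsymbol\eta)\boldsymbol h_{\boldsymbol\zeta}=0$ reads $r_p=\boldsymbol r_v\cdot\xxi/\alpha$. The key computation uses
\[
\xxi_T\cdot\xxi=\eta_1^2+\eta_2^2+\lambda_T\lambda=\lambda(\lambda_T-\lambda),\qquad \alpha=b_*(\lambda-\lambda_T).
\]
It follows that
\[
r_p(\boldsymbol\eta)=-\phi(\boldsymbol\eta)\lambda/b_*
\]
for both roots $\pm\lambda$. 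Subtracting the two equations gives $\phi\lambda=0$, so $\phi=0$ on a nonempty open set of complex $\boldsymbol\eta$. Then $\boldsymbol r_v=0$ there, hence $r_p=0$ there, and since both are polynomials, $r\equiv0$.

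The expected obstacle is the bookkeeping of genericity. One must check that the open sets used avoid every degenerate locus simultaneously: $q(\xxi_T)=0$, $\xxi_T=0$, $\alpha=0$ at either root, and $\eta_1^2+\eta_2^2=0$. One must also check that $r$ depends only on $\boldsymbol\eta$ and not on $\lambda$, which is what makes the two roots comparable. A further point is that a polynomial identity on a real open set extends to complex arguments. All of this holds because each excluded set is the zero set of a nonzero polynomial in $\boldsymbol\eta$, or of $\lambda$; the plan is to verify this explicitly rather than rely on divisibility arguments for $\phi$.
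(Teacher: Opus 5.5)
Your proof is correct and follows the paper's route: you reduce spanning to a polynomial row, fix a tangential covector $\boldsymbol\eta$, and show that the transverse plane over $\lambda_0$ together with the two harmonic amplitudes over $\lambda_\pm=\pm\mathrm i\varrho$ force $r(\boldsymbol\eta)=0$, then extend by polynomiality. The only difference is in how pointwise independence of the amplitudes is obtained. The paper gets it at once because the three roots are distinct eigenvalues of $\mathsfbi B(\boldsymbol\eta)$ with eigenspaces of dimensions $2,1,1$, whereas you verify it by your explicit $\phi$-computation; working, as the paper does, at real $\boldsymbol\eta$ with $\eta_1^2+\eta_2^2>0$ makes all your genericity conditions automatic ($\xxi_T$ real and nonzero, $\lambda_\pm$ nonreal so $\alpha\ne0$), so the complex extension and the rational $\phi$ are unnecessary.
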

\begin{proof}
It suffices to use the open set of real $\boldsymbol\eta$ with
$\varrho^2=\eta_1^2+\eta_2^2>0$. The
characteristic transverse components over this same tangential covector are
\begin{equation}\label{eq:s2-three-roots}
 \lambda_0=-\frac{\eta_0+u_1\eta_1+u_2\eta_2}{b_*},\qquad
 \lambda_\pm=\pm\mathrm i\varrho.
\end{equation}
The first root is real and has $q_{\xxi}>0$; the other two satisfy
$\alpha=b_*(\lambda_\pm-\lambda_0)\ne0$. Their kernels are eigenspaces
of $\mathsfbi B(\boldsymbol\eta)$ for three distinct eigenvalues.
Their complex dimensions are $2,1,1$, so together they span $\C^4$.
Restricting the corresponding increments to tangential entries therefore
gives, by linear combinations, every tensor
\begin{equation}\label{eq:s2-tangential-power-increments}
 \boldsymbol T_J^y=\boldsymbol a\boldsymbol\eta^J,\qquad
 |J|=m,\qquad\boldsymbol a\in\C^4.
\end{equation}
These tensors span all tangential highest-order entries as
$\boldsymbol\eta$ varies over the open set $\varrho>0$. Indeed, an
annihilating linear functional has coefficient rows $C_J$ with
\begin{equation}\label{eq:s2-fibre-annihilator-polynomial}
 \left(\sum_{|J|=m}C_J\boldsymbol\eta^J\right)\boldsymbol a=0
\end{equation}
for every $\boldsymbol a$ and every $\boldsymbol\eta$ in that open set.
Every polynomial coefficient is therefore zero. The free-coordinate
identification transfers this spanning result to $g_m^{\C}$; real and
imaginary parts give the real assertion. The same amplitude basis makes
the polynomial row $r$ zero on an open set, hence identically zero.
Thus an amplitude basis on this open set establishes spanning of the
whole fibre. At $\varrho=0$ with $\eta_0\ne0$, the two harmonic roots
in \eqref{eq:s2-three-roots} coincide at $\lambda=0$, whose eigenspace
is only the pressure line \eqref{eq:s2-pure-time-kernel}.
At $\boldsymbol\eta=0$, the matrix $\mathsfbi B$ is zero and its only
eigenvalue $\lambda=0$ gives the excluded zero full covector.
\end{proof}

The \emph{degree-$m$ characteristic response} of $f$ is its fibre
differential \eqref{eq:s2-highest-fibre-differential} evaluated on
\eqref{eq:s2-characteristic-increments}:
\begin{equation}\label{eq:s2-response-differential}
 \operatorname{resp}_m f(\boldsymbol{\zeta},\boldsymbol{z})
   =\mathrm d_m f\bigl((\boldsymbol z(\lambda,\boldsymbol\eta)^J)_{|J|=m}\bigr),
 \qquad (\lambda\mathsfbi I-\mathsfbi B(\boldsymbol\eta))\boldsymbol z=0.
\end{equation}
For any order-$m$ ambient extension $\widehat f$, this response is
computed by its highest linearisation symbol. Writing $\ell^y$ for linearisation in tangential jet entries, define
symbols that display both the function and the degree:
\begin{equation}\label{eq:s2-intrinsic-response}
 \begin{gathered}
 \sigma_{\widehat f,m}(\boldsymbol{\zeta})=\sigma_m(\ell_{\widehat f})(\boldsymbol{\zeta}),\qquad
 \sigma^y_{f,m}(\boldsymbol{\eta})=\sigma_m(\ell^y_{\cR(f)})(\boldsymbol{\eta}),\\
 \sigma_{\widehat f,m}(\boldsymbol{\zeta})-\sigma^y_{f,m}(\boldsymbol{\eta})=t_{m-1}(\boldsymbol{\zeta})(\lambda\mathsfbi I-\mathsfbi B(\boldsymbol\eta)),\\
 (\lambda\mathsfbi I-\mathsfbi B(\boldsymbol\eta))\boldsymbol z=0\ \Longrightarrow\quad
 \operatorname{resp}_m f(\boldsymbol{\zeta},\boldsymbol{z})
   =\sigma_{\widehat f,m}(\boldsymbol{\zeta})\boldsymbol{z}=\sigma^y_{f,m}(\boldsymbol{\eta})\boldsymbol{z}.
 \end{gathered}
\end{equation}
The superscript $y$ labels the tangential formulation, not differentiation
of a symbol with respect to $y$. Explicitly,
\begin{equation}\label{eq:s2-tangential-symbol-explicit}
 \sigma^y_{f,m}(\boldsymbol\eta)
 =\sum_{|J|=m}\frac{\partial\cR(f)}{\partial(D_y^J\boldsymbol w)}
                      \boldsymbol\eta^J.
\end{equation}
For scalar $f$ this is a four-component row; multiplication by an amplitude
gives the highest-order response. The ambient symbol depends on the chosen
extension $\widehat f$, whereas its action on a characteristic amplitude
does not.
The difference identity in \eqref{eq:s2-intrinsic-response} follows
by applying \eqref{eq:s2-finite-residual} to $\widehat f-\cR(f)$ and
taking symbols, with $t_{m-1}=\sigma_{m-1}(\widehat T)$. It proves that the
response is independent of the extension. The symbols in
\eqref{eq:s2-intrinsic-response} are polynomial in the covector;
their coefficients retain the smooth jet dependence of $f$.

At order zero, define the response using the ordinary differential in
the four field values, with the independent variables fixed:
\begin{equation}\label{eq:s2-zero-order-response}
 \operatorname{resp}_0 f(\boldsymbol{\zeta},\boldsymbol{z})
   =\sum_{a=1}^4\frac{\partial f}{\partial w^a}z^a,
 \qquad (\lambda\mathsfbi I-\mathsfbi B(\boldsymbol\eta))\boldsymbol z=0.
\end{equation}

\begin{lemma}[Smooth descent]\label{lem:descent}
Let $f$ have order at most $m\ge1$ on Euler.  If its degree-$m$
responses to both families in \eqref{eq:s2-characteristic-families}
vanish throughout a neighbourhood, then it has order at most $m-1$
on a smaller neighbourhood. At order zero, if the responses vanish
throughout a neighbourhood, then $f$ depends only on the independent
variables on a smaller neighbourhood.
\end{lemma}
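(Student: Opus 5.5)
We work throughout with the tangential representative $\cR(f)$ of \eqref{eq:s2-tangential-representative}. By Lemma~\ref{lem:free}, $f$ agrees on Euler with $\cR(f)$, and $\cR(f)$ is a smooth function of the free coordinates $\{D_y^J\boldsymbol w:|J|\le m\}$ together with the independent variables. So the claim reduces to showing that $\cR(f)$ does not depend on the top-order tangential entries $D_y^J\boldsymbol w$ with $|J|=m$.

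\emph{Vanishing of the fibre differential.} Fix a full compatible jet in the neighbourhood. The vanishing responses say that $\mathrm d_m f$ in \eqref{eq:s2-highest-fibre-differential} is zero on every characteristic increment \eqref{eq:s2-characteristic-increments} from the two families. Since $\mathrm d_m f$ is complex linear on $g_m^{\C}$, it also vanishes on the real and imaginary parts of these increments. By Lemma~\ref{lem:spanning} those parts span $g_m$, so $\mathrm d_m f=0$ on $g_m$. Equivalently, in the form of the lemma's second statement, the polynomial row $\sigma^y_{f,m}(\boldsymbol\eta)$ of \eqref{eq:s2-tangential-symbol-explicit} annihilates every amplitude at its tangential covector, and is therefore identically zero. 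Its coefficients are exactly $\partial\cR(f)/\partial(D_y^J\boldsymbol w)$ for $|J|=m$, so all of these partial derivatives vanish at that jet.

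\emph{Integration in the top fibre.} The step needing care is passing from vanishing derivatives at each point to genuine independence. The hypothesis holds at every jet in the neighbourhood, including jets whose order-$m$ entries vary, so the partial derivatives of $\cR(f)$ in the top tangential variables vanish on an open set. Shrink the neighbourhood to a product chart that is convex in the order-$m$ tangential coordinates; this is possible because, by Lemma~\ref{lem:free}, these coordinates are free and independent of the lower-order ones. Integrating along segments in those coordinates shows that $\cR(f)$ is constant in them, so $\cR(f)$ is a function of the lower-order tangential entries alone. As noted after \eqref{eq:s2-tangential-representative}, the least order of an ambient extension equals the order of $\cR(f)$, which is therefore at most $m-1$ on Euler.

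\emph{Order zero.} Here the same argument uses \eqref{eq:s2-zero-order-response}. The transverse amplitudes $(\boldsymbol v,0)$ with $\boldsymbol v\perp\xxi$, as real $\xxi$ varies, span $\R^3\times\{0\}$. The pure-time harmonic kernel \eqref{eq:s2-pure-time-kernel} supplies $(\boldsymbol 0,1)$. Together these span $\R^4$, so the gradient of $f$ in $\boldsymbol w$ vanishes. Because Euler imposes no algebraic restriction on order-zero values, convexity in $\boldsymbol w$ then gives that $f$ depends only on $(t,\xx)$.
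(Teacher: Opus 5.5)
Your argument is correct and follows the paper's proof. Lemma~\ref{lem:spanning} forces $\sigma^y_{f,m}=0$, so every partial derivative of $\cR(f)$ in the top tangential entries vanishes, and integration along segments on a product chart convex in those entries removes them. The only difference is at order zero: you use the real transverse amplitudes over all real $\xxi\ne0$ together with the pure-time pressure line \eqref{eq:s2-pure-time-kernel}, where the paper uses the four-amplitude complex basis at a single tangential covector; both span the field values and give the same conclusion.
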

\begin{proof}
Lemma~\ref{lem:spanning} makes \eqref{eq:s2-highest-fibre-differential}
zero on the entire highest-order fibre. Equivalently,
\eqref{eq:s2-intrinsic-response} gives $\sigma^y_{f,m}=0$, so all partial derivatives
of $\cR(f)$ in the highest tangential entries vanish. Shrink to a product
chart convex in these entries. Integrating along segments in each fibre
makes $\cR(f)$ independent of them, providing a smooth order-$(m-1)$
extension. Its dependence on lower-order entries remains unrestricted.
At order zero, the four-amplitude basis used in
Lemma~\ref{lem:spanning} makes the differential in
\eqref{eq:s2-zero-order-response} zero. Apply the argument componentwise
to vector-valued functions.
\end{proof}

Finally, $q_{\xxi}$ is irreducible over $\C$: a product of two linear
forms has rank at most two, whereas $q_{\xxi}$ has rank three.
It is therefore prime (dividing a product forces it to divide one
factor), and coprime to $\alpha$ and each $\xi_i$ (they share no
nonconstant factor).  If $P(\tau,\xxi)$ vanishes where $q_{\xxi}=0$ and
$\alpha\ne0$, monic division in $\xi_3$ gives
\begin{equation}\label{eq:s2-null-cone-division}
 P=q_{\xxi}H+e_1(\tau,\xi_1,\xi_2)\xi_3
               +e_0(\tau,\xi_1,\xi_2).
\end{equation}
The remainder in \eqref{eq:s2-null-cone-division} vanishes at both
distinct roots whenever $\xi_1^2+\xi_2^2\ne0$, except at most two
values of $\tau$ excluded by $\alpha=0$.  Thus $e_0=e_1=0$
identically and $q_{\xxi}$ divides $P$.  Division is linear in the
coefficients, so their smooth dependence on background jets is preserved.

\section{Principal-symbol identities}\label{sec:4}

The linearised Euler operator $L$ and the adjoint determining operator
$M$ act on a vector--scalar pair by
\begin{equation}\label{eq:finite-LM}
 L(\boldsymbol{v},\pi)=\binom{Y\boldsymbol{v}+\matA\boldsymbol{v}+\nabla\pi}{\Div \boldsymbol{v}},\qquad
 M(\boldsymbol{r},h)=\binom{Y\boldsymbol{r}-\matA^{\Tr}\boldsymbol{r}+\nabla h}{\Div \boldsymbol{r}}.
\end{equation}
On Euler, $M=-L^*$: the minus sign is included in the determining
operator so that its derivative terms agree with those of $L$.
The multiplication terms involving $\matA$ have operator order zero.
Their common principal symbol is the following four-by-four matrix:
\begin{equation}\label{eq:s2-euler-symbol}
 \sigma_E(\boldsymbol\zeta):=\sigma_1(\ell_E)(\boldsymbol\zeta)
 =\begin{pmatrix}\alpha\mathsfbi I&\xxi\\\xxi^{\Tr}&0\end{pmatrix},
 \qquad \det\sigma_E=-\alpha^2q_{\xxi}.
\end{equation}
Thus $\sigma_E$ denotes the symbol of the linearisation of the Euler
equations. The symbol of the actual formal adjoint is
$-\sigma_E^{\Tr}=-\sigma_E$.

We use the original velocity--pressure operators here because this symbol
is symmetric. The conversion from the solved form is confined to
\begin{equation}\label{eq:core-symbol-conversion}
 L=\cN\widehat L,\qquad
 \sigma_E(\boldsymbol\zeta)
 =\cN(\lambda\mathsfbi I-\mathsfbi B(\boldsymbol\eta)).
\end{equation}
The operator identity holds on Euler; the symbol identity holds at every
background jet. Since $\cN$ is invertible, the characteristic amplitudes
are unchanged. Composing the finite residual estimate with $\cN^{-1}$
therefore gives the same order bound when the residual operator is
written with $L$.

Consider a pair and its source satisfying the following equations on Euler:
\begin{equation}\label{eq:core-forced-hypotheses}
 q=(\boldsymbol{r},h),\qquad Mq=(\boldsymbol{f},g),\qquad
 \ord q\le n,\quad\ord(\boldsymbol{f},g)\le n,\quad n\ge1.
\end{equation}
Using extensions of the stated orders, linearisation and the finite
residual estimate give the following identity on Euler, valid for every
test field $(\boldsymbol v,\pi)$:
\begin{equation}\label{eq:finite-forced-linearization}
 M\ell_q(\boldsymbol{v},\pi)=\ell_{(\boldsymbol{f},g)}(\boldsymbol{v},\pi)
 +\binom{(D_{\xx}\boldsymbol{v})^{\Tr}\boldsymbol{r}-(\boldsymbol{v}\cdot D_{\xx})\boldsymbol{r}}{0}+TL(\boldsymbol{v},\pi),
 \qquad\ord_{\rm op}T\le n.
\end{equation}
The middle term records the variation of the coefficients of $M$.
The background satisfies Euler, but the test field is arbitrary; in
particular $L(\boldsymbol v,\pi)$ need not vanish.
For an adjoint symmetry $Q$ of order at most $N\ge1$, choose an
ambient representative of that order, still denoted by $Q$. Its symbol
$\sigma_{Q,N}=\sigma_N(\ell_Q)$ measures the highest-order response.
The coefficient-variation term has order at most one, so comparison of
degree $N+1$ gives
\begin{equation}\label{eq:finite-kernel-preservation}
 \sigma_E\sigma_{Q,N}=T_N\sigma_E,\qquad
 T_N=\sigma_N(T).
\end{equation}
For a characteristic amplitude, the unknown residual contribution in
\eqref{eq:finite-kernel-preservation} disappears:
\begin{equation}\label{eq:core-characteristic-restriction}
 \sigma_E(\boldsymbol\zeta)\boldsymbol z=0
 \quad\Longrightarrow\quad
 \sigma_E(\boldsymbol\zeta)\sigma_{Q,N}(\boldsymbol\zeta)\boldsymbol z=0.
\end{equation}
Thus $\sigma_{Q,N}$ preserves each characteristic kernel. This gives equations
for the highest-order responses without determining $T_N$.
Lemma~\ref{lem:spanning} ensures that the two families of responses
together determine the differential in every admissible highest-order
direction. Once these responses vanish, Lemma~\ref{lem:descent}
removes that order of derivative dependence.

\subsection{Vanishing harmonic response}\label{sec:7}

\begin{theorem}[Harmonic elimination]\label{thm:harmonic}
Let $Q$ be a local adjoint symmetry of Euler of order at most $N$,
where $N\ge1$. Its degree-$N$ harmonic response vanishes:
\begin{equation}\label{eq:core-harmonic-conclusion}
 \sigma_{Q,N}(\tau,\xxi)\binom{-\xxi/\alpha}{1}=0,
 \qquad \xxi^{\Tr}\xxi=0,\quad \alpha\ne0.
\end{equation}
If its transverse characteristic response also vanishes, $Q$ has
order at most $N-1$ on Euler.
\end{theorem}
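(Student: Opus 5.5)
The plan is to exploit kernel preservation \eqref{eq:core-characteristic-restriction} at a harmonic covector, where the kernel is only one-dimensional. Fix a background jet and a harmonic covector $\boldsymbol\zeta=(\tau,\xxi)$ with $q_{\xxi}=0$, $\alpha\ne0$. By \eqref{eq:s2-characteristic-families} the complex kernel of $\sigma_E(\boldsymbol\zeta)$ is the line spanned by $\boldsymbol h_{\boldsymbol\zeta}$. Hence \eqref{eq:core-characteristic-restriction} gives
\[
 \sigma_{Q,N}(\boldsymbol\zeta)\boldsymbol h_{\boldsymbol\zeta}=\mu(\boldsymbol\zeta)\boldsymbol h_{\boldsymbol\zeta}
\]
for a scalar $\mu$ on the open part of the null cone. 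The task is therefore to prove $\mu=0$. The final sentence of the theorem then follows at once: vanishing of both responses lets Lemma~\ref{lem:spanning} and Lemma~\ref{lem:descent} lower the order. That step uses the extension-independence in \eqref{eq:s2-intrinsic-response} and, componentwise, the conversion \eqref{eq:core-symbol-conversion}, which shows that the amplitudes of $\sigma_E$ and of $\lambda\mathsfbi I-\mathsfbi B$ coincide.

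\emph{Regularity of $\mu$.} I will clear the denominator by working with $\boldsymbol w=\alpha\boldsymbol h_{\boldsymbol\zeta}=(-\xxi,\alpha)^{\Tr}$, which is polynomial in $\boldsymbol\zeta$. The relation becomes $\sigma_{Q,N}\boldsymbol w=\mu\boldsymbol w$. Its first three components give $\mu=-(\sigma_{Q,N}\boldsymbol w)_i/\xi_i$. Since $\xxi\ne0$ on the cone, $\mu$ extends regularly across $\alpha=0$, so it is the restriction to $q_{\xxi}=0$ of a homogeneous polynomial of degree $N$. Eliminating $\mu$ yields the polynomial identities
\[
 \xi_j(\sigma_{Q,N}\boldsymbol w)_i-\xi_i(\sigma_{Q,N}\boldsymbol w)_j=0,\qquad
 \alpha(\sigma_{Q,N}\boldsymbol w)_i+\xi_i(\sigma_{Q,N}\boldsymbol w)_4=0
\]
on the cone. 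By the null-cone division \eqref{eq:s2-null-cone-division}, each of these expressions is divisible by $q_{\xxi}$, with coefficients depending smoothly on the jet. This gives a canonical polynomial $\mu$, which I will now try to kill.

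\emph{Main step: a subprincipal comparison.} The degree-$(N+1)$ identity \eqref{eq:finite-kernel-preservation} alone cannot force $\mu=0$, since scalar multiples of the identity symbol preserve every kernel. I will instead compare degree-$N$ terms in \eqref{eq:finite-forced-linearization}, with zero source, and contract on the left with the null covector of $\sigma_E$. Because $\sigma_E$ is symmetric, this covector is $\boldsymbol h_{\boldsymbol\zeta}^{\Tr}$, and the contraction annihilates the unknown $T$ at principal level. The surviving terms should take the form of a transport equation along the bicharacteristic of the harmonic factor $q_{\xxi}$. Schematically, this reads $\xxi\cdot D_{\xx}\mu$ plus zero-order terms coming from $\matA$, $\matA^{\Tr}$, and the coefficient-variation term of \eqref{eq:finite-forced-linearization}, all acting on $\mu$, equal to a combination of the lower symbol of $\ell_Q$ evaluated on $\boldsymbol h_{\boldsymbol\zeta}$. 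The key observation is that $D_{\xx}\mu$ is linear in order-$(N+1)$ jet entries, with coefficients $\partial\mu/\partial(\text{order-}N\text{ entries})$, whereas every other term has order at most $N$. Varying those order-$(N+1)$ entries freely along $g_{N+1}$, which is possible by Lemma~\ref{lem:free}, and testing with harmonic increments via Lemma~\ref{lem:spanning}, should force $\mu$ to be independent of the highest jets. The remaining algebraic equation in which $\matA$ appears then needs a coefficient comparison in the velocity-gradient entries to give $\mu=0$. As a consistency check, at the pure-time covector $(\tau,\boldsymbol 0)$ the kernel is the pressure line \eqref{eq:s2-pure-time-kernel}, so $\mu(\tau,\boldsymbol0)=(\sigma_{Q,N})_{44}\tau^N$ must vanish too.

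I expect this subprincipal computation to be the main obstacle. First, the bookkeeping of the degree-$N$ part of $TL$ must be shown to drop out after contraction with $\boldsymbol h_{\boldsymbol\zeta}^{\Tr}$. Second, the $\matA$-dependent zero-order terms must genuinely be nondegenerate on the harmonic line, rather than cancelling as they do for the helicity-type multiplier $c\ww$. If the direct route stalls, I would first try to isolate the contribution of $\Div\LL=0$: the harmonic amplitude carries pressure, and the pressure component couples only through $\nabla\Lambda^0$ and the divergence row. The idea is to show that this row forces the last component of $\sigma_{Q,N}\boldsymbol w$, namely $\mu\alpha$, to be divisible by $\alpha q_{\xxi}$, and hence to vanish on the cone.
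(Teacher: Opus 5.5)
\textbf{There is a genuine gap: the step you plan to close with does not give $\mu=0$.} Your set-up otherwise follows the paper. Your eigenvalue is $\mu=H_N/\alpha$, where $H_N=\alpha Z_N-\boldsymbol W_N\xxi$ is the last entry of $\sigma_{Q,N}\boldsymbol b_0$. Your continuity argument is a valid substitute for \eqref{eq:core-harmonic-root}: extending $\alpha(\sigma_{Q,N}\boldsymbol w)_i+\xi_i(\sigma_{Q,N}\boldsymbol w)_4=0$ to the whole cone shows that $H_N$ vanishes at $\tau=-\uu\cdot\xxi$. Contracting the degree-$N$ terms with $\boldsymbol h_{\boldsymbol\zeta}^{\Tr}$ and then descending in the highest jets is also the paper's route.

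\textbf{What the contracted identity actually says.} If you carry out the contraction, no ``nondegenerate'' zero-order term survives.
\begin{itemize}
\item The lower-symbol terms all sit in $\sigma_E\boldsymbol c_1$, which $\boldsymbol h_{\boldsymbol\zeta}^{\Tr}\sigma_E=0$ kills.
\item For $N=1$ the coefficient variation is $\xxi(-\xxi\cdot\LL)$, which vanishes after contraction on the cone.
\item The residual $TL$ drops out. From $\boldsymbol h_{\boldsymbol\zeta}^{\Tr}T_N\sigma_E=0$ you get $\boldsymbol h_{\boldsymbol\zeta}^{\Tr}T_N\propto\boldsymbol h_{\boldsymbol\zeta}^{\Tr}$, while $\boldsymbol h_{\boldsymbol\zeta}^{\Tr}L\boldsymbol b_0=0$ by skew-symmetry of $\matA-\matA^{\Tr}$. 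The paper instead adds the correction $\kappa^{-1}\boldsymbol b_1$.
\item The zero-order terms cancel exactly: $-\boldsymbol b^{\Tr}\matA^{\Tr}\boldsymbol b+\Div\boldsymbol b=0$ for $\boldsymbol b=-\xxi/\alpha$.
\end{itemize}
What remains is $\xxi\cdot D_{\xx}(\alpha\mu)\equiv0$ modulo $\xxi^{\Tr}\xxi$. In your normalisation this reads $\alpha\,\xxi\cdot D_{\xx}\mu+(\xxi^{\Tr}\matA\xxi)\mu\equiv0$.

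\textbf{Why coefficient comparison cannot finish.} Iterate the descent down to field order zero, testing with both characteristic families (harmonic increments alone do not span). Comparing coefficients of the trace-free entries of $\matA$ then gives $\alpha\,\partial_{\uu}\mu+\mu\xxi=0$, that is $\partial_{\uu}(\alpha\mu)=0$, not $\mu=0$. For example, $\mu=\tau^{N+1}/(\tau+\uu\cdot\xxi)$ satisfies every identity obtainable this way. So no comparison in the velocity-gradient entries can conclude. Nor can a principal-level use of the divergence row, since that row is already contained in kernel preservation.

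\textbf{The missing step is the root argument, and you already have its ingredient.} Descent gives $\alpha\mu=H_N=\Phi(t,\xx,\tau,\xxi)$ modulo $\xxi^{\Tr}\xxi$, with $\Phi$ independent of the fields. Your regularity result says $H_N$ vanishes at $\tau=-\uu\cdot\xxi$. So for fixed null $\xxi\ne0$, the polynomial $\Phi(t,\xx,\cdot\,,\xxi)$ has the root $-\uu\cdot\xxi$ for every nearby $\uu$. Varying a component $u_i$ with $\xi_i\ne0$ gives infinitely many roots, so $H_N$ vanishes on the cone. It is also cleaner to run the descent on $H_N$ itself rather than on $\mu$, since its transport identity has no multiplication term.
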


\begin{proof}
Kernel preservation reduces the response to one scalar polynomial. We
derive a differential identity for it, remove its field dependence by
highest-derivative comparisons, and finally force it to have infinitely
many roots.

Fix a constant complex covector and a formal factor $\mathfrak e$ with
\begin{equation}\label{eq:core-formal-exponential}
 \boldsymbol{\zeta}=(\tau,\xxi),\qquad
 D_\mu\mathfrak e=\kappa\zeta_\mu\mathfrak e.
\end{equation}
Here $\kappa$ is a parameter; the calculation uses finite sums of
its integer powers, called Laurent polynomials. For a covector field $\boldsymbol{\vartheta}$,
the rule is consistent when it is closed, meaning
\begin{equation}\label{eq:core-closed-covector}
 D_\mu\vartheta_\nu=D_\nu\vartheta_\mu.
\end{equation}
Both choices used below, constant $\boldsymbol{\zeta}$ and gradient $D\phi$,
satisfy this condition, so mixed derivatives commute. Keep the
background jet real and first choose the covector as in
\eqref{eq:core-harmonic-conclusion} with $\xxi\ne0$, shrinking the
chart to keep $\alpha\ne0$.

The leading amplitude lies in the harmonic kernel. A correction one
degree lower will cancel the next linearised residual, so that the
unknown operator in \eqref{eq:finite-forced-linearization} cannot affect
the degree needed below. Define
\begin{equation}\label{eq:finite-harmonic-test}
 \begin{gathered}
 \boldsymbol{h}_{\boldsymbol{\zeta}}=\binom{-\xxi/\alpha}{1},\qquad
 \boldsymbol{b}_0=\binom{-\xxi}{\alpha},\\
 \boldsymbol{b}_1=\binom{(\matA-\matA^{\Tr})\xxi/\alpha}{0},\qquad
 \boldsymbol{a}=\mathfrak e(\boldsymbol{b}_0+\kappa^{-1}\boldsymbol{b}_1).
 \end{gathered}
\end{equation}
Direct substitution into $L$ in \eqref{eq:finite-LM} gives
\begin{equation}\label{eq:finite-harmonic-error}
 \sigma_E\boldsymbol{b}_0=0,\qquad L\boldsymbol{b}_0=\binom{(\matA^{\Tr}-\matA)\xxi}{0},\qquad
 \sigma_E\boldsymbol{b}_1=-L\boldsymbol{b}_0,\qquad
 L\boldsymbol{a}=\mathfrak e\kappa^{-1}L\boldsymbol{b}_1.
\end{equation}
The scalar cancellation uses skew-symmetry of $\matA-\matA^{\Tr}$.
Thus the residual $TL\boldsymbol{a}$ has degree at most $N-1$ after removing
$\mathfrak e$.

Write the last row of $\sigma_{Q,N}$ as $(\boldsymbol W_N,Z_N)$:
these are its velocity-input row and pressure-input entry for the scalar
component of $Q$. Kernel preservation
in \eqref{eq:finite-kernel-preservation} gives
\begin{align}
 &\ell_Q(\boldsymbol{a})=\mathfrak e
   \bigl(\kappa^N\boldsymbol{c}_0+\kappa^{N-1}\boldsymbol{c}_1+\cdots\bigr), \label{eq:core-harmonic-output}\\
 &\boldsymbol{c}_0=\sigma_{Q,N}\boldsymbol{b}_0=\boldsymbol{h}_{\boldsymbol{\zeta}}H_N,\qquad
 H_N=\alpha Z_N-\boldsymbol{W}_N\xxi. \label{eq:finite-harmonic-J}
\end{align}
The polynomial $H_N$ is homogeneous of degree $N+1$ in the covector, with smooth
finite-order coefficients. At degree $N$ in
\eqref{eq:finite-forced-linearization}, contraction with
$\boldsymbol{h}_{\boldsymbol{\zeta}}^{\Tr}$ removes $\sigma_E\boldsymbol{c}_1$; the residual has lower degree.
The coefficient-variation term also has lower degree for $N\ge2$.
For $N=1$, with $\LL$ the vector component of $Q$, it gives
\begin{equation}\label{eq:core-harmonic-order-one}
 \binom{\xxi((-\xxi)\cdot\LL)}{0},\qquad
 \boldsymbol{h}_{\boldsymbol{\zeta}}^{\Tr}\binom{\xxi((-\xxi)\cdot\LL)}{0}=0.
\end{equation}
The remaining scalar equation follows by the product rule:
\begin{align}
 &\boldsymbol{b}=-\xxi/\alpha,\qquad \gamma=\xxi^{\Tr}\matA\xxi,\qquad
 \boldsymbol{b}^{\Tr}\boldsymbol{b}=\boldsymbol{b}^{\Tr}Y\boldsymbol{b}=0,\qquad
 \boldsymbol{b}^{\Tr}\matA^{\Tr}\boldsymbol{b}=\Div \boldsymbol{b}=\gamma/\alpha^2, \label{eq:core-harmonic-coefficients}\\
 &\boldsymbol{h}_{\boldsymbol{\zeta}}^{\Tr}M(\boldsymbol{h}_{\boldsymbol{\zeta}}H_N)
   =-\frac{2}{\alpha}\xxi\cdot D_{\xx}H_N=0,
 \qquad \xxi\cdot D_{\xx}H_N=0\pmod{\xxi^{\Tr}\xxi}. \label{eq:finite-harmonic-transport}
\end{align}
The last notation denotes equality on the complex null cone, or
equivalently polynomial divisibility by $\xxi^{\Tr}\xxi$. Polynomial
dependence extends this identity across $\alpha=0$.

Fix $(\tau,\xxi)$ on the null cone with $\xxi\ne0$. If $H_N$ has field order
$m\ge1$, let $\sigma_{H_N,m}(\boldsymbol{\theta})$ be its highest differential row in a
new covector $\boldsymbol{\theta}$. Varying the next affine fibre in the
directional identity \eqref{eq:finite-harmonic-transport} gives,
for every characteristic pair $(\boldsymbol{\theta},\boldsymbol{z})$,
\begin{equation}\label{eq:core-directional-descent}
 (\xxi\cdot\boldsymbol{\theta}_{\xx})\,\sigma_{H_N,m}(\boldsymbol{\theta})\boldsymbol{z}=0.
\end{equation}
The prefactor is nonzero densely in each characteristic family:
real transverse covectors vary freely, and the complex null cone
is irreducible and lies in no hyperplane. Thus $\sigma_{H_N,m}$ annihilates
both families, so Lemma~\ref{lem:descent} lowers the field order. Conjugation
preserves these families, so the real and imaginary parts of the
row satisfy the same vanishing conditions.

At field order zero, the transverse tests make the velocity
differential parallel to every spatial covector, hence zero;
the harmonic test annihilates the pressure differential. Iteration and
polynomial division by $\xxi^{\Tr}\xxi$ give, on a smaller product chart,
\begin{equation}\label{eq:finite-harmonic-field-free}
 H_N=\Phi(t,\xx,\tau,\xxi)\pmod{\xxi^{\Tr}\xxi}.
\end{equation}
Here $\Phi$ is independent of the fields and their derivatives. This
independence is the point of the descent: changing the velocity will
change a root without changing the polynomial itself.
Each field derivative of the remainder vanishes on the null cone;
the division argument \eqref{eq:s2-null-cone-division} makes it
zero as a polynomial.

At $\tau=-\uu\cdot\xxi$, transverse kernel preservation makes
$\boldsymbol W_N$ a polynomial multiple
of $\xxi^{\Tr}$: its component relations imply this because distinct
coordinate variables have no common nonconstant factor. Hence
\begin{equation}\label{eq:core-harmonic-root}
 \boldsymbol{W}_N(-\uu\cdot\xxi,\xxi)=w_N\xxi^{\Tr},\qquad
 H_N(-\uu\cdot\xxi,\xxi)=-w_N\xxi^{\Tr}\xxi.
\end{equation}
For fixed nonzero null $\xxi$, the field-independent polynomial
$\Phi$ in \eqref{eq:finite-harmonic-field-free} therefore has
the root $-\uu\cdot\xxi$ for every nearby velocity. Varying a
component with $\xi_i\ne0$ gives infinitely many roots, so
$H_N$ vanishes on the null cone. This proves harmonic elimination
through \eqref{eq:finite-harmonic-J} for $\xxi\ne0$. For fixed
$\tau\ne0$, let $\xxi$ tend to zero along the complex null cone.
The polynomial symbol $\sigma_{Q,N}$ and the harmonic amplitude
$\boldsymbol h_{\boldsymbol\zeta}$ in \eqref{eq:finite-harmonic-test}
are continuous there, so
\eqref{eq:core-harmonic-conclusion} also holds at $\xxi=0$.
Thus the pure-time pressure direction
\eqref{eq:s2-pure-time-kernel} also has zero response. Vanishing of the
transverse response then allows Lemma~\ref{lem:descent} to lower the order.
\end{proof}

\subsection{Transverse response and its transport}

Theorem~\ref{thm:harmonic} removes the harmonic response. We now identify
the remaining highest-order information and derive the equation it
satisfies. As in \S\ref{sec:7}, a corrected characteristic test in the
linearised determining identity \eqref{eq:finite-forced-linearization}
isolates this information. The transverse characteristic covector
depends on the velocity, so its change must also be included.

For a pair $q=(\boldsymbol r,h)$ satisfying
\eqref{eq:core-forced-hypotheses}, fix a real $\xxi\ne0$ and put
\begin{equation}\label{eq:s4-5}
 \matP=\mathsfbi I-\frac{\xxi\xxi^{\Tr}}{q_{\xxi}}.
\end{equation}
This is the orthogonal projection onto the two-dimensional amplitude
plane $\xxi^\perp$. For a vector differential function $\boldsymbol r$,
write $\sigma^u_{\boldsymbol r,n}$ for the velocity-input block of its
degree-$n$ linearisation symbol. Its entries are
\begin{equation}\label{eq:direct-velocity-symbol}
 (\sigma^u_{\boldsymbol r,n}(\boldsymbol\zeta))_{ij}
 =\sum_{|I|=n}\frac{\partial r_i}{\partial(D_Iu_j)}
                   \boldsymbol\zeta^I.
\end{equation}
It maps a velocity amplitude to the corresponding highest-order
variation of $\boldsymbol r$. Restricting both input and output to the
transverse plane gives
\begin{equation}\label{eq:s4-7}
 \matS=\matP\sigma^u_{\boldsymbol r,n}(-\uu\cdot\xxi,\xxi)\matP,
 \qquad \matS=\matP\matS\matP.
\end{equation}
Thus $\matS$ is a three-by-three representation of a linear map on
$\xxi^\perp$. Its coefficients record the derivatives of
$\boldsymbol r$ in the order-$n$ velocity arguments displayed in
\eqref{eq:direct-velocity-symbol}; no matrix symmetry is assumed.

The meaning of this map can also be read from the characteristic
increments \eqref{eq:s2-characteristic-increments}. For an amplitude
$\boldsymbol v\perp\xxi$ and spatial Taylor displacement $\boldsymbol z$,
their spatial velocity and vorticity parts are
\begin{equation}\label{eq:transverse-spatial-patterns}
 \delta\uu(\boldsymbol z)
     =\boldsymbol v\frac{(\xxi\cdot\boldsymbol z)^n}{n!},\qquad
 \delta\ww(\boldsymbol z)
     =(\xxi\times\boldsymbol v)
                    \frac{(\xxi\cdot\boldsymbol z)^{n-1}}{(n-1)!}.
\end{equation}
The full covector $(-\uu\cdot\xxi,\xxi)$ supplies the mixed time
entries, with zero pressure entries. The matrix $\matS$ in
\eqref{eq:s4-7} gives the transverse part of the response to this
complete compatible increment.

For an adjoint symmetry $Q=(\LL,\Lambda^0)$ of order at most $N\ge1$,
take $q=Q$ and $n=N$. Kernel preservation
\eqref{eq:core-characteristic-restriction} makes its scalar transverse
response zero, while Theorem~\ref{thm:harmonic} makes its full harmonic
response zero. Lemma~\ref{lem:descent} therefore gives
$\ord\Lambda^0\le N-1$. By Lemma~\ref{lem:spanning}, the matrices
$\matS_N(\xxi)$ for all real $\xxi\ne0$ contain all remaining
order-$N$ information about $Q$. If they vanish, the order decreases.

The expected response is suggested by helicity. Define
\begin{equation}\label{eq:core-vorticity-directional}
 \cD=\ww\cdot D_{\xx},\qquad \chi=\ww\cdot\xxi.
\end{equation}
Directly from \eqref{eq:direct-velocity-symbol}, for constant $c$,
\begin{equation}\label{eq:transverse-candidate-responses}
 (\ww,0)\ :\ \matS_1=\mathsfbi C_{\xxi},\qquad
 (c\cD^{N-1}\ww,0)\ :\
       \matS_N=c\chi^{N-1}\mathsfbi C_{\xxi}.
\end{equation}
Section~\ref{sec:5} proves that every adjoint symmetry has the second
response in \eqref{eq:transverse-candidate-responses}. Whether its
candidate highest term satisfies the full adjoint equations is then
tested in Section~\ref{sec:8}.

\begin{theorem}[Transport of the transverse response]\label{thm:transport}
Suppose $q=(\boldsymbol r,h)$ and its source satisfy
\eqref{eq:core-forced-hypotheses}. On Euler, their degree-$n$ responses obey
\begin{equation}\label{eq:s4-8}
 \matP\left[Y\matS-(\matA^{\Tr}\xxi)\cdot\partial_{\xxi}\matS
                  -\matA^{\Tr}\matS-\matS\matA\right]\matP
 =\matP\sigma^u_{\boldsymbol f,n}(-\uu\cdot\xxi,\xxi)\matP.
\end{equation}
Here $Y$ acts on the background arguments of $\matS$, holding $\xxi$
fixed. The source symbol is formed from an extension of order at most $n$.
\end{theorem}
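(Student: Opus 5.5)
We follow the pattern of the harmonic elimination in Theorem~\ref{thm:harmonic}, now with a transverse characteristic test. The one new feature is that the test covector must move with the velocity. Fix real $\xxi\ne0$ and choose a scalar phase $\phi$ with $D_{\xx}\phi=\xxi$ at the selected point and $D_t\phi=-\uu\cdot D_{\xx}\phi$. Let $\boldsymbol\vartheta=(D_t\phi,D_{\xx}\phi)$ be its gradient covector; it is closed, as required by \eqref{eq:core-closed-covector}. Transport along $Y$ gives $Y(D_{\xx}\phi)=-\matA^{\Tr}D_{\xx}\phi$ to leading order, since $D_{\xx}(Y\phi)=0$. This is the source of the $-(\matA^{\Tr}\xxi)\cdot\partial_{\xxi}\matS$ term: when $Y$ hits the symbol evaluated at $\boldsymbol\vartheta$, the chain rule produces exactly that derivative in the covector argument.

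Take a transverse amplitude $\boldsymbol v$ with $\boldsymbol v\perp D_{\xx}\phi$, chosen so that its transport matches the projected plane, and form the test field
\begin{equation*}
 \boldsymbol a=\mathfrak e\Bigl[\binom{\boldsymbol v}{0}+\kappa^{-1}\boldsymbol b_1\Bigr],\qquad D_\mu\mathfrak e=\kappa\vartheta_\mu\mathfrak e.
\end{equation*}
Here $\boldsymbol b_1$ is a lower-degree correction. Since $\alpha=0$ at $\boldsymbol\vartheta$, we have $\sigma_E(\binom{\boldsymbol v}{0})=0$. The first-order residual $L\binom{\boldsymbol v}{0}$ has a divergence part $\Div\boldsymbol v$ and a velocity part $Y\boldsymbol v+\matA\boldsymbol v$.

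Next choose $\boldsymbol b_1=(\boldsymbol v_1,\pi_1)$ so that $\sigma_E\boldsymbol b_1=-L\binom{\boldsymbol v}{0}$ modulo the transverse kernel. The $\xxi$-component and the scalar row are solvable through $\pi_1$ and the $\xxi$-component of $\boldsymbol v_1$, because $\sigma_E$ restricted to $\operatorname{span}(\xxi)\oplus$ pressure is invertible when $q_{\xxi}\ne0$. The projected part $\matP(Y\boldsymbol v+\matA\boldsymbol v)$ cannot be removed. Instead we keep it as the freedom by which $\boldsymbol v$ evolves; projection by $\matP$ at the end removes any dependence on this choice.

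With this choice, $L\boldsymbol a=\mathfrak e(\kappa^0\boldsymbol\rho_0+\kappa^{-1}\cdots)$, where $\boldsymbol\rho_0=(\matP(Y+\matA)\boldsymbol v,0)$. The residual term $TL\boldsymbol a$ in \eqref{eq:finite-forced-linearization} then has degree at most $n$ in $\kappa$. We need its degree-$n$ part to vanish after contraction with the transverse kernel. Since $T_n\sigma_E=\sigma_E\sigma$ by \eqref{eq:finite-kernel-preservation}, pairing the vector rows with a transverse covector $\boldsymbol w^{\Tr}\matP$ kills it. The role of the initial choice of $\boldsymbol v$ is to make $\boldsymbol\rho_0$ transverse, and this is where the construction must be arranged most carefully.

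We then expand both sides of \eqref{eq:finite-forced-linearization} at order $\kappa^n$ and project with $\matP$ on the left. The term $\ell_q(\boldsymbol a)$ has leading part $\mathfrak e\kappa^n\sigma_{q,n}(\boldsymbol\vartheta)\boldsymbol v$, and applying $M$ to it produces several contributions:
\begin{itemize}
\item at order $\kappa^{n+1}$, the factor $\sigma_E\sigma_{q,n}\boldsymbol v$, which kernel preservation controls;
\item at order $\kappa^n$, the term $Y[\matS]\boldsymbol v$, whose derivative of the covector argument gives $-(\matA^{\Tr}\xxi)\cdot\partial_{\xxi}\matS$;
\item also at order $\kappa^n$, the term $-\matA^{\Tr}\matS\boldsymbol v$;
\item the subleading symbol $\sigma_{q,n-1}$ and the correction $\boldsymbol b_1$, which enter only through the pressure and $\xxi$ directions and are removed by $\matP$ on the left.
\end{itemize}
The term $\matS Y\boldsymbol v$ combines with the $\boldsymbol\rho_0$ contribution to give $-\matS\matA\boldsymbol v$. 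The coefficient-variation term $(D_{\xx}\boldsymbol v)^{\Tr}\boldsymbol r-\ldots$ has degree at most one, so it affects only the case $n=1$. There the check is that its transverse projection vanishes: it is $\kappa\,\xxi(\boldsymbol v\cdot\boldsymbol r)$ minus a term proportional to $\boldsymbol v(\xxi\cdot\boldsymbol r)$. The first is killed by $\matP$; the second must be absorbed into the $Y$-term, because $\xxi\cdot\boldsymbol r$ appears through $Y$ acting on $\mathfrak e$. The right side contributes $\matP\sigma^u_{\boldsymbol f,n}\boldsymbol v$.

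I expect the main obstacle to be the bookkeeping at order $\kappa^n$. One must show that the contributions from $\sigma_{q,n-1}$, from the pressure block of $\sigma_{q,n}$, and from the hidden degree-$n$ part of $T L\boldsymbol a$ all cancel after the two-sided projection. The identity \eqref{eq:finite-kernel-preservation}, differentiated along the characteristic variety (which is legitimate since it holds for all $\boldsymbol\zeta$), should provide the needed relations. I would first try to verify this cancellation with $\boldsymbol v$ constant and $\phi$ linear, tracking the $\matA$ terms by hand. Finally, since $\boldsymbol v\in\xxi^\perp$ is arbitrary and $\matS=\matP\matS\matP$, the equality on $\xxi^\perp$ is exactly \eqref{eq:s4-8}.
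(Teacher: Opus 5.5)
\textbf{Gap: the unknown residual operator is not controlled.} Your architecture matches the paper's: a phase with $Y\phi=0$, so that $\alpha\equiv0$ and $Y\xxi=-\matA^{\Tr}\xxi$ produce the $\partial_{\xxi}$ term; a test field $\mathfrak e(\boldsymbol v+\kappa^{-1}\boldsymbol b_1)$; kernel preservation at degree $n+1$; and a projected comparison at degree $n$. The step that fails is the one that controls the unknown residual operator.

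You keep the degree-zero residual $\boldsymbol\rho_0=(\boldsymbol\rho,0)$ with $\boldsymbol\rho=\matP(Y+\matA)\boldsymbol v$. Then $TL\boldsymbol a$ contributes $T_n\boldsymbol\rho_0$ at degree $n$. You claim that pairing with $(\boldsymbol w^{\Tr}\matP,0)$ kills this term because $T_n\sigma_E=\sigma_E\sigma_{q,n}$. That identity only shows that this row of $T_n$ vanishes on the range of $\sigma_E$. At a transverse covector that range is spanned by $(\xxi,0)$ and $(0,1)$, and a nonzero transverse $\boldsymbol\rho_0$ lies outside it.

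In fact the contribution is nonzero. Differentiate $\sigma_E\sigma_{q,n}=T_n\sigma_E$ in $\tau$ and apply it to $\boldsymbol z=(\boldsymbol x,0)$ with $\boldsymbol x\perp\xxi$ at $\alpha=0$. This gives $T_n\boldsymbol z=\diag(\mathsfbi I,0)\,\sigma_{q,n}\boldsymbol z+\sigma_E\,\partial_\tau\sigma_{q,n}\boldsymbol z$. Hence the projected velocity part $\matP[T_n\boldsymbol z]_{\rm v}$ equals $\matS\boldsymbol x$, which is generally nonzero. For example, for $q=(\ww,0)$ the velocity block of $T_1$ is $\mathsfbi C_{\xxi}=\matS$.

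Your later sentence, that $\matS Y\boldsymbol v$ ``combines with the $\boldsymbol\rho_0$ contribution'' to give $-\matS\matA\boldsymbol v$, needs exactly this nonzero value. It therefore contradicts your earlier claim. Taken as written, with that contribution zero, your degree-$n$ identity retains $\matS\matP Y\boldsymbol v$ with $Y\boldsymbol v$ arbitrary, and that is not \eqref{eq:s4-8}. Projection by $\matP$ does not by itself remove the dependence on this choice.

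\textbf{Missing idea and two repairs.} Your premise that $\matP(Y+\matA)\boldsymbol v$ ``cannot be removed'' is wrong. The material derivative of the amplitude is free finite Taylor data, because the solved recursion with $b_*\neq0$ fixes transverse derivatives from tangential ones.

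The paper imposes $Y\boldsymbol v_0=-\matA\boldsymbol v_0+2\xxi\,\xxi^{\Tr}\matA\boldsymbol v_0/q_{\xxi}$. This preserves $\xxi^{\Tr}\boldsymbol v_0=0$, since $Y\xxi=-\matA^{\Tr}\xxi$. It also takes $\boldsymbol v_1=-\xxi\Div\boldsymbol v_0/q_{\xxi}$ and $\pi_1=-2\xxi^{\Tr}\matA\boldsymbol v_0/q_{\xxi}$. With these choices the whole degree-zero residual vanishes, so $TL\boldsymbol a$ has degree at most $n-1$ and never enters. The relation $\matS Y\boldsymbol v_0=-\matS\matA\boldsymbol v_0$ then follows from $\matS\xxi=0$.

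Alternatively, you may keep $\boldsymbol\rho_0$, but then you must actually prove $\matP[T_n(\boldsymbol x,0)]_{\rm v}=\matS\boldsymbol x$ as above. With that formula, $\matS\matP Y\boldsymbol v=\matS\boldsymbol\rho-\matS\matA\boldsymbol v$, and the two $\matS\boldsymbol\rho$ terms cancel, as you intended. Your appeal to ``differentiating along the characteristic variety'' is where this must be carried out, and it is absent.

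\textbf{The $n=1$ check.} This part is also mis-stated. For $M$ the coefficient variation is $(D_{\xx}\boldsymbol v)^{\Tr}\boldsymbol r-(\boldsymbol v\cdot D_{\xx})\boldsymbol r$, and only the first term differentiates the test field. Its degree-one part is $\kappa\mathfrak e\,\xxi(\boldsymbol v_0\cdot\boldsymbol r)$, which is parallel to $\xxi$ and removed by $\matP$. There is no transverse term $\boldsymbol v(\xxi\cdot\boldsymbol r)$ to absorb here. Absorbing such a term into the $Y$-term would not be legitimate anyway: that transverse term is what arises for $L$, and it is why the paper's symmetry argument stops at first order.
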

Equation~\eqref{eq:s4-8} constrains the response matrix
\eqref{eq:s4-7} using the adjoint equations with prescribed right-hand
sides \eqref{eq:core-forced-hypotheses}. For an adjoint symmetry
satisfying \eqref{eq:s3-adjoint-equations}, its right-hand side is zero.
The first two terms differentiate $\matS$ as the background and the
covector change. The terms $-\matA^{\Tr}\matS$ and $-\matS\matA$
account for the output operator $M$ and the input operator $L$ in
\eqref{eq:finite-LM}, respectively; the proof below derives these terms.
The more general statement permits subtraction of a highest-order
candidate: by linearity of $M$ in \eqref{eq:finite-LM},
\begin{equation}\label{eq:transverse-subtraction-source}
 MQ=0,\quad Q=Q_{\rm top}+Q_{\rm rem}
 \quad\Longrightarrow\quad MQ_{\rm rem}=-MQ_{\rm top}.
\end{equation}
The residual on the right of \eqref{eq:transverse-subtraction-source}
must meet the order bound in \eqref{eq:core-forced-hypotheses} before
\eqref{eq:s4-8} can be applied to the remainder.

\begin{proof}
Choose a phase and leading velocity amplitude with finite Taylor data
satisfying
\begin{equation}\label{eq:core-transverse-leading}
 \begin{gathered}
 Y\phi=0,\qquad\xxi=D_{\xx}\phi\ne0,\qquad
 Y\xxi=-\matA^{\Tr}\xxi,\qquad\xxi^{\Tr}\boldsymbol v_0=0,\\
 Y\boldsymbol v_0=-\matA\boldsymbol v_0
       +2\xxi\frac{\xxi^{\Tr}\matA\boldsymbol v_0}{q_{\xxi}}.
 \end{gathered}
\end{equation}
The phase and amplitude equations \eqref{eq:core-transverse-leading}
have the form used in local short-wave analysis
\cite{LifschitzHameiri1991}. For spatially linear basic flows, they
also occur in the inviscid plane-wave construction of
Craik and Criminale~\cite{CraikCriminale1986}. Here they prescribe finite Taylor data.
The covector equation follows by differentiating $Y\phi=0$. The amplitude
equation preserves $\xxi^{\Tr}\boldsymbol v_0=0$ and permits any transverse
value at the point. In the solved coordinates of \S\ref{sec:2}, these recursions are
explicitly
\begin{equation}\label{eq:transverse-solved-recursions}
 \phi_s=-\frac{\mathcal Y\phi}{b_*},\qquad
 (\boldsymbol v_0)_s=\frac1{b_*}\left[-\mathcal Y\boldsymbol v_0
      -\matA\boldsymbol v_0
      +2\xxi\frac{\xxi^{\Tr}\matA\boldsymbol v_0}{q_{\xxi}}\right].
\end{equation}
Since $b_*\ne0$, they determine the needed finite normal derivatives
from tangential data. The phase ensures that its full covector is
closed, as required by \eqref{eq:core-closed-covector}.

As in the harmonic proof, add corrections one degree lower:
\begin{equation}\label{eq:finite-transverse-test}
 \begin{gathered}
 \boldsymbol v_1=-\frac{\xxi\Div\boldsymbol v_0}{q_{\xxi}},
 \qquad \pi_1=-\frac{2\xxi^{\Tr}\matA\boldsymbol v_0}{q_{\xxi}},\\
 \boldsymbol v=\mathfrak e(\boldsymbol v_0+\kappa^{-1}\boldsymbol v_1),
 \qquad\pi=\mathfrak e\kappa^{-1}\pi_1,
 \qquad D_\mu\mathfrak e=\kappa(D_\mu\phi)\mathfrak e.
 \end{gathered}
\end{equation}
The velocity correction cancels the degree-zero divergence, and the
pressure correction cancels the degree-zero momentum residual. Direct
substitution gives
\begin{equation}\label{eq:finite-transverse-error}
 L(\boldsymbol v,\pi)=\mathfrak e\kappa^{-1}
 \binom{Y\boldsymbol v_1+\matA\boldsymbol v_1+\nabla\pi_1}
       {\Div\boldsymbol v_1}.
\end{equation}
The unknown residual term in \eqref{eq:finite-forced-linearization}
therefore has degree at most $n-1$ after removal of $\mathfrak e$.

Write the leading output as
\begin{equation}\label{eq:core-transverse-output}
 \ell_q(\boldsymbol v,\pi)=\mathfrak e\left[
 \kappa^n\binom{\boldsymbol b}{e}
 +\kappa^{n-1}\binom{\boldsymbol b_1}{e_1}+\cdots\right].
\end{equation}
The pressure input is one degree lower, so
$\boldsymbol b=\sigma^u_{\boldsymbol r,n}(-\uu\cdot\xxi,\xxi)
\boldsymbol v_0$. At degree $n+1$, the determining identity gives
\begin{equation}\label{eq:core-transverse-output-constraints}
 \xxi^{\Tr}\boldsymbol b=0,\qquad e=0,\qquad
 \boldsymbol b=\matS\boldsymbol v_0.
\end{equation}
Thus the leading output is transverse throughout the construction.
At degree $n$, projection removes the unknown scalar contribution
$\xxi e_1$. The coefficient-variation term is lower degree for $n>1$;
for $n=1$ its degree-one part is also parallel to $\xxi$. We obtain
\begin{equation}\label{eq:core-transverse-projected}
 \matP\{Y(\matS\boldsymbol v_0)-\matA^{\Tr}\matS\boldsymbol v_0\}
 =\matP\sigma^u_{\boldsymbol f,n}(-\uu\cdot\xxi,\xxi)\boldsymbol v_0.
\end{equation}
The chain rule, the covector equation, and $\matS\xxi=0$ give
\begin{equation}\label{eq:core-transverse-product}
 Y(\matS\boldsymbol v_0)
 =\left[Y\matS-(\matA^{\Tr}\xxi)\cdot\partial_{\xxi}\matS
                         -\matS\matA\right]\boldsymbol v_0.
\end{equation}
Since the transverse value of $\boldsymbol v_0$ is arbitrary,
\eqref{eq:core-transverse-projected} and
\eqref{eq:core-transverse-product} prove \eqref{eq:s4-8}.

The divergence source $g$ in \eqref{eq:core-forced-hypotheses}
determines the longitudinal part of $\boldsymbol b_1$ in
\eqref{eq:core-transverse-output}. At degree $n$, that vector enters
momentum multiplied by $\alpha=0$, and projection removes the scalar
term $\xxi e_1$. Hence $g$ does not enter \eqref{eq:s4-8}.
\end{proof}

\section{Classification of the transverse response}\label{sec:5}

The harmonic identity \eqref{eq:finite-harmonic-transport} removed
field dependence through the directional comparison
\eqref{eq:core-directional-descent}. In the transverse identity
\eqref{eq:s4-8}, the material derivative $Y$ has symbol $\alpha$,
which vanishes on the transverse family
\eqref{eq:s2-characteristic-families}. We instead compare free
pressure and velocity data. These comparisons give a covariance
equation: an equation describing how the response changes under
specified changes of its arguments. We first establish the consequences
of that equation, then apply them to the transport identity.

Write $j_{\xx}^m\ww$ for the spatial vorticity derivatives through
order $m$, subject to $\Div\ww=0$ and its differentiated consequences.

\subsection{Independent Taylor data}

At a fixed spacetime point, let $\boldsymbol z$ denote spatial
displacement. The polynomials below represent finite Taylor data at
that point; all multi-indices in this proof are spatial. Put
$\uu_q=D_t^q\uu$ and $p_q=D_t^qp$. Divergence of momentum and its time
derivatives give
\begin{equation}\label{eq:s5-euler-coordinate-recursion}
 \begin{aligned}
 p_q&=p^{\rm h}_q-\Delta^{-1}\sum_{j=0}^q\binom qj
        \tr\left[(D_{\boldsymbol z}\uu_j)(D_{\boldsymbol z}\uu_{q-j})\right],\\
 \uu_{q+1}&=-\sum_{j=0}^q\binom qj
         (\uu_j\cdot D_{\boldsymbol z})\uu_{q-j}-D_{\boldsymbol z}p_q,
 \qquad D_tp^{\rm h}_q=p^{\rm h}_{q+1}.
 \end{aligned}
\end{equation}
Here $p^{\rm h}_q$ is an arbitrary harmonic polynomial and
$\Delta^{-1}$ is a fixed linear right inverse of the polynomial
Laplacian, raising homogeneous degree by two. Such an inverse exists:
for the polynomial inner product
$\langle\boldsymbol z^I,\boldsymbol z^J\rangle=I!\,\delta_{IJ}$,
the adjoint of the Laplacian is multiplication by
$|\boldsymbol z|^2$, which is injective. Here $I!=I_1!I_2!I_3!$,
and $\delta_{IJ}$ is one for equal indices and zero otherwise.

The independent data are the divergence-free spatial velocity
polynomial $\uu_0$ and the harmonic polynomials $p^{\rm h}_q$.
At each time level the pressure equation makes the next velocity
polynomial divergence-free, and the recursion reconstructs all time
derivatives. Level $q$ uses pressure levels at most $q$; spatial
differentiation preserves that bound. The last identity in
\eqref{eq:s5-euler-coordinate-recursion} follows because the chosen
polynomial inverse commutes with time differentiation.

A harmonic gradient is the gradient of a harmonic scalar polynomial.
Such gradients are exactly the curl-free divergence-free velocity
polynomials. A curl-free homogeneous vector polynomial $\boldsymbol v$
of degree $r$ is the gradient of
$\boldsymbol z\cdot\boldsymbol v/(r+1)$, whose Laplacian is zero when
$\Div\boldsymbol v=0$. For $r\ge1$, curl reaches every divergence-free
homogeneous polynomial $\boldsymbol w$ of degree $r-1$, since
\begin{equation}\label{eq:s5-curl-exact-sequence}
 \boldsymbol v=-\frac{\boldsymbol z\times\boldsymbol w}{r+1}
                    +\nabla_{\boldsymbol z}\varphi,\qquad
 \Delta\varphi=\frac{\Div_{\boldsymbol z}(\boldsymbol z\times\boldsymbol w)}{r+1}
 \quad\Longrightarrow\quad
 \curl_{\boldsymbol z}\boldsymbol v=\boldsymbol w,\quad
 \Div_{\boldsymbol z}\boldsymbol v=0.
\end{equation}
\subsection{Covariance on vorticity jets}

Consider a transverse matrix of the form
\begin{equation}\label{eq:s5-vorticity-factorization}
 \matS=\matS(t,j_{\xx}^m\ww,\xxi),\qquad \matS=\matP\matS\matP.
\end{equation}
The vorticity jet ranges over an open product neighbourhood in the
divergence-free Taylor data, and $\xxi$ ranges over all real nonzero
covectors. The following operator describes simultaneous changes of
the vorticity arguments, the covector and the matrix.

For any divergence-free polynomial $\boldsymbol g$ vanishing at zero,
define
\begin{equation}\label{eq:s5-finite-jet-control}
 \begin{aligned}
 \mathscr L_{\boldsymbol g}\matS
 ={}&\mathrm d_{\ww}\matS
       \left[(\ww\cdot D_{\boldsymbol z})\boldsymbol g
                   -(\boldsymbol g\cdot D_{\boldsymbol z})\ww\right]\\
 &-(\mathsfbi G^{\Tr}\xxi)\cdot\partial_{\xxi}\matS
                 -\mathsfbi G\matS-\matS\mathsfbi G^{\Tr},
 \qquad \mathsfbi G=D_{\boldsymbol z}\boldsymbol g(0).
 \end{aligned}
\end{equation}
The first term is the ordinary differential in the vorticity Taylor
coefficients, applied to the Taylor coefficients of the vector field
in brackets; time and $\xxi$ are fixed. That vector field is
divergence-free. Since $\boldsymbol g(0)=0$, its derivatives through
order $m$ use only the vorticity derivatives through order $m$.

\begin{lemma}[Consequences of harmonic-gradient covariance]\label{lem:covariance}
Let $\matS$ in \eqref{eq:s5-vorticity-factorization} be smooth on
the stated neighbourhood and satisfy, throughout it,
\begin{equation}\label{eq:s5-Dh}
 \mathscr L_{\boldsymbol g}\matS=0
 \qquad\text{for every harmonic gradient }\boldsymbol g
                  \text{ with }\boldsymbol g(0)=0.
\end{equation}
Then $\matS$ depends only on $(t,\ww,\xxi)$, is skew-symmetric,
and satisfies, for every constant trace-free matrix $\mathsfbi G$,
\begin{equation}\label{eq:s5-linear-controls}
 \left[(\mathsfbi G\ww)\cdot\partial_{\ww}
  -(\mathsfbi G^{\Tr}\xxi)\cdot\partial_{\xxi}\right]\matS
 =\mathsfbi G\matS+\matS\mathsfbi G^{\Tr}.
\end{equation}

If $\matS$ is also homogeneous of integer degree $k\ge1$ under
every nonzero real rescaling of $\xxi$, then
\begin{equation}\label{eq:covariance-scalar-form}
 \matS=\beta\mathsfbi C_{\xxi},\qquad
 \beta=c(t)\chi^{k-1},\qquad \chi=\ww\cdot\xxi,
\end{equation}
where $c$ is a smooth function of time.
\end{lemma}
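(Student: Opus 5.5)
The plan has four steps: kill higher vorticity-jet dependence by descent on the jet order, extract the linear identity and skew-symmetry from degree-one controls, and finally solve the resulting covariance equation by invariant theory.

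\emph{Step 1: removing higher vorticity derivatives.} Grade the controls by polynomial degree. For a homogeneous harmonic gradient $\boldsymbol g$ of degree $r\ge2$ we have $\mathsfbi G=D_{\boldsymbol z}\boldsymbol g(0)=0$, so \eqref{eq:s5-Dh} reduces to
\[
\mathrm d_{\ww}\matS\bigl[(\ww\cdot D_{\boldsymbol z})\boldsymbol g-(\boldsymbol g\cdot D_{\boldsymbol z})\ww\bigr]=0 .
\]
Suppose $\matS$ genuinely depends on the top Taylor level $j^m$ of the vorticity. Take $\boldsymbol g$ of degree $m+2$. Then $(\ww\cdot D_{\boldsymbol z})\boldsymbol g$ has degree $m+1$, which $\matS$ does not see. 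The term $(\boldsymbol g\cdot D_{\boldsymbol z})\ww$ has lowest degree $m+1$ as well, so this control is invisible. Lowering to $\deg\boldsymbol g=m+1$, the first term is homogeneous of degree $m$ with leading coefficient $\ww(0)$. The second term is of degree $\ge m$ and equals $(\boldsymbol g\cdot D_{\boldsymbol z})$ applied to the linear part of $\ww$, which is a lower jet entry.

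The key claim is that, as $\boldsymbol g$ ranges over harmonic gradients of degree $m+1$, the vectors $(\ww(0)\cdot D_{\boldsymbol z})\boldsymbol g$ span all divergence-free homogeneous polynomials of degree $m$ whenever $\ww(0)\neq0$. This follows from the curl/harmonic-gradient decomposition \eqref{eq:s5-curl-exact-sequence}: directional derivatives of harmonic polynomials exhaust the harmonic polynomials, and with the contributions from lower jets one should fill the whole space. I expect this span argument to be the main obstacle. The first-order correction $(\boldsymbol g\cdot D)\ww$ couples to lower data, so I would organise an induction from the top level downward, each time using the level-$(m+1)$ controls and the already known independence from the higher levels. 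If the pure harmonic span falls short, I would combine controls of several degrees.

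\emph{Step 2: reduction to $(t,\ww,\xxi)$.} Once the top derivative is shown not to enter, iterate down to the linear vorticity jet. At that level the same argument with quadratic harmonic gradients removes dependence on $D\ww$. The conclusion holds on the open set $\ww\neq0$ and extends to the rest of the neighbourhood by continuity and smoothness.

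\emph{Step 3: linear controls and skew-symmetry.} With $\matS=\matS(t,\ww,\xxi)$, take linear harmonic gradients $\boldsymbol g=\mathsfbi G\boldsymbol z$. Here $\mathsfbi G$ is symmetric and trace-free, and the bracket evaluates to $\mathsfbi G\ww$ at zero, which yields \eqref{eq:s5-linear-controls} for symmetric trace-free $\mathsfbi G$. For skew $\mathsfbi G$ there is no control from \eqref{eq:s5-Dh}. Instead I would derive \eqref{eq:s5-linear-controls} from the commutator of two symmetric controls: it is a first-order operator identity, and commutators of symmetric trace-free matrices generate $\mathfrak{so}(3)$, hence all of $\mathfrak{sl}(3)$.

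For skew-symmetry, apply the controls at a fixed $\xxi$ with $\mathsfbi G^{\Tr}\xxi=0$. These $\mathsfbi G$ act on $\xxi^\perp$ and on $\matS$ by congruence. Splitting $\matS=\matS_{\rm sym}+\matS_{\rm skew}$ and using homogeneity in $\ww$ along suitable $\mathsfbi G$ should force the symmetric part to zero; I expect that step to be handled by the scalar reduction below.

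\emph{Step 4: solving the covariance equation.} Write $\matS=\beta\mathsfbi C_{\xxi}+(\text{symmetric transverse part})$. Since $\mathsfbi G\mathsfbi C_{\xxi}+\mathsfbi C_{\xxi}\mathsfbi G^{\Tr}=(\tr\mathsfbi G)\mathsfbi C_{\xxi}-\mathsfbi C_{\mathsfbi G^{\Tr}\xxi}\cdot(\dots)$ by the cofactor identity, the skew part transforms as a density, with no matrix source when $\tr\mathsfbi G=0$. Equation \eqref{eq:s5-linear-controls} then says that $\beta$ is an $\mathrm{SL}(3)$-invariant of the pair (vector $\ww$, covector $\xxi$). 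The only such invariant is $\chi=\ww\cdot\xxi$, because the group acts with an open orbit on $\{\chi=\mathrm{const}\neq0\}$. Hence $\beta$ is a function of $(t,\chi)$.

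The symmetric part would be an invariant symmetric bilinear form on $\xxi^\perp$ built from $\ww$ and $\xxi$. Scaling $\mathsfbi G=\diag$-type controls that fix $\chi$ but stretch $\xxi^\perp$ with unequal weights rule it out, which also proves the claimed skew-symmetry. Finally, $\mathsfbi C_{\xxi}$ has degree one in $\xxi$, so homogeneity of degree $k$ gives $\beta(t,\chi)$ homogeneous of degree $k-1$ in $\chi$. For an integer $k\ge1$ and $\beta$ smooth at $\chi=0$, this forces $\beta=c(t)\chi^{k-1}$.
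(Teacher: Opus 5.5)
\textbf{The gap is in your Step 1, and it carries over to Step 2.} Take $\boldsymbol g=\nabla\phi$, a harmonic gradient homogeneous of degree $m+1$. The term $(\boldsymbol g\cdot D_{\boldsymbol z})\ww$ has degree at least $m+1$, not $m$, so $\matS$ never sees it. The entire increment of the top vorticity level is therefore $(\ww(0)\cdot D_{\boldsymbol z})\boldsymbol g=\nabla(\ww(0)\cdot\nabla\phi)$, which is again a harmonic gradient.

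Consequently your key span claim is false: these controls move the top level only in curl-free directions. Already at $m=1$ they reach the five-dimensional symmetric trace-free part of $D\ww(0)$, but not its skew part $\curl\ww(0)$, among eight directions. There are no lower-jet contributions at this degree to make up the difference. Linear combinations of controls at a point do not help either, since $\mathscr L_{\boldsymbol g}$ is linear in $\boldsymbol g$ and any combination is again a harmonic-gradient control. As written, Steps 1--2 cannot exclude, for example, dependence on $\curl\ww$.

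\textbf{The missing idea is one you already use in Step 3, but only for linear controls.} Because \eqref{eq:s5-Dh} holds on an open set, commutators of controls also annihilate $\matS$. Moreover $[\mathscr L_{\boldsymbol g},\mathscr L_{\boldsymbol g'}]=\mathscr L_{[\boldsymbol g,\boldsymbol g']}$, where $[\boldsymbol g,\boldsymbol g']=(\boldsymbol g\cdot D_{\boldsymbol z})\boldsymbol g'-(\boldsymbol g'\cdot D_{\boldsymbol z})\boldsymbol g$, and this bracket leaves the class of gradients.

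The paper brackets null plane waves $\boldsymbol h_{\boldsymbol k,p}=\boldsymbol k(\boldsymbol k\cdot\boldsymbol z)^p/p!$ with $\boldsymbol k\cdot\boldsymbol k=0$. Take $\boldsymbol k,\boldsymbol k'=(\boldsymbol l\mp\boldsymbol a)/2$ with $\boldsymbol a=\mathrm i|\boldsymbol l|\boldsymbol e$. Then $\sum_{p=1}^r[\boldsymbol h_{\boldsymbol k,p},\boldsymbol h_{\boldsymbol k',r+1-p}]$ yields controls for every transverse wave $\boldsymbol a(\boldsymbol l\cdot\boldsymbol z)^r/r!$ with $\boldsymbol a\perp\boldsymbol l$; these waves are divergence-free but not curl-free. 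A degree-$(m+1)$ transverse wave shifts the top vorticity level by $(\ww\cdot\boldsymbol l)\,\boldsymbol a\,\boldsymbol l^I$. For $\ww\neq0$ these shifts span the whole divergence-free fibre, which is what removes the top-level dependence. The linear transverse waves also give \eqref{eq:s5-linear-controls} for all trace-free $\mathsfbi G$ at once, which is equivalent to your commutator of symmetric controls.

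\textbf{Steps 3--4 then essentially follow the paper's route.} The controls stabilising $\ww$ and $\xxi$ act on $\xxi^\perp$ and force skew-symmetry, and \eqref{eq:cross-product-covariance} reduces what remains to invariance of $\beta$. One caution: $\ww$ ranges only over a neighbourhood, so a global $\mathrm{SL}(3)$-orbit argument is not available. Use the local version instead:
\begin{itemize}
\item controls with $\mathsfbi G\ww=0$ give $\chi\,\partial\beta=0$ along $\xxi$-directions orthogonal to $\ww$;
\item connectedness of the punctured planes $\chi=\mathrm{const}$ then gives $\beta=F(t,\ww,\chi)$;
\item homogeneity gives $\beta=c(t,\ww)\chi^{k-1}$;
\item the remaining controls remove the dependence on $\ww$.
\end{itemize}
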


\begin{proof}
\emph{1. Divergence-free polynomial tests.}
Commuting two equations \eqref{eq:s5-Dh} supplies further equations.
The product rule gives
\begin{equation}\label{eq:core-bracket-definitions}
 \begin{gathered}
 [\boldsymbol g,\boldsymbol g']
   =(\boldsymbol g\cdot D_{\boldsymbol z})\boldsymbol g'
      -(\boldsymbol g'\cdot D_{\boldsymbol z})\boldsymbol g,\\
 [\mathscr L_{\boldsymbol g},\mathscr L_{\boldsymbol g'}]
       =\mathscr L_{[\boldsymbol g,\boldsymbol g']},\qquad
 D_{\boldsymbol z}[\boldsymbol g,\boldsymbol g'](0)
       =\mathsfbi G'\mathsfbi G-\mathsfbi G\mathsfbi G'.
 \end{gathered}
\end{equation}
The operator bracket is the commutator, and
$\mathsfbi G'=D_{\boldsymbol z}\boldsymbol g'(0)$.

Use the spatial covector powers of \eqref{eq:s2-characteristic-increments}:
\begin{equation}\label{eq:pw-waves}
 \boldsymbol h_{\boldsymbol k,r}
     =\boldsymbol k\frac{(\boldsymbol k\cdot\boldsymbol z)^r}{r!},
 \qquad \boldsymbol k\in\C^3,\quad
 \boldsymbol k\cdot\boldsymbol k=0,\quad r\ge1.
\end{equation}
They are harmonic gradients. Given real $\boldsymbol l\ne0$ and a
real unit vector $\boldsymbol e$ perpendicular to it, put
\begin{equation}\label{eq:pw-null-decomposition}
 \boldsymbol a=\mathrm i|\boldsymbol l|\boldsymbol e,\qquad
 \boldsymbol k=\frac{\boldsymbol l-\boldsymbol a}{2},\qquad
 \boldsymbol k'=\frac{\boldsymbol l+\boldsymbol a}{2}.
\end{equation}
Both covectors are null and their inner product is
$|\boldsymbol l|^2/2$. The binomial theorem gives the finite identity
\begin{equation}\label{eq:pw-bracket}
 \sum_{p=1}^r
 [\boldsymbol h_{\boldsymbol k,p},\boldsymbol h_{\boldsymbol k',r+1-p}]
 =(\boldsymbol k\cdot\boldsymbol k')
 \left\{\boldsymbol a\frac{(\boldsymbol l\cdot\boldsymbol z)^r}{r!}
            -\boldsymbol h_{\boldsymbol k',r}
            +\boldsymbol h_{\boldsymbol k,r}\right\}.
\end{equation}
The last two terms satisfy \eqref{eq:s5-Dh}, as does the left-hand
side by \eqref{eq:core-bracket-definitions}. Two independent choices
of $\boldsymbol e$ and linear combination therefore give
\eqref{eq:s5-Dh} for every test
\begin{equation}\label{eq:transverse-polynomial-tests}
 \boldsymbol g=\boldsymbol a\frac{(\boldsymbol l\cdot\boldsymbol z)^r}{r!},
 \qquad \boldsymbol l\in\R^3\setminus\{0\},\quad
 \boldsymbol a\cdot\boldsymbol l=0.
\end{equation}
These have the spatial amplitudes of the transverse family in
\eqref{eq:s2-characteristic-families}. Complex coefficients combine
identities at real background jets. Degrees above $m+1$ act trivially
in \eqref{eq:s5-finite-jet-control}; a bracket of homogeneous fields
of degrees $p,q\ge1$ has degree $p+q-1$.

\emph{2. Removal of vorticity derivatives.}
Suppose $m\ge1$ in \eqref{eq:s5-vorticity-factorization}. A test
\eqref{eq:transverse-polynomial-tests} of degree $m+1$ has zero
linear part. Its vorticity increment fixes all derivatives below
order $m$ and satisfies
\begin{equation}\label{eq:core-top-vorticity-translation}
 D_I\left[(\ww\cdot D_{\boldsymbol z})\boldsymbol g
                  -(\boldsymbol g\cdot D_{\boldsymbol z})\ww\right](0)
       =(\ww\cdot\boldsymbol l)\,\boldsymbol a\,\boldsymbol l^I,
 \qquad |I|=m.
\end{equation}
For $\ww\ne0$, these increments span the entire divergence-free
vorticity fibre. Indeed, spatial curl maps $g_{m+1}$ onto that fibre
by \eqref{eq:s5-curl-exact-sequence} and the free spatial data in
\eqref{eq:s5-euler-coordinate-recursion}. Apply this map to the
spanning families of Lemma~\ref{lem:spanning}. Harmonic increments
have zero curl. By \eqref{eq:transverse-spatial-patterns}, transverse
increments give amplitudes $\boldsymbol l\times\boldsymbol v$,
which range over $\boldsymbol l^\perp$. Restricting to
$\ww\cdot\boldsymbol l\ne0$ preserves the span: an annihilating
functional gives polynomial identities in $\boldsymbol l$, extending
from this open set to every covector.
Equation \eqref{eq:s5-Dh} therefore makes the highest
vorticity differential of $\matS$ zero. Continuity covers $\ww=0$.
As in Lemma~\ref{lem:descent}, integration on a smaller convex fibre
and repetition remove every positive-order vorticity derivative.
Hence $\matS=\matS(t,\ww,\xxi)$.

\emph{3. Linear tests and matrix form.}
The linear tests \eqref{eq:transverse-polynomial-tests} have matrices
$\mathsfbi G=\boldsymbol a\boldsymbol l^{\Tr}$ of zero trace.
Their span contains every off-diagonal elementary matrix and every
diagonal difference, hence all trace-free matrices. The covariance
equation \eqref{eq:s5-Dh} becomes \eqref{eq:s5-linear-controls}.

Suppose $\ww\cdot\xxi\ne0$. Complete a basis
$\boldsymbol b_1,\boldsymbol b_2$ of $\xxi^\perp$ by $\ww$ and write
\begin{equation}\label{eq:transverse-block-decomposition}
 \matS=\sum_{i,j=1}^2Z_{ij}\boldsymbol b_i\boldsymbol b_j^{\Tr}.
\end{equation}
In \eqref{eq:s5-linear-controls}, take $\mathsfbi G\ww=0$ and let
$\mathsfbi G$ act on that plane by any trace-free two-by-two matrix
$\mathsfbi H$. Then $\mathsfbi G^{\Tr}\xxi=0$ and all derivative
terms disappear, leaving
\begin{equation}\label{eq:core-stabilizer-block}
 \mathsfbi H\mathsfbi Z+\mathsfbi Z\mathsfbi H^{\Tr}=0.
\end{equation}
Taking $\mathsfbi H=\diag(1,-1)$ makes both diagonal entries of
$\mathsfbi Z$ zero. Taking the matrix with its only nonzero entry
equal to one in position $(1,2)$ makes the off-diagonal entries
opposite. Thus $\matS$ is skew-symmetric, also where
$\ww\cdot\xxi=0$ by continuity.

A skew three-by-three matrix is a cross-product matrix.
The condition $\matS\xxi=0$ makes its vector parallel to $\xxi$, so
\begin{equation}\label{eq:s6-1}
 \matS=\beta\mathsfbi C_{\xxi}.
\end{equation}
The scalar $\beta$ is smooth for $\xxi\ne0$. The identity
\begin{equation}\label{eq:cross-product-covariance}
 \mathsfbi G\mathsfbi C_{\xxi}+\mathsfbi C_{\xxi}\mathsfbi G^{\Tr}
       =-\mathsfbi C_{\mathsfbi G^{\Tr}\xxi}
 \qquad(\tr\mathsfbi G=0)
\end{equation}
reduces \eqref{eq:s5-linear-controls} to
\begin{equation}\label{eq:core-beta-invariance}
 \left[(\mathsfbi G\ww)\cdot\partial_{\ww}
  -(\mathsfbi G^{\Tr}\xxi)\cdot\partial_{\xxi}\right]\beta=0.
\end{equation}

\emph{4. Homogeneity.} Fix $\ww\ne0$. In \eqref{eq:core-beta-invariance}, take
$\mathsfbi G\ww=0$. After rotating $\ww$ to be parallel to
$\boldsymbol e_3$, the entries $G_{31},G_{32}$ give
\begin{equation}\label{eq:scalar-transverse-derivatives}
 \chi\,\partial_{\xi_1}\beta=0,\qquad
 \chi\,\partial_{\xi_2}\beta=0.
\end{equation}
Continuity covers $\chi=0$. The domain in
\eqref{eq:s5-vorticity-factorization} defines $\beta$ for every real
$\xxi\ne0$ at each background jet.
The planes of fixed $\chi$, with the origin omitted when necessary,
are connected. Hence $\beta=F(t,\ww,\chi)$, smooth also at
$\chi=0$: evaluate it along a line of covectors perpendicular to
the plane that never passes through the origin.

Assume the stated degree-$k$ homogeneity. Since
$\mathsfbi C_{\xxi}$ has degree one, $\beta$ has degree $k-1$.
Thus
\begin{equation}\label{eq:scalar-homogeneity}
 F(t,\ww,\lambda\chi)=\lambda^{k-1}F(t,\ww,\chi)
 \quad(\lambda\in\R\setminus\{0\}),\qquad
 \beta=c(t,\ww)\chi^{k-1}.
\end{equation}
Set $\chi=1$ to obtain the second formula at nonzero arguments,
then use continuity at zero. Equation \eqref{eq:core-beta-invariance}
removes the vorticity dependence of $c$, since $\mathsfbi G\ww$
ranges over every vector. A small punctured vorticity ball is
connected, and smoothness extends the resulting function $c(t)$
across $\ww=0$.

This proves \eqref{eq:covariance-scalar-form}.
\end{proof}

\subsection{Application to the transport equation}

We now obtain the covariance hypothesis of Lemma~\ref{lem:covariance}
from the transverse transport equation. The source is retained for
the remainder equation \eqref{eq:transverse-subtraction-source}.

\begin{theorem}[Dependence of a transported response]\label{thm:dependence}
Let $\matS=\matP\matS\matP$ depend smoothly on $(t,\xx)$ and finitely
many compatible Euler derivatives, and be defined smoothly for every
real $\xxi\ne0$ over the same jet neighbourhood. Suppose
\begin{equation}\label{eq:s5-forced-equation}
 \matP\left[Y\matS-(\matA^{\Tr}\xxi)\cdot\partial_{\xxi}\matS
                 -\matA^{\Tr}\matS-\matS\matA\right]\matP
 =\mathsfbi R(t,j_{\xx}^{\ell}\ww,\xxi).
\end{equation}
Then
\begin{equation}\label{eq:core-dependence-conclusion}
 \matS=\matS(t,\ww,\xxi),\qquad \matS^{\Tr}=-\matS,\qquad
 \matP\left[\partial_t\matS+\mathsfbi C_{\ww}\matS
        -\matS\mathsfbi C_{\ww}\right]\matP=\mathsfbi R.
\end{equation}
It also satisfies the covariance identity \eqref{eq:s5-linear-controls}.
The source may contain vorticity derivatives of any finite order.
\end{theorem}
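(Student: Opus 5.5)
The plan is to rewrite the transport equation \eqref{eq:s5-forced-equation} in the independent Taylor data of \S5.1 and compare coefficients of free data that appear on the left but not in $\mathsfbi R$.

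\emph{Free coordinates.} At a fixed point, express $\matS$ through $(t,\xx)$, the divergence-free spatial velocity polynomial $\uu_0$, and the harmonic pressure polynomials $p^{\rm h}_q$. The recursion \eqref{eq:s5-euler-coordinate-recursion} makes this a smooth change of coordinates on compatible jets. The source $\mathsfbi R$ depends only on $t$, $\xxi$ and $j^\ell_{\xx}\ww$, which is a function of $\uu_0$ alone.

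\emph{Removing pressure and time-derivative data.} Let $Q$ be the largest pressure level on which $\matS$ depends. Applying $D_t$ raises every level by one; in particular $D_tp^{\rm h}_Q=p^{\rm h}_{Q+1}$. The entry $p^{\rm h}_{Q+1}$ therefore enters $Y\matS$ only through the term $\partial\matS/\partial p^{\rm h}_Q\cdot p^{\rm h}_{Q+1}$. The terms in $\matA$, in $\partial_{\xxi}\matS$ and in $\mathsfbi R$ do not contain it.
\begin{itemize}
\item Differentiating \eqref{eq:s5-forced-equation} in the free entries of $p^{\rm h}_{Q+1}$ gives $\matP\,\partial\matS/\partial p^{\rm h}_Q\,\matP=0$. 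Since $\matS=\matP\matS\matP$, this derivative is zero.
\item Downward induction on $Q$ then removes every pressure level, so $\matS=\matS(t,\xx,\uu_0,\xxi)$.
\item Now $Y\matS$ contains $\mathrm d_{\uu}\matS[D_t\uu_0]$, and $D_t\uu_0=\uu_1$ contains $-\nabla p^{\rm h}_0$, an arbitrary harmonic gradient including its constant term. No other term contains $p^{\rm h}_0$.
\item Varying $p^{\rm h}_0$ therefore shows that $\matS$ is invariant under adding any harmonic gradient to $\uu_0$. By \eqref{eq:s5-curl-exact-sequence}, $\uu_0$ is the sum of a harmonic gradient and a part fixed by $j_{\xx}\ww$. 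Hence $\matS=\matS(t,\xx,j^m_{\xx}\ww,\xxi)$.
\end{itemize}

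\emph{Position and the covariance equation.} With this dependence, write
\[
Y\matS=\partial_t\matS+\uu(0)\cdot\partial_{\xx}\matS
+\mathrm d_{\ww}\matS\bigl[(\ww\cdot D_{\boldsymbol z})\uu-((\uu-\uu(0))\cdot D_{\boldsymbol z})\ww\bigr].
\]
Here the vorticity equation has been used, and its advective term $\uu(0)\cdot D_{\boldsymbol z}$ cancels against the transport in $Y$.
\begin{itemize}
\item The value $\uu(0)$ is free with the vorticity jet fixed, and it now appears only in $\uu(0)\cdot\partial_{\xx}\matS$. Hence $\partial_{\xx}\matS=0$ after projection, and so identically.
\item Next decompose $\uu-\uu(0)=\boldsymbol g+\uu_{\rm rot}$. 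Here $\boldsymbol g$ is a harmonic gradient with $\boldsymbol g(0)=0$, and $\uu_{\rm rot}$ is the fixed rotational reconstruction \eqref{eq:s5-curl-exact-sequence}. Correspondingly $\matA=\mathsfbi G+\matA_{\rm rot}$ with $\mathsfbi G=D\boldsymbol g(0)$ symmetric and trace-free.
\item The left side of \eqref{eq:s5-forced-equation} is affine in the free datum $\boldsymbol g$, and $\mathsfbi R$ does not see it. Its $\boldsymbol g$-linear part is exactly $\matP(\mathscr L_{\boldsymbol g}\matS)\matP$, using $\mathsfbi G^{\Tr}=\mathsfbi G$ for the $\matA^{\Tr}$ and $\matS\matA$ terms. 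So $\matP(\mathscr L_{\boldsymbol g}\matS)\matP=0$ for every such $\boldsymbol g$.
\end{itemize}

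\emph{Main obstacle.} The hardest step is passing from this projected identity to the unprojected hypothesis \eqref{eq:s5-Dh} of Lemma~\ref{lem:covariance}. I would check it directly. Differentiate $\matS\xxi=0$ and $\xxi^{\Tr}\matS=0$ along the field $-(\mathsfbi G^{\Tr}\xxi)\cdot\partial_{\xxi}$. This yields
\[
\matS\mathsfbi G^{\Tr}\xxi=-\bigl[(\mathsfbi G^{\Tr}\xxi)\cdot\partial_{\xxi}\matS\bigr]\xxi,
\]
and the analogous relation on the left. These relations should show that $\mathscr L_{\boldsymbol g}\matS\,\xxi=0$ and $\xxi^{\Tr}\mathscr L_{\boldsymbol g}\matS=0$, so the projection costs nothing. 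The bracket closure \eqref{eq:core-bracket-definitions} also needs the full operator $\mathscr L_{\boldsymbol g}$, which is another reason this check matters.

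\emph{Finishing.} Lemma~\ref{lem:covariance} then gives $\matS=\matS(t,\ww,\xxi)$, skew-symmetry, and the linear controls \eqref{eq:s5-linear-controls} for trace-free $\mathsfbi G$. With $\boldsymbol g=0$, what remains of \eqref{eq:s5-forced-equation} is $\partial_t\matS$ plus the terms from the rotational part $\matA_{\rm rot}$, evaluated at the point. The antisymmetric part $\tfrac12\mathsfbi C_{\ww}$ of $\matA$ gives $+\tfrac12(\mathsfbi C_{\ww}\matS-\matS\mathsfbi C_{\ww})$ from $-\matA^{\Tr}\matS-\matS\matA$. The $\partial_{\xxi}$ term, and the induced $\mathrm d_{\ww}\matS[(\ww\cdot\nabla)\uu]$, with the skew matrix $\tfrac12\mathsfbi C_{\ww}$ (which kills $\ww$), contribute another such commutator via \eqref{eq:s5-linear-controls}. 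That identity extends to skew $\mathsfbi G$ by the same polynomial tests in Lemma~\ref{lem:covariance}, since $\tr\mathsfbi G=0$ there. The symmetric trace-free part of $\matA$ contributes terms of the same covariance form, which cancel by \eqref{eq:s5-linear-controls}. Collecting these gives $\matP[\partial_t\matS+\mathsfbi C_{\ww}\matS-\matS\mathsfbi C_{\ww}]\matP=\mathsfbi R$, which is \eqref{eq:core-dependence-conclusion}. Careful sign bookkeeping in this last step is routine but error-prone.
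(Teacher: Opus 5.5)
Your proposal is correct and follows the paper's proof essentially step by step. The shared steps are: removal of the top pressure level through $p^{\rm h}_{Q+1}$; invariance under harmonic gradients through $p^{\rm h}_0$ and the curl sequence; removal of position through the free value $\uu(0)$; the projected covariance identity from the $\boldsymbol g$-linear part; and transversality of $\mathscr L_{\boldsymbol g}\matS$, obtained by differentiating $\matS\xxi=0$ and $\xxi^{\Tr}\matS=0$ exactly as you indicate.

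Your final bookkeeping is also right, but it can be shortened. Lemma~\ref{lem:covariance} already gives \eqref{eq:s5-linear-controls} for every trace-free matrix, skew ones included, so no extension is needed. A single substitution $\mathsfbi G=\matA$, together with $Y\ww=\matA\ww$ and $\matA-\matA^{\Tr}=\mathsfbi C_{\ww}$, then yields the last identity in \eqref{eq:core-dependence-conclusion} without splitting $\matA$ into symmetric and rotational parts.
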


\begin{proof}
\emph{1. Free pressure and velocity data.}
Use the independent Taylor coordinates
\eqref{eq:s5-euler-coordinate-recursion}.
If $\matS$ depends on pressure data, take their highest time level
$q_*$. The free coefficients of $p^{\rm h}_{q_*+1}$ enter
\eqref{eq:s5-forced-equation} only through $D_t\matS$. Their
coefficients are the derivatives of $\matS$ in the corresponding
$p^{\rm h}_{q_*}$ arguments. These matrices are transverse, so
projection preserves them. They vanish, and repetition removes every
pressure argument. The remaining free harmonic pressure
$p^{\rm h}_0$ occurs only in the acceleration terms
$-D_I\nabla p$. Its coefficients make $\matS$ independent of every
harmonic-gradient increment of the velocity jet.

By \eqref{eq:s5-curl-exact-sequence}, invariance under these increments
makes the velocity dependence factor through the vorticity jet.
Constant harmonic gradients remove the velocity value as well.

Curl of Euler gives the material rates of the remaining arguments:
\begin{equation}\label{eq:core-vorticity-evolution}
 YD_I\ww=D_I\left[(\ww\cdot D_{\boldsymbol z})\uu
       -\bigl((\uu-\uu(0))\cdot D_{\boldsymbol z}\bigr)\ww\right](0),
 \qquad Y\ww=\matA\ww.
\end{equation}
These rates are independent of $\uu(0)$. Its only remaining occurrence
in \eqref{eq:s5-forced-equation} is therefore
$\uu(0)\cdot\partial_{\xx}\matS$, so varying it removes explicit
position dependence. Integrating the vanishing differentials on
smaller convex coordinate fibres gives the dependence
\eqref{eq:s5-vorticity-factorization} for some finite $m$.

\emph{2. Harmonic-gradient velocity changes.}
Vary the velocity by $\epsilon\boldsymbol g(\boldsymbol z)$, where
$\boldsymbol g$ is a harmonic gradient vanishing at zero, and
differentiate at $\epsilon=0$. Put
$\mathsfbi G=D_{\boldsymbol z}\boldsymbol g(0)$. This variation fixes
every vorticity derivative, hence $\matS$ in
\eqref{eq:s5-vorticity-factorization} and the source $\mathsfbi R$.
Its effect on the material rates \eqref{eq:core-vorticity-evolution} is
\begin{equation}\label{eq:material-rate-variation}
 \delta_{\boldsymbol g}(YD_I\ww)
 =D_I\left[(\ww\cdot D_{\boldsymbol z})\boldsymbol g
                  -(\boldsymbol g\cdot D_{\boldsymbol z})\ww\right](0).
\end{equation}

For a harmonic gradient, $\mathsfbi G$ is symmetric. Comparing its
coefficient in \eqref{eq:s5-forced-equation}, using
\eqref{eq:material-rate-variation}, gives
$\matP(\mathscr L_{\boldsymbol g}\matS)\matP=0$.
Differentiating $\matS\xxi=0$ and $\xxi^{\Tr}\matS=0$ with the
argument increments in \eqref{eq:s5-finite-jet-control} shows that
$\mathscr L_{\boldsymbol g}\matS$ is already transverse. Thus the covariance equation
\eqref{eq:s5-Dh} holds.
Lemma~\ref{lem:covariance} now makes $\matS$ independent of the
vorticity derivatives and skew-symmetric, and gives
\eqref{eq:s5-linear-controls}.

Finally use $Y\ww=\matA\ww$ in
\eqref{eq:core-vorticity-evolution} and set $\mathsfbi G=\matA$
pointwise in \eqref{eq:s5-linear-controls}. Substitution in
\eqref{eq:s5-forced-equation}, with
$\matA-\matA^{\Tr}=\mathsfbi C_{\ww}$, gives the last identity in
\eqref{eq:core-dependence-conclusion}. The pressure and harmonic-gradient comparisons left the source
in \eqref{eq:s5-forced-equation} fixed. The subsequent bracket equations
follow from the resulting source-free covariance identity
\eqref{eq:s5-Dh}, so they impose no additional restriction on the
allowed vorticity-derivative arguments of the source.
\end{proof}

For an adjoint symmetry, homogeneity determines the scalar in
\eqref{eq:s6-1}. We use the vorticity derivative $\cD$ and pairing
$\chi$ defined in \eqref{eq:core-vorticity-directional}.

\begin{corollary}[Highest-order term]\label{thm:principal}\label{prop:split}
For a local adjoint symmetry $Q$ of order at most $N\ge1$, the
transverse degree-$N$ response is
\begin{equation}\label{eq:s6-5}
 \matS_N=c\,\chi^{N-1}\mathsfbi C_{\xxi},
\end{equation}
where $c$ is locally constant. Consequently,
\begin{equation}\label{eq:s6-7}
 Q=(c\,\cD^{N-1}\ww,0)+Q_{<N},\qquad
 \ord Q_{<N}\le N-1.
\end{equation}
\end{corollary}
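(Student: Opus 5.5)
We apply Theorem~\ref{thm:transport} to $q=Q$ with $n=N$ and zero source, and then Theorem~\ref{thm:dependence} with $\mathsfbi R=0$. The response $\matS_N(\xxi)=\matP\sigma^u_{\LL,N}(-\uu\cdot\xxi,\xxi)\matP$ is defined for every real $\xxi\ne0$ and depends smoothly on a finite compatible jet, so the hypotheses are met. Theorem~\ref{thm:dependence} gives $\matS_N=\matS_N(t,\ww,\xxi)$, skew-symmetric, satisfying the covariance identity \eqref{eq:s5-linear-controls}, together with the reduced transport equation
\[
 \matP\left[\partial_t\matS_N+\mathsfbi C_{\ww}\matS_N-\matS_N\mathsfbi C_{\ww}\right]\matP=0 .
\]
The symbol $\sigma^u_{\LL,N}(\tau,\xxi)$ is a homogeneous polynomial of degree $N$ in the full covector. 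Its restriction to $\tau=-\uu\cdot\xxi$ is therefore homogeneous of degree $N$ under real rescaling of $\xxi$, and the projector $\matP$ is invariant under such rescaling. Lemma~\ref{lem:covariance} with $k=N$ then gives $\matS_N=c(t)\chi^{N-1}\mathsfbi C_{\xxi}$.

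Next we show that $c$ is constant. Substituting the formula into the reduced transport equation, we compute $\mathsfbi C_{\ww}\mathsfbi C_{\xxi}-\mathsfbi C_{\xxi}\mathsfbi C_{\ww}=\mathsfbi C_{\ww\times\xxi}$. Projected onto $\xxi^\perp$ on both sides, this cross-product matrix has vector $\ww\times\xxi\perp\xxi$, so its restriction to the plane $\xxi^\perp$ vanishes: for $\boldsymbol v,\boldsymbol v'\perp\xxi$ one has $\boldsymbol v'\cdot((\ww\times\xxi)\times\boldsymbol v)$, which is a multiple of $\xxi\cdot(\boldsymbol v\times\boldsymbol v')$-type triple products that vanish. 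If a nonzero component survives, it must be checked; we expect cancellation. The equation then reduces to $c'(t)\chi^{N-1}\mathsfbi C_{\xxi}=0$, hence $c'=0$ and $c$ is locally constant.

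For the decomposition \eqref{eq:s6-7}, we use the candidate $(c\,\cD^{N-1}\ww,0)$, whose transverse degree-$N$ response is exactly $c\chi^{N-1}\mathsfbi C_{\xxi}$ by \eqref{eq:transverse-candidate-responses}. We also need its harmonic response to vanish: $\cD^{N-1}\ww$ depends only on velocity derivatives, with symbol proportional to $\chi^{N-1}\mathsfbi C_{\xxi}$ applied to the velocity amplitude. On a harmonic amplitude $\boldsymbol v=-\xxi/\alpha$ this gives $\xxi\times\xxi=0$. Its pressure component is zero, and already $\ord\Lambda^0\le N-1$. Set $Q_{<N}=Q-(c\cD^{N-1}\ww,0)$. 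Its degree-$N$ transverse response is zero. Theorem~\ref{thm:harmonic} kills the harmonic response of $Q$, and the candidate's harmonic response vanishes as just shown, so both responses of $Q_{<N}$ vanish. Lemma~\ref{lem:descent}, applied componentwise, together with Lemma~\ref{lem:spanning}, then gives $\ord Q_{<N}\le N-1$.

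The main obstacle is the bookkeeping of what Lemma~\ref{lem:spanning} requires. The responses $\matS_N$ capture only the transverse-to-transverse block, so we must confirm that the longitudinal output component and the pressure-input column are controlled. Kernel preservation \eqref{eq:core-transverse-output-constraints} forces $\xxi^{\Tr}\boldsymbol b=0$ and $e=0$, so the output is already transverse. The harmonic family, handled via Theorem~\ref{thm:harmonic}, covers the pressure column. Together these ensure the difference has vanishing response on both spanning families.
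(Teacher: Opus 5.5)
Your proposal is correct and follows the paper's proof essentially step for step: Theorem~\ref{thm:dependence} with zero source, Lemma~\ref{lem:covariance} with $k=N$, the vanishing double projection of $[\mathsfbi C_{\ww},\mathsfbi C_{\xxi}]=\mathsfbi C_{\ww\times\xxi}$ to get $c'=0$, and descent applied to $Q-(c\,\cD^{N-1}\ww,0)$, whose harmonic response vanishes because $\mathsfbi C_{\xxi}\xxi=0$. Your hedge about a surviving component is unnecessary, and your triple-product justification is garbled but its conclusion holds: for $\boldsymbol v,\boldsymbol v'\perp\xxi$ one has $\boldsymbol v'\cdot((\ww\times\xxi)\times\boldsymbol v)=(\ww\times\xxi)\cdot(\boldsymbol v\times\boldsymbol v')=0$, since $\boldsymbol v\times\boldsymbol v'\parallel\xxi$ while $\ww\times\xxi\perp\xxi$; also, for $N\ge2$, read off $c'=0$ where $\chi\ne0$, as the paper does.
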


\begin{proof}
Apply Theorem~\ref{thm:dependence} to \eqref{eq:s4-8} with zero
source. The symbol formula \eqref{eq:s4-7} defines $\matS_N$ for
every real $\xxi\ne0$ and makes it homogeneous of degree $N$ under
every nonzero real rescaling. Lemma~\ref{lem:covariance}, with $k=N$,
therefore gives \eqref{eq:s6-5} with a coefficient $c(t)$.
The last identity in \eqref{eq:core-dependence-conclusion} now reads
\begin{equation}\label{eq:core-principal-transport}
 c'(t)\chi^{N-1}\mathsfbi C_{\xxi}
 +c\chi^{N-1}\matP[\mathsfbi C_{\ww},\mathsfbi C_{\xxi}]\matP=0.
\end{equation}
The commutator is $\mathsfbi C_{\ww\times\xxi}$ and maps the
transverse plane into the line spanned by $\xxi$. Its double
projection vanishes, so $c'=0$ where $\chi\ne0$.
To obtain the decomposition \eqref{eq:s6-7}, expand the candidate
whose response was computed in \eqref{eq:transverse-candidate-responses}.
For $N\ge2$, the product rule gives
\begin{equation}\label{eq:s6-6}
 \cD^{N-1}\ww=
 \omega^{i_1}\cdots\omega^{i_{N-1}}
 D_{i_1}\cdots D_{i_{N-1}}\ww+\boldsymbol R_{N-1},
 \qquad \ord\boldsymbol R_{N-1}\le N-1.
\end{equation}
Every other term places at most $N-2$ derivatives on any single
vorticity factor. The leading degree-$N$ velocity symbol is
$\chi^{N-1}\mathsfbi C_{\xxi}$: varying an outside vorticity factor
has order one. The pressure column and scalar row are zero, and
the harmonic response vanishes because
$\mathsfbi C_{\xxi}\xxi=0$. For $N=1$, the same statements follow
directly from linearising curl.

The explicit pair in \eqref{eq:s6-7} therefore has the transverse
response \eqref{eq:s6-5} and the zero harmonic response of
Theorem~\ref{thm:harmonic}. Lemma~\ref{lem:descent} lowers the order
of their difference.
\end{proof}

\section{Exclusion of adjoint symmetries of order greater than one}\label{sec:8}

At every order $N=d+1\ge2$, Corollary~\ref{prop:split} gives
\begin{equation}\label{eq:s8-decomposition}
 \LL=c\boldsymbol{B}_d+\boldsymbol{R},\qquad \Lambda^0=R^0,\qquad
 \boldsymbol{B}_j=\cD^j\ww,\qquad \ord(\boldsymbol{R},R^0)\le d.
\end{equation}
Here $c$ is locally constant and $\cD$ is defined in
\eqref{eq:core-vorticity-directional}. We prove that $c=0$.

\subsection{Second-order adjoint symmetries}
Write $\mathsfbi{W}$ for the spatial vorticity gradient. Since vorticity is
divergence-free, the divergence equation for \eqref{eq:s8-decomposition}
at $d=1$ becomes
\begin{equation}\label{eq:s8-order-two}
 W_{ij}=D_j\omega_i,\qquad \tr \mathsfbi{W}=0,\qquad
 0=\Div\LL=c\,\tr(\mathsfbi{W}^2)+\Div \boldsymbol{R}.
\end{equation}
Fix the first-order jet. Since $\boldsymbol{R}$ has order at most one, its
divergence is affine in all second derivatives. Curl surjectivity
\eqref{eq:s5-curl-exact-sequence} allows arbitrary trace-free $\mathsfbi{W}$.
In coordinates centred at the point, choose the divergence-free
quadratic increment
\begin{equation}\label{eq:s8-order-two-direction}
 \delta \uu=(0,0,x_1x_2),\qquad
 \delta\ww=(x_1,-x_2,0),\qquad
 \mathsfbi{H}:=\delta \mathsfbi{W}=\diag(1,-1,0).
\end{equation}
The reconstruction \eqref{eq:s5-euler-coordinate-recursion} completes
this increment affinely, for example with zero free-pressure increments.
Along the resulting line, \eqref{eq:s8-order-two} becomes
\begin{equation}\label{eq:s8-order-two-polynomial}
 c\,\tr\bigl((\mathsfbi{W}+\lambda \mathsfbi{H})^2\bigr)+a_0+\lambda a_1=0.
\end{equation}
The coefficient of $\lambda^2$ is $2c$, proving $c=0$.

\subsection{Adjoint symmetries of order at least three}
\begin{theorem}[Higher-order exclusion]\label{thm:exclusion}
If an adjoint symmetry has the form \eqref{eq:s8-decomposition} for any
integer $d\ge2$, then $c=0$.
\end{theorem}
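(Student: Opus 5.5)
The plan is to subtract the candidate $c\boldsymbol B_d$ and treat the remainder as a forced problem, in the way \eqref{eq:transverse-subtraction-source} anticipates. Put $Q_{\rm top}=(c\boldsymbol B_d,0)$ and $Q_{\rm rem}=(\boldsymbol R,R^0)$, so that $MQ_{\rm rem}=-c\,M(\boldsymbol B_d,0)$ on Euler.

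First I will check the order bound in \eqref{eq:core-forced-hypotheses} for this source with $n=d$. The divergence row is $\Div\boldsymbol B_d$. Its formally highest term $\omega^{i_1}\cdots\omega^{i_d}D_{i_1}\cdots D_{i_d}\Div\ww$ vanishes identically, so this row has joint order $d+1$. That is one too many, so the divergence row cannot simply be absorbed. The proof of Theorem~\ref{thm:transport} shows, however, that the scalar source $g$ does not enter \eqref{eq:s4-8}. I would redo that degree count allowing $\ord g\le n+1$, and expect only the longitudinal correction to change. For the momentum row I will compute $Y\boldsymbol B_d-\matA^{\Tr}\boldsymbol B_d$ using $Y\ww=\matA\ww$ and the commutator $[Y,D_i]=-A_{ji}D_j$. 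Iterating $[Y,\cD]=(\matA\ww)\cdot D_{\xx}-(\ww^{\Tr}\matA^{\Tr}D_{\xx})$ cancels the leading transport term. What remains is a sum of terms of order $d+1$ in $\uu$ whose coefficients contain the velocity gradient $\matA$, together with terms of the form $\cD^{j}\ww$ multiplied by $D^{d-j}\matA$.

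Second, I will apply Theorem~\ref{thm:transport} to $Q_{\rm rem}$ with $n=d$. This gives \eqref{eq:s4-8} for $\matS_d$ of $\boldsymbol R$, with source $\mathsfbi R=-c\,\matP\sigma^u_{\boldsymbol f,d}(-\uu\cdot\xxi,\xxi)\matP$ computed from step one. I then invoke Theorem~\ref{thm:dependence}. Its hypothesis requires the source to depend only on $(t,j^\ell_{\xx}\ww,\xxi)$, and this is exactly where I expect the contradiction to appear. The degree-$d$ velocity symbol of $Y\boldsymbol B_d-\matA^{\Tr}\boldsymbol B_d$ should contain a term such as $d\,\chi^{d-2}(\ww^{\Tr}\matA^{\Tr}\xxi)\mathsfbi C_{\xxi}$, or its symmetrised analogue, which depends on the symmetric part of $\matA$. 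That part is a harmonic-gradient datum and is not a vorticity datum.

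Rather than assume this dependence away, I would rerun step~2 of the proof of Theorem~\ref{thm:dependence} with the source retained. Take a linear harmonic gradient $\boldsymbol g$ with symmetric trace-free $\mathsfbi G$. Its coefficient in \eqref{eq:s5-forced-equation} gives $\matP(\mathscr L_{\boldsymbol g}\matS)\matP=\delta_{\boldsymbol g}\mathsfbi R$. The left-hand side has been shown to be controlled by covariance: the pressure argument in step~1 still applies, because the source contains no pressure data beyond the allowed orders. Because $\matS$ is homogeneous of degree $d$ and skew-symmetric, both sides can be reduced to scalars multiplying $\mathsfbi C_{\xxi}$. Comparing the parts of $\delta_{\boldsymbol g}\mathsfbi R$ that are not covariance images, for instance at $\chi=0$ with $\ww\ne0$ or through the $\lambda^2$ coefficient along a line of $\mathsfbi G$ as in \eqref{eq:s8-order-two-polynomial}, should produce the equation $c\,P(\ww,\xxi,\mathsfbi G)=0$ for a nonzero polynomial $P$. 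That would force $c=0$.

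The main obstacle is the explicit symbol calculation of $M(\boldsymbol B_d,0)$ at degree $d$ on transverse amplitudes, and showing that the $\mathsfbi G$-dependent piece is not of the form $\mathscr L_{\boldsymbol g}(\beta\mathsfbi C_{\xxi})$. I would try $d=2$ first and then induct using $\boldsymbol B_{d}=\cD\boldsymbol B_{d-1}$. As a fallback, I would imitate the second-order argument directly. At fixed $d$-jet, $\Div\boldsymbol R$ is affine in the order-$(d+1)$ entries, whereas the part of $\Div\boldsymbol B_d$ bilinear in $D^{d}\ww$ and $D^{d-1}\ww$ could be varied simultaneously along a line. The line would be chosen so that its $\lambda^2$ coefficient is a nonzero multiple of $c$.
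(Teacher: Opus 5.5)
Your proposal has a genuine gap where you apply Theorem~\ref{thm:transport} to $Q_{\rm rem}$ with $n=d$. The source $-cM(\boldsymbol B_d,0)$ exceeds the order bound \eqref{eq:core-forced-hypotheses} in \emph{both} rows, not only in the divergence row. The degree-$(d+1)$ velocity symbol of $\Div\boldsymbol B_d$ is $(d+1)\chi^{d-1}\xxi^{\Tr}\mathsfbi W\mathsfbi C_{\xxi}$. The degree-$(d+1)$ symbol of $Y\boldsymbol B_d-\matA^{\Tr}\boldsymbol B_d$ is the gradient symbol $\chi^d\xxi\ww^{\Tr}$. Your expectation that only the longitudinal correction changes is wrong. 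Redoing the count, the degree-$(d+1)$ comparison in the transverse test now gives $e=-c\chi^d\,\ww\cdot\boldsymbol v_0$ and $\xxi^{\Tr}\boldsymbol b=-c(d+1)\chi^{d-1}\xxi^{\Tr}\mathsfbi W\mathsfbi C_{\xxi}\boldsymbol v_0$. Both are nonzero in general, so \eqref{eq:core-transverse-output-constraints} fails. At degree $d$ the longitudinal part $\beta\xxi$ of $\boldsymbol b$ contributes $\matP[Y(\beta\xxi)-\matA^{\Tr}\beta\xxi]=-2\beta\matP\matA^{\Tr}\xxi$, which survives projection, and $\matP\nabla e$ enters as well. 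The remark that $g$ does not enter \eqref{eq:s4-8} holds only when $\ord g\le n$, where $g$ affects only $\boldsymbol b_1$. The missing step is the normalisation \eqref{eq:s8-normalization}: subtract $c\{\boldsymbol B_d-(d+1)\mathsfbi W\boldsymbol B_{d-1}\}$ from $\LL$ and add $cL_d$ to $\Lambda^0$. The factor $d+1$ cancels the divergence symbol above, and $\nabla L_d$ cancels $\chi^d\xxi\ww^{\Tr}$. Both sources then have order at most $d$, and \eqref{eq:s4-8} applies legitimately.

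The proposed contradiction is also unsupported. Skewness in Lemma~\ref{lem:covariance} comes from the source-free equation \eqref{eq:s5-Dh}, which you no longer have once $\delta_{\boldsymbol g}\mathsfbi R\ne0$. Moreover $\boldsymbol R=\check{\LL}-c(d+1)\mathsfbi W\boldsymbol B_{d-1}$, and the explicit second term has a transverse response whose symmetric part, $-c(d+1)\chi^{d-1}\sym(\matP\mathsfbi W\mathsfbi C_{\xxi}\matP)$, is not zero in general. Assuming the remainder's response is a multiple of $\mathsfbi C_{\xxi}$ therefore presupposes control of $c$. It also discards the symmetric sector, which is exactly where the obstruction lies. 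Nor does the contradiction come from $\matA$. After normalisation the projected degree-$d$ source is $-c(d+1)\chi^d\matP(\mathsfbi W+\mathsfbi W^{\Tr})\matP$ for $d\ge3$, with no $\matA$. At $d=2$ the only extra term, $-3cq_A\mathsfbi C_{\xxi}$ with $q_A=\xxi^{\Tr}(\cD\matA)\ww$, is skew. The paper takes symmetric parts, which leaves a vorticity-only source. Theorem~\ref{thm:dependence} then makes the symmetric part of the normalised response $\mathsfbi M_d$ skew, hence zero. The right-hand side must therefore vanish for every trace-free $\mathsfbi W$, and the variation \eqref{eq:s8-final-W-variation} forces $c=0$.

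Your fallback cannot replace this. $\Div\boldsymbol B_d$ distributes $d+1$ derivatives among vorticity factors, so for $d\ge2$ it is affine in the order-$(d+1)$ entries at a fixed $d$-jet. A term bilinear in $D^d\ww$ and $D^{d-1}\ww$ exists only for $d=2$. There $D\ww$ is an order-$d$ entry, on which $\Div\boldsymbol R$ depends arbitrarily, so no $\lambda^2$ coefficient isolates $c$. The second-order argument works only because $\Div\boldsymbol B_1=\tr(\mathsfbi W^2)$ is quadratic in the top entries.
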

\begin{proof}
For a real nonzero covector $\xxi$, use the notation
\begin{equation}\label{eq:s8-notation}
 \begin{gathered}
 \mathsfbi{C}=\mathsfbi{C}_{\xxi},\qquad \chi=\ww\cdot\xxi,\qquad
 \matP=\mathsfbi{I}-\frac{\xxi\xxi^{\Tr}}{\xxi^{\Tr}\xxi},\qquad
 \mathsfbi{T}=\cD \matA,\qquad z=\xxi^{\Tr}\mathsfbi{W}\ww,\\
 \matA-\matA^{\Tr}=\mathsfbi{C}_{\ww},\qquad
 \mathsfbi{C}_{\ww} \mathsfbi{C}=\xxi\ww^{\Tr}-\chi \mathsfbi{I},\qquad \tr \mathsfbi{W}=0.
 \end{gathered}
\end{equation}
The operator $\ell_f^{\uu}$ varies only the velocity. Here $\sigma_j$
denotes the degree-$j$ homogeneous symbol with derivatives placed
to the right of coefficients; we need its two highest degrees.

We modify the leading expression so that both equations of the remaining
pair have order at most $d$. The factor $d+1$ below cancels the leading
divergence, and the scalar correction cancels the leading vector source;
these cancellations are verified immediately afterward. Define
\begin{equation}\label{eq:s8-normalization}
 \begin{aligned}
 L_d&=\omega_j\omega^{i_1}\cdots\omega^{i_d}
                 D_{i_1}\cdots D_{i_d}u_j,\\
 \boldsymbol{V}_d&=c\{\boldsymbol{B}_d-(d+1)\mathsfbi{W}\boldsymbol{B}_{d-1}\},\\
 \check{\LL}&=\LL-\boldsymbol{V}_d,\qquad
 \check\Lambda^0=\Lambda^0+cL_d.
 \end{aligned}
\end{equation}
The outside vorticity factors in $L_d$ are undifferentiated.
In \eqref{eq:s8-normalization}, the term $c\boldsymbol B_d$ from
\eqref{eq:s8-decomposition} cancels. The corrected pair
$(\check{\LL},\check\Lambda^0)$ has order at most $d$, since
$L_d$ and $\mathsfbi W\boldsymbol B_{d-1}$ have that order bound.
Its adjoint equations are
\begin{equation}\label{eq:s8-forced-equations}
 \begin{aligned}
 Y\check{\LL}-\matA^{\Tr}\check{\LL}+\nabla\check\Lambda^0
       &=-\boldsymbol{\mathcal J}_d,&
 \boldsymbol{\mathcal J}_d&=Y\boldsymbol{V}_d-\matA^{\Tr}\boldsymbol{V}_d-c\nabla L_d,\\
 \Div\check{\LL}&=-g_d,&g_d&=\Div \boldsymbol{V}_d.
 \end{aligned}
\end{equation}
The material identities
\begin{equation}\label{eq:s8-material-identities}
 [Y,\cD]=0,\qquad Y\ww=\matA\ww,\qquad
 Y\mathsfbi{W}=\mathsfbi{T}+\matA\mathsfbi{W}-\mathsfbi{W}\matA,\qquad D_{\xx}(\matA\ww)=\mathsfbi{T}+\matA\mathsfbi{W}.
\end{equation}
give the following spatial source. The identity for $D_{\xx}(\matA\ww)$
uses commutation of spatial derivatives of $\uu$.
\begin{equation}\label{eq:s8-spatial-source}
 \begin{aligned}
 \boldsymbol{\mathcal J}_d&=c\{\boldsymbol{\mathcal U}_d-\nabla L_d-(d+1)\boldsymbol{K}_d\},\\
 \boldsymbol{\mathcal U}_d&=\cD^d(\matA\ww)-\matA^{\Tr}\boldsymbol{B}_d,\\
 \boldsymbol{K}_d&=(\mathsfbi{T}+\mathsfbi{C}_{\ww} \mathsfbi{W}-\mathsfbi{W}\matA)\boldsymbol{B}_{d-1}+\mathsfbi{W}\cD^{d-1}(\matA\ww).
 \end{aligned}
\end{equation}
\emph{Source orders.}
With $a_m=\binom m2$, the two highest symbols and the recursion for
the second one are
\begin{equation}\label{eq:s8-binomial}
 \begin{gathered}
 \sigma_m(\cD^m)=\chi^m,\qquad
 b_m:=\sigma_{m-1}(\cD^m)=a_m\chi^{m-2}z\quad(m\ge2),\\
 b_{m+1}=\chi b_m+\cD(\chi^m),\qquad
 b_1=0,\qquad \cD\chi=z.
 \end{gathered}
\end{equation}
The recursion proves the formula by induction. Varying each nested
derivative defining $\boldsymbol{B}_m$, with velocity variation $\boldsymbol{v}$, gives
\begin{equation}\label{eq:s8-nested-variation}
 \delta \boldsymbol{B}_m=\cD^m\boldsymbol{\varpi}+
       \sum_{j=0}^{m-1}\cD^j
         [ (\boldsymbol{\varpi}\cdot D_{\xx})\boldsymbol{B}_{m-1-j}],
 \qquad \boldsymbol{\varpi}=\curl \boldsymbol{v}.
\end{equation}
Only the summand with $j=m-1$ contains $m$ derivatives of the variation, with symbol
$\chi^{m-1}\mathsfbi{W}\mathsfbi{C}$. Hence
\begin{equation}\label{eq:s8-B-symbols}
 \begin{aligned}
 \sigma_{d+1}(\ell^{\uu}_{\boldsymbol{B}_d})&=\chi^d\mathsfbi{C},\\
 \sigma_d(\ell^{\uu}_{\boldsymbol{B}_d})&=
       (a_d\chi^{d-2}z\mathsfbi{I}+\chi^{d-1}\mathsfbi{W})\mathsfbi{C}.
 \end{aligned}
\end{equation}
In $\Div \boldsymbol{B}_d$, degree $d+2$ vanishes because $\xxi^{\Tr}\mathsfbi{C}=0$.
At degree $d+1$, differentiating the leading coefficients contributes
$d$ copies of the row below, and the next symbol contributes one.
For the product, use the differentiated divergence of vorticity:
\begin{equation}\label{eq:s8-divergence-symbols}
 \begin{aligned}
 \sigma_{d+1}(\ell^{\uu}_{\Div \boldsymbol{B}_d})
      &=(d+1)\chi^{d-1}\xxi^{\Tr}\mathsfbi{W}\mathsfbi{C},\\
 \sigma_{d+1}(\ell^{\uu}_{\Div(\mathsfbi{W}\boldsymbol{B}_{d-1})})
      &=\chi^{d-1}\xxi^{\Tr}\mathsfbi{W}\mathsfbi{C},\\
 \Div(\mathsfbi{W}\boldsymbol{B}_{d-1})&=\tr(\mathsfbi{W}D_{\xx}\boldsymbol{B}_{d-1}),\qquad D_iW_{ij}=0.
 \end{aligned}
\end{equation}
Variation of the outside $\mathsfbi{W}$ contains at most two derivatives of the variation. Thus the factor
$d+1$ in $\boldsymbol{V}_d$ cancels its degree-$d+1$ divergence symbol.

The corresponding vector-source cancellation is
\begin{equation}\label{eq:s8-vary-Aomega}
 \begin{gathered}
 \delta(\matA\ww)=\matA\boldsymbol{\varpi}+\cD \boldsymbol{v},\\
 \sigma_{d+1}\bigl(\ell^{\uu}_{\boldsymbol{\mathcal U}_d-\nabla L_d}\bigr)
   =\chi^d(\mathsfbi{C}_{\ww} \mathsfbi{C}+\chi \mathsfbi{I}-\xxi\ww^{\Tr})=0.
 \end{gathered}
\end{equation}
A vanishing highest linearisation symbol means independence of
those derivative entries. Since $\boldsymbol{K}_d$ has order at most $d$, the
two cancellations establish the source bounds required by
\eqref{eq:s4-8}:
\begin{equation}\label{eq:s8-source-orders}
 \ord g_d\le d,\qquad \ord\boldsymbol{\mathcal J}_d\le d.
\end{equation}

\emph{The degree-$d$ source.}
For $d\ge3$, variations of the outside coefficients contain at most two derivatives of the variation field.
The operator symbols \eqref{eq:s8-binomial}, nested variation
\eqref{eq:s8-nested-variation}, and variation
\eqref{eq:s8-vary-Aomega} give
\begin{equation}\label{eq:s8-source-table}
\begin{array}{c|l}
 \text{expression}&\text{degree-$d$ velocity symbol}\\\hline
 \boldsymbol{\mathcal U}_d&
 \begin{aligned}
 &a_d\chi^{d-2}z\,\xxi\ww^{\Tr}+d\chi^{d-1}z\mathsfbi{I}\\[-2pt]
 &\quad +(d+1)\chi^{d-1}\mathsfbi{T}\mathsfbi{C}+\chi^{d-1}\mathsfbi{C}_{\ww} \mathsfbi{W}\mathsfbi{C}
 \end{aligned}\\[4pt]
 -\nabla L_d&-d\chi^{d-1}(\mathsfbi{W}^{\Tr}\xxi)\ww^{\Tr}-\chi^d\mathsfbi{W}^{\Tr}\\[3pt]
 -(d+1)\boldsymbol{K}_d&-(d+1)\chi^{d-1}\{(\mathsfbi{T}+\mathsfbi{C}_{\ww} \mathsfbi{W})\mathsfbi{C}+\chi \mathsfbi{W}\}.
\end{array}
\end{equation}
For the $\boldsymbol{\mathcal U}_d$ row, the contributions from
$\cD^d\delta(\matA\ww)$ and the nested summand with $j=d-1$ are
\begin{equation}\label{eq:s8-U-row-contributions}
 \begin{aligned}
 &[a_d\chi^{d-2}z\matA+d\chi^{d-1}\mathsfbi{T}]\mathsfbi{C}
             +a_{d+1}\chi^{d-1}z\mathsfbi{I},\\
 &\chi^{d-1}(\mathsfbi{T}+\matA\mathsfbi{W})\mathsfbi{C}.
 \end{aligned}
\end{equation}
For the $\boldsymbol{\mathcal U}_d$ row in \eqref{eq:s8-source-table},
subtract $\matA^{\Tr}$ times the degree-$d$ symbol in
\eqref{eq:s8-B-symbols}, then use \eqref{eq:s8-notation} and
$a_{d+1}-a_d=d$. The $-\nabla L_d$ row differentiates the leading
scalar coefficients $\chi^d\ww^{\Tr}$ of $L_d$ in
\eqref{eq:s8-normalization}. For $\boldsymbol K_d$ in
\eqref{eq:s8-spatial-source}, fixed outside coefficients give
\begin{equation}\label{eq:s8-K-row-contribution}
 \sigma_d(\ell^{\uu}_{\boldsymbol{K}_d})=
 \chi^{d-1}\{(\mathsfbi{T}+\mathsfbi{C}_{\ww} \mathsfbi{W}-\mathsfbi{W}\matA)\mathsfbi{C}+\mathsfbi{W}(\matA\mathsfbi{C}+\chi \mathsfbi{I})\}.
\end{equation}

In \eqref{eq:s8-source-table}, the $\mathsfbi{T}$ terms cancel, and projection
removes multiples of $\xxi$. The remaining reduction uses
\begin{equation}\label{eq:s8-matrix-reduction}
 \matP\{z\mathsfbi{I}-\mathsfbi{C}_{\ww} \mathsfbi{W}\mathsfbi{C}-(\mathsfbi{W}^{\Tr}\xxi)\ww^{\Tr}\}\matP=-\chi \matP\mathsfbi{W}^{\Tr}\matP.
\end{equation}
By homogeneity and an orthogonal change of frame, use
\begin{equation}\label{eq:s8-block-notation}
 \xxi=\boldsymbol e_3,\qquad \ww=\binom{\ww_\perp}{\chi},\qquad
 \mathsfbi{W}=\begin{pmatrix}\mathsfbi{H}&\boldsymbol b\\\boldsymbol r^{\Tr}&h\end{pmatrix},\qquad
 \mathsfbi{J}_2=\begin{pmatrix}0&-1\\1&0\end{pmatrix},\qquad h=-\tr \mathsfbi{H}.
\end{equation}
The upper block in \eqref{eq:s8-matrix-reduction} reduces as follows,
using the two elementary identities displayed with it:
\begin{equation}\label{eq:s8-block-reduction}
 \begin{gathered}
 (\boldsymbol r^{\Tr}\ww_\perp+h\chi)\mathsfbi{I}-\chi \mathsfbi{J}_2\mathsfbi{H}\mathsfbi{J}_2
       +\mathsfbi{J}_2\ww_\perp \boldsymbol r^{\Tr}\mathsfbi{J}_2-\boldsymbol r\ww_\perp^{\Tr}=-\chi \mathsfbi{H}^{\Tr},\\
 \mathsfbi{J}_2\mathsfbi{H}\mathsfbi{J}_2=\mathsfbi{H}^{\Tr}-(\tr \mathsfbi{H})\mathsfbi{I},\qquad
 \mathsfbi{J}_2\ww_\perp \boldsymbol r^{\Tr}\mathsfbi{J}_2
       =\boldsymbol r\ww_\perp^{\Tr}-(\boldsymbol r^{\Tr}\ww_\perp)\mathsfbi{I}.
 \end{gathered}
\end{equation}
Thus the projected source for every integer $d\ge3$ is
\begin{equation}\label{eq:s8-projected-source}
 \matP\sigma_d(\ell^{\uu}_{\boldsymbol{\mathcal J}_d})\matP
       =-c(d+1)\chi^d\matP(\mathsfbi{W}+\mathsfbi{W}^{\Tr})\matP.
\end{equation}

At $d=2$, the $\boldsymbol{\mathcal U}_2$ row is unchanged, and the extra
outside-factor variation in $\nabla L_2$ is proportional to $\xxi$.
The outside coefficients of $\boldsymbol{K}_2$ contribute as follows; the
variation formulas denote degree-two symbols:
\begin{equation}\label{eq:s8-degree-two-variations}
 \begin{gathered}
 \delta \mathsfbi{W}=(\mathsfbi{C}\boldsymbol{v})\xxi^{\Tr},\qquad \delta \mathsfbi{T}=\chi \boldsymbol{v}\xxi^{\Tr},
 \qquad \boldsymbol{B}_1=\mathsfbi{W}\ww,\qquad q_A:=\xxi^{\Tr}\mathsfbi{T}\ww,\\
 \begin{aligned}
 \delta(\mathsfbi{T}+\mathsfbi{C}_{\ww} \mathsfbi{W}-\mathsfbi{W}\matA)\boldsymbol{B}_1
      &=z\xxi\ww^{\Tr}\boldsymbol{v}-(\xxi\cdot \matA\boldsymbol{B}_1)\mathsfbi{C}\boldsymbol{v},\\
 (\delta \mathsfbi{W})\cD(\matA\ww)
      &=(q_A+\xxi\cdot \matA\boldsymbol{B}_1)\mathsfbi{C}\boldsymbol{v}.
 \end{aligned}
 \end{gathered}
\end{equation}
Use \eqref{eq:s8-notation} and the product rule for $\cD(\matA\ww)$.
After projection, only the extra skew matrix $q_A\mathsfbi{C}$ remains:
\begin{equation}\label{eq:s8-degree-two-source}
 \matP\sigma_2(\ell^{\uu}_{\boldsymbol{\mathcal J}_2})\matP
       =-3c\{\chi^2\matP(\mathsfbi{W}+\mathsfbi{W}^{\Tr})\matP+q_A\mathsfbi{C}\}.
\end{equation}
For a matrix $\mathsfbi H$, define its symmetric part by
\begin{equation}\label{eq:core-symmetric-part}
 \sym\mathsfbi H=\tfrac12(\mathsfbi H+\mathsfbi H^{\Tr}).
\end{equation}
Taking this part in
\eqref{eq:s8-projected-source} and \eqref{eq:s8-degree-two-source}
gives, for every integer $d\ge2$,
\begin{equation}\label{eq:s8-symmetric-source}
 \sym\bigl(\matP\sigma_d(\ell^{\uu}_{\boldsymbol{\mathcal J}_d})\matP\bigr)
       =-c(d+1)\chi^d\matP(\mathsfbi{W}+\mathsfbi{W}^{\Tr})\matP.
\end{equation}
Symmetrisation removes $q_A$ and leaves a source depending only on
vorticity jets.

\emph{Applying the dependence theorem.}
Let $\mathsfbi M_d$ be the degree-$d$ transverse response of the corrected
pair \eqref{eq:s8-normalization}, and put
$\mathsfbi M_{d,\rm s}=\sym\mathsfbi M_d$. The order bounds
\eqref{eq:s8-source-orders} permit the forced identity \eqref{eq:s4-8}.
Its left-hand side commutes with transposition, so
\eqref{eq:s8-symmetric-source} gives
\begin{equation}\label{eq:s8-symmetric-transport}
 \begin{aligned}
 \matP\bigl[Y\mathsfbi M_{d,\rm s}
  &-(\matA^{\Tr}\xxi)\cdot\partial_{\xxi}\mathsfbi M_{d,\rm s}
   -\matA^{\Tr}\mathsfbi M_{d,\rm s}-\mathsfbi M_{d,\rm s}\matA\bigr]\matP\\
  &=c(d+1)\chi^d\matP(\mathsfbi W+\mathsfbi W^{\Tr})\matP.
 \end{aligned}
\end{equation}
Apply Theorem~\ref{thm:dependence} to $\mathsfbi M_{d,\rm s}$ in
\eqref{eq:s8-symmetric-transport}. The theorem makes this symmetric
matrix skew-symmetric, hence zero. Substitution in
\eqref{eq:s8-symmetric-transport} makes its right-hand side zero.
Curl surjectivity \eqref{eq:s5-curl-exact-sequence} and reconstruction
\eqref{eq:s5-euler-coordinate-recursion} allow $\mathsfbi{W}$ to vary at fixed
lower data. For orthonormal $\boldsymbol a,\boldsymbol b$ perpendicular to $\xxi$, choose
\begin{equation}\label{eq:s8-final-W-variation}
 \delta \mathsfbi{W}=\boldsymbol a\boldsymbol a^{\Tr}-\boldsymbol b\boldsymbol b^{\Tr},
 \qquad \matP(\delta \mathsfbi{W}+\delta \mathsfbi{W}^{\Tr})\matP=2(\boldsymbol a\boldsymbol a^{\Tr}-\boldsymbol b\boldsymbol b^{\Tr})\ne0.
\end{equation}
The right-hand side of \eqref{eq:s8-symmetric-transport} must remain
zero under \eqref{eq:s8-final-W-variation}, forcing $c=0$ where
$\chi\ne0$, hence throughout the chart by density.
\end{proof}

\section{Order zero and the conservation currents}\label{sec:9}

By Corollary~\ref{prop:split} and Section~\ref{sec:8}, the least
order of an adjoint symmetry is at most one. At order one, subtract
its vorticity part using \eqref{eq:s6-7}; this part satisfies the adjoint equations:
\begin{equation}\label{eq:s9-helicity-adjoint}
 Y\ww-\matA^{\Tr}\ww
   =(\matA-\matA^{\Tr})\ww=\mathsfbi{C}_{\ww}\ww=0,
 \qquad \Div\ww=0.
\end{equation}
\subsection{The order-zero equations}
After the helicity subtraction justified by
\eqref{eq:s9-helicity-adjoint}, the remaining components $\LL$ and
$\Lambda^0$ depend only on $(t,\xx,\uu,p)$. The first-order
Euler equations allow arbitrary trace-free $\matA$ and arbitrary
pressure derivatives. The divergence equation in
\eqref{eq:s3-adjoint-equations} and its coefficients
in these free variables give
\begin{equation}\label{eq:s9-divergence-expansion}
 \begin{gathered}
       \LL_p\cdot\nabla p+\tr(\LL_{\uu}\matA)+\Div_{\xx}\LL=0,
 \\
 \LL_p=0,\qquad
 \frac{\partial\Lambda_i}{\partial u_j}=\nu\delta_{ij}.
 \end{gathered}
\end{equation}
Here $\LL_{\uu}$ is the velocity Jacobian and $\Div_{\xx}$ holds field
values fixed. For $i\ne j$, commuting velocity derivatives gives
\begin{equation}\label{eq:s9-nu-independent-of-u}
 \partial_{u_j}\nu
   =\partial_{u_j}\partial_{u_i}\Lambda_i
   =\partial_{u_i}\partial_{u_j}\Lambda_i=0.
\end{equation}
Integrating the velocity Jacobian in \eqref{eq:s9-divergence-expansion}
and comparing the velocity coefficients in its divergence identity gives
\begin{equation}\label{eq:s9-terminal-vector}
 \LL=\nu(t)\uu+\boldsymbol\beta(t,\xx),\qquad
 \Div\boldsymbol\beta=0.
\end{equation}

The free pressure gradient and trace-free $\matA$ in the vector
adjoint equation \eqref{eq:s3-adjoint-equations}
give the scalar component:
\begin{equation}\label{eq:s9-scalar-derivatives}
 \begin{gathered}
 \Lambda^0_p=\nu,\qquad
 \matA^{\Tr}(\Lambda^0_{\uu}-\nu\uu-\boldsymbol\beta)=0
       \quad\text{for every }\tr\matA=0,\\
 \Lambda^0=\nu(p+K)+\boldsymbol\beta\cdot\uu+\vartheta(t,\xx).
 \end{gathered}
\end{equation}
A vector annihilated by every trace-free matrix is zero, which
justifies the integration. Substituting
\eqref{eq:s9-terminal-vector} and \eqref{eq:s9-scalar-derivatives}
into the vector adjoint equation \eqref{eq:s3-adjoint-equations} leaves
\begin{equation}\label{eq:s9-terminal-killing-equation}
 [\nu'\mathsfbi I+D_{\xx}\boldsymbol\beta
        +(D_{\xx}\boldsymbol\beta)^{\Tr}]\uu
                  +\boldsymbol\beta_t+\nabla\vartheta=0.
\end{equation}
The trace of the velocity coefficient in
\eqref{eq:s9-terminal-killing-equation} gives $\nu'=0$, since
$\Div\boldsymbol\beta=0$. Thus $\boldsymbol\beta$ satisfies the Euclidean Killing equation,
which makes its gradient skew. Differentiating it gives
\begin{equation}\label{eq:s9-killing-equation}
 \begin{gathered}
 \partial_i\beta_j+\partial_j\beta_i=0,\\
 \begin{aligned}
 2\partial_i\partial_j\beta_k
  ={}&\partial_i(\partial_j\beta_k+\partial_k\beta_j)
     +\partial_j(\partial_i\beta_k+\partial_k\beta_i)\\
     &-\partial_k(\partial_i\beta_j+\partial_j\beta_i)=0.
 \end{aligned}
 \end{gathered}
\end{equation}
Hence $\boldsymbol\beta$ is affine. The constant term of
\eqref{eq:s9-terminal-killing-equation} and its curl give
\begin{equation}\label{eq:s9-beta-theta}
 \boldsymbol\beta=\boldsymbol\Omega(t)\times\xx+\boldsymbol b(t),\qquad
 \boldsymbol\Omega'=0,\qquad
 \vartheta=-\boldsymbol b'(t)\cdot\xx+c_0(t).
\end{equation}
Adding the vorticity part recovers exactly \eqref{eq:s1-5}:
$c$, $\nu$, and $\boldsymbol\Omega$ are constant, and $\boldsymbol b,c_0$ are arbitrary
smooth functions of time.

\subsection{From currents to adjoint symmetries}
For a scalar differential function $H$ and a matrix differential
operator $\mathscr K$, the Euler operator and formal adjoint are
\begin{equation}\label{eq:s9-Euler-and-adjoint}
 (\EulerOp_w H)_a
    =\sum_I(-D)^I\frac{\partial H}{\partial w_I^a},
 \qquad
 \mathscr K=\sum_I\mathscr K_ID_I,
 \qquad \mathscr K^*=\sum_I(-D)^I\circ\mathscr K_I^{\Tr}.
\end{equation}
Integration by parts transfers derivatives to the formal adjoint
and shows that the Euler operator kills total divergences. Applied
to a multiplier identity \eqref{eq:s1-4}, this gives
\begin{equation}\label{eq:s9-multiplier-Euler-identity}
 \begin{gathered}
 0=\EulerOp_w(E^{\Tr}Q)=L^*Q+\ell_Q^*E,
 \qquad L=\ell_E,\\
 L^*(\LL,\Lambda^0)=
 \binom{-Y\LL-G\LL+\matA^{\Tr}\LL-\nabla\Lambda^0}
       {-\Div\LL}.
 \end{gathered}
\end{equation}
Since $E$ and its derivatives vanish on Euler, this reduces to the
adjoint equations \eqref{eq:s3-adjoint-equations}.

\begin{proposition}[Current correspondence]\label{prop:current-correspondence}
Every local conservation-current class contains a representative with
an exact finite-order multiplier identity. The values of its
multiplier on compatible jets determine the current class. In
particular, zero multiplier values give a trivial current.
\end{proposition}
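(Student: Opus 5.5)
We prove the two assertions separately. Both rest on the solved form \eqref{eq:s3-normal-form}, $E=\cN\boldsymbol n$ with $\cN$ invertible, and on the free coordinates of Lemma~\ref{lem:free}.

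\emph{Existence of an exact multiplier identity.} Let $(\rho,\boldsymbol J)$ be a local current of finite order. Then $D_t\rho+\Div\boldsymbol J$ vanishes on Euler. By the residual factorisation \eqref{eq:s2-residual-factorization}, applied to this divergence, we can write
\[
D_t\rho+\Div\boldsymbol J=\sum_{|I|<r}f_I\,D_I\boldsymbol n .
\]
We substitute $\boldsymbol n=\cN^{-1}E$ and expand $D_I(\cN^{-1}E)$ by the product rule. This gives a finite sum of the form $\sum_I P_I\,D_IE$ with differential-function coefficients $P_I$. We then integrate by parts, $P_ID_IE=((-D)^IP_I)^{\Tr}E+\Div(\cdots)+D_t(\cdots)$. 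The outcome is an unrestricted identity
\[
D_t\tilde\rho+\Div\tilde{\boldsymbol J}=Q^{\Tr}E .
\]
In it, $Q=\sum_I(-D)^I P_I^{\Tr}$ has finite order, and $\tilde\rho-\rho$ and $\tilde{\boldsymbol J}-\boldsymbol J$ are built from the boundary terms. Every boundary term carries a factor $E$ or a derivative of $E$, so the correction vanishes on Euler. The current $(\tilde\rho,\tilde{\boldsymbol J})$ is therefore equivalent to the original one and satisfies \eqref{eq:s1-4} exactly. This step is standard bookkeeping.

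\emph{Multiplier values determine the class.} By linearity, it suffices to show that if $Q$ vanishes on Euler, then the current in \eqref{eq:s1-4} is trivial. Since $Q$ vanishes on Euler, Lemma~\ref{lem:finite-residual}'s factorisation together with invertibility of $\cN$ gives $Q=\sum_I K_I D_IE$. Then
\[
Q^{\Tr}E=\sum_I E^{\Tr}K_I^{\Tr}D_IE .
\]
Integrating by parts moves all derivatives off one factor. This yields $Q^{\Tr}E=\mathrm{Div}_{t,\xx}(\boldsymbol\Gamma)$, where $\boldsymbol\Gamma$ is at least quadratic in $E$ and its derivatives, so $\boldsymbol\Gamma$ vanishes on Euler. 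The current difference $(\rho,\boldsymbol J)-\boldsymbol\Gamma$ then has identically zero spacetime divergence for unrestricted fields. So $(\rho,\boldsymbol J)$ is the sum of a current vanishing on Euler and a current with identically zero divergence, hence trivial.

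\emph{Main obstacle.} The delicate point is the last step: passing from ``spacetime divergence identically zero'' to the paper's equivalence. That equivalence requires zero divergence for unrestricted fields, which the construction does deliver. Because the divergence is identically zero, no further appeal to the algebraic Poincaré lemma is needed.

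A subtler issue is that $\EulerOp_w(Q^{\Tr}E)=0$ does not by itself give triviality. This is why I avoid that route and build $\boldsymbol\Gamma$ explicitly.

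I also need to check that the integration by parts is legitimate for explicit $(t,\xx)$-dependence and that it stays within finite order. Both hold because the procedure is purely formal: it uses total derivatives and smooth local coefficients on a chart.
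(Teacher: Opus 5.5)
Your existence argument is fine. Hadamard factorisation in the residual coordinates, the substitution $\boldsymbol n=\cN^{-1}E$, and integration by parts give an exact identity whose correction terms all carry factors of $E$. The paper proceeds differently: it passes to the tangential representative and integrates by parts only in $y$. Its multiplier $q_{\rm can}$ therefore depends on the free tangential coordinates alone, and that extra structure is what its uniqueness step relies on.

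The gap is in the second part. Write $Q=\mathcal K E$ with $\mathcal K=\sum_I K_ID_I$. Integration by parts only transfers derivatives from one factor to the other: $(\mathcal K E)^{\Tr}E=E^{\Tr}(\mathcal K^{*}E)+\text{(divergence)}$. The bulk term $E^{\Tr}\mathcal K^{*}E$ remains; it is bilinear in $E$ and is not a divergence in general. Your step never uses the hypothesis that $Q^{\Tr}E$ is identically a divergence. It would therefore apply verbatim to $Q=E$ and produce $|\boldsymbol F|^2+G^2=\Div_{t,\xx}\boldsymbol\Gamma$. That is false: the pressure component of $\EulerOp_w(|\boldsymbol F|^2+G^2)$ is $-2\Div\boldsymbol F$, which contains $-2\Delta p$ and is not identically zero. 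So the construction of $\boldsymbol\Gamma$ carries all the content. Your ``main obstacle'' paragraph points at the easy step instead, since passing from identically zero divergence to triviality is immediate.

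The paper closes the gap as follows.
\begin{enumerate}
\item Build the canonical current from $(\rho,\boldsymbol J)$. It agrees with $(\rho,\boldsymbol J)$ on Euler, so their difference vanishes on Euler and has divergence $(\cN^{\Tr}Q-q_{\rm can})\cdot\boldsymbol n$.
\item Keep only the terms linear in the residuals, where the derivations $\mathscr D_\mu$ commute. This gives $((\cN^{\Tr}Q)|_{\cE_\infty}-q_{\rm can})\cdot\boldsymbol n=\mathscr D_\mu C^\mu_{(1)}$.
\item The Euler operator in the residual variables then yields $q_{\rm can}=(\cN^{\Tr}Q)|_{\cE_\infty}$.
\item If $Q$ vanishes on Euler, $q_{\rm can}$ vanishes on compatible jets. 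Its arguments are free coordinates, so it vanishes identically, and the canonical current is identically divergence-free.
\end{enumerate}
You need some argument of this kind, one that genuinely uses the fact that $Q^{\Tr}E$ is a divergence, in place of the integration by parts.
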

\begin{proof}
In the solved coordinates \eqref{eq:s3-normal-form}, transform the
divergence and reduce the current components:
\begin{equation}\label{eq:s3-current-transformation}
 \begin{gathered}
 D_t\rho+D_iJ^i
   =D_s(a\rho+J^3)+D_{y_0}\rho
                      +D_{y_1}J^1+D_{y_2}J^2,\\
 \widehat\rho=\cR(a\rho+J^3),\qquad
 \widehat{\boldsymbol J}=(\cR\rho,\cR J^1,\cR J^2).
 \end{gathered}
\end{equation}
Here $\cR$, defined by \eqref{eq:s2-tangential-representative},
sets the residuals and their derivatives to zero, expressing each
current component in the free tangential derivatives.
The reduced transverse derivative and conservation equation are
\begin{equation}\label{eq:s9-normal-total-derivative}
 \begin{gathered}
 \overline D_s
   =\partial_s+\sum_{a,J}D_y^J\cF^a
                         \frac{\partial}{\partial w_J^a},
 \qquad w_J^a=D_y^Jw^a,\\
 \overline D_s\widehat\rho+D_y\cdot\widehat{\boldsymbol J}=0.
 \end{gathered}
\end{equation}
The sum is finite on each differential function. Independence of
tangential jets makes the conservation equation an identity in
those variables. The chain rule and tangential integration by parts
give
\begin{equation}\label{eq:s3-canonical-construction}
 \begin{aligned}
 D_s\widehat\rho-\overline D_s\widehat\rho
     &=\sum_{a,J}\frac{\partial\widehat\rho}{\partial w_J^a}
                         D_y^Jn^a
       =q_{\rm can}\cdot n+D_y\cdot\boldsymbol B,\\
 (q_{\rm can})_a
     &=\sum_J(-D_y)^J
                    \frac{\partial\widehat\rho}{\partial w_J^a}.
 \end{aligned}
\end{equation}
The tangential flux $\boldsymbol B$ is linear in residual derivatives and
vanishes on Euler. The corrected current therefore satisfies
\begin{equation}\label{eq:s9-canonical-multiplier}
 D_s\widehat\rho+D_y\cdot(\widehat{\boldsymbol J}-\boldsymbol B)
       =q_{\rm can}\cdot n=Q_{\rm can}^{\Tr}E,
 \qquad Q_{\rm can}=\cN^{-\Tr}q_{\rm can}.
\end{equation}
The correction flux in \eqref{eq:s3-canonical-construction} vanishes
on Euler. Undoing the coordinate change
\eqref{eq:s3-current-transformation}, the current
\eqref{eq:s9-canonical-multiplier} therefore agrees with the original
current on Euler, proving existence.

For another exact multiplier $Q$ of the same class, subtract an
identically divergence-free current so the representatives agree
on Euler. Independence of residual and tangential coordinates
allows their difference identity to be expanded to residual degree one:
\begin{equation}\label{eq:s9-linear-residual-identity}
 \bigl((\cN^{\Tr}Q)|_{\cE_\infty}-q_{\rm can}\bigr)\cdot n
       =\mathscr D_\mu C_{(1)}^\mu.
\end{equation}
Here $C_{(1)}$ is linear in residual derivatives. The derivations
$\mathscr D_\mu$ differentiate coefficients on Euler and shift
residual indices:
\begin{equation}\label{eq:s9-residual-derivations}
 \mathscr D_s f=\overline D_s f,\qquad
 \mathscr D_{y_j}f=D_{y_j}f,
 \qquad
 \mathscr D_\mu n_I=n_{I+e_\mu}
\end{equation}
for tangential coefficients $f$. The chain rule
\eqref{eq:s3-canonical-construction} makes
$D_sf-\overline D_sf$ residual-linear; multiplying an already
residual-linear term makes this correction quadratic. Thus
\eqref{eq:s9-linear-residual-identity} contains exactly degree one.
The derivations commute because the prolonged solved equation
makes $\overline D_s$ commute with tangential derivatives, and
residual shifts commute.

The Euler operator in the independent residual variables kills
the divergence in \eqref{eq:s9-linear-residual-identity} and returns
its undifferentiated residual coefficient:
\begin{equation}\label{eq:s9-residual-Euler-operator}
 \begin{gathered}
 (\EulerOp_n H)_a
       =\sum_I(-\mathscr D)^I\frac{\partial H}{\partial n_I^a},\\
 q_{\rm can}=(\cN^{\Tr}Q)|_{\cE_\infty}.
 \end{gathered}
\end{equation}
If $Q$ vanishes on Euler, then $q_{\rm can}$ vanishes on all
compatible jets. Its tangential arguments are free coordinates,
so it vanishes identically as a function of those arguments.
The corrected current
\eqref{eq:s9-canonical-multiplier} is identically divergence-free,
so its class is trivial. Apply this to the difference of two
currents to obtain uniqueness.
\end{proof}

Every current class has a multiplier whose restriction to Euler is
an adjoint symmetry. Its values therefore belong to \eqref{eq:s1-5}.
The corresponding current \eqref{eq:s9-explicit-currents} has the
same multiplier values, so Proposition~\ref{prop:current-correspondence}
identifies their classes. This completes Theorem~\ref{thm:main}.

\section{Local symmetries}\label{sec:sym}

The operators $L$ and $M$ in \eqref{eq:finite-LM} have the same
principal symbol, so the characteristic tests of
\S\S\ref{sec:2}--\ref{sec:4} apply to both. Their velocity
multiplication terms are $+\matA$ and $-\matA^{\Tr}$:
\begin{equation}\label{eq:sym-operator-difference}
 (L-M)(\boldsymbol r,h)
   =\binom{(\matA+\matA^{\Tr})\boldsymbol r}{0}.
\end{equation}
The velocity matrix in \eqref{eq:sym-operator-difference} is twice
the rate-of-strain tensor, the symmetric part of the velocity gradient.
This difference changes the harmonic identity
\eqref{eq:finite-harmonic-transport}, the transverse transport identity
\eqref{eq:s4-8}, and the first-order determining equations.
We first constrain the two highest-order responses. Comparing their
polynomial continuations where the characteristic branches meet
excludes every order above one. The remaining first-order equations give the point family
\eqref{eq:sym-family}, proving Theorem~\ref{thm:symmetries}.

\subsection{Determining identity and kernel preservation}

Consider a pair and its source satisfying, on Euler,
\begin{equation}\label{eq:sym-forced-hypotheses}
 q=(\boldsymbol r,h),\qquad Lq=(\boldsymbol f,g),\qquad
 \ord q\le n,\quad\ord(\boldsymbol f,g)\le n,\quad n\ge1;
\end{equation}
a local symmetry \eqref{eq:sym-determining} is the case of zero source.
The coefficients of $L$ are $\uu$ and $\matA=D_{\xx}\uu$, so
linearisation and Lemma~\ref{lem:finite-residual} give, in place of
\eqref{eq:finite-forced-linearization},
\begin{equation}\label{eq:sym-linearisation}
 L\ell_q(\boldsymbol v,\pi)=\ell_{(\boldsymbol f,g)}(\boldsymbol v,\pi)
 -\binom{(\boldsymbol v\cdot D_{\xx})\boldsymbol r
          +(D_{\xx}\boldsymbol v)\boldsymbol r}{0}
 +TL(\boldsymbol v,\pi),\qquad \ord_{\rm op}T\le n.
\end{equation}
The coefficient variation has order at most one, so the comparison of
degree $n+1$ gives \eqref{eq:finite-kernel-preservation} and
\eqref{eq:core-characteristic-restriction} unchanged: the symbol
$\sigma_{q,n}$ preserves each characteristic kernel, and
Lemmas~\ref{lem:spanning} and~\ref{lem:descent} apply as before. The
order-zero term and, at $n=1$, the coefficient variation enter one
degree lower, where the two problems separate.

\subsection{Harmonic response}

For an adjoint symmetry the harmonic response vanished
(Theorem~\ref{thm:harmonic}). A symmetry can have one. The translation
$q=-(\boldsymbol c\cdot D_{\xx})(\uu,p)$ with constant $\boldsymbol c$
is a first-order symmetry with symbol $-(\boldsymbol c\cdot\xxi)\mathsfbi I$,
so its harmonic response is $-(\boldsymbol c\cdot\xxi)\boldsymbol h_{\boldsymbol\zeta}$.
More generally, a derivative operator $\Phi(t,\xx,D_t,D_{\xx})$ whose
coefficients are independent of the fields and all their derivatives,
applied to $(\uu,p)$, has symbol
$\Phi(t,\xx,\tau,\xxi)\mathsfbi I$ and harmonic response
$\Phi\boldsymbol h_{\boldsymbol\zeta}$. The following proposition
establishes this necessary form for the harmonic response.

\begin{proposition}[Harmonic response of a symmetry]\label{prop:sym-harmonic}
Let $q$ be a local symmetry of order at most $N\ge1$. There is a
polynomial $\Phi_N(t,\xx,\tau,\xxi)$, homogeneous of degree $N$ in the
covector and independent of the fields and their derivatives, such that
\begin{equation}\label{eq:sym-harmonic-response}
 \sigma_{q,N}(\tau,\xxi)\binom{-\xxi}{\alpha}
 =\Phi_N(t,\xx,\tau,\xxi)\binom{-\xxi}{\alpha},
 \qquad \xxi^{\Tr}\xxi=0,\quad(\tau,\xxi)\ne0,
\end{equation}
including the values $\alpha=0$. At $\xxi=0$, the pure pressure
amplitude is multiplied by $\Phi_N(t,\xx,\tau,0)$.
\end{proposition}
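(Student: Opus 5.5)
Kernel preservation \eqref{eq:core-characteristic-restriction} holds unchanged for symmetries, since the coefficient variation in \eqref{eq:sym-linearisation} has order at most one. So on the harmonic kernel we may write $\sigma_{q,N}\boldsymbol b_0=\boldsymbol b_0\,\Phi_N$ with $\boldsymbol b_0=(-\xxi,\alpha)^{\Tr}$, where $\Phi_N$ is a scalar defined on the null cone. Concretely, I would take $\Phi_N$ from the pressure component, $\Phi_N=\alpha^{-1}(\alpha Z_N-\boldsymbol W_N\xxi)$ with the last row of $\sigma_{q,N}$ written as $(\boldsymbol W_N,Z_N)$, exactly as $H_N$ in \eqref{eq:finite-harmonic-J}. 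The plan then follows the proof of Theorem~\ref{thm:harmonic} almost verbatim: derive a transport identity for this scalar, remove its field dependence by highest-derivative comparison, and then read off homogeneity and polynomiality.

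\emph{Step 1: the transport identity.} Use the same corrected harmonic test $\boldsymbol a=\mathfrak e(\boldsymbol b_0+\kappa^{-1}\boldsymbol b_1)$ from \eqref{eq:finite-harmonic-test}. By \eqref{eq:finite-harmonic-error}, the residual term $TL\boldsymbol a$ falls to degree at most $N-1$. Compare degree $N$ in \eqref{eq:sym-linearisation} after contracting with $\boldsymbol h_{\boldsymbol\zeta}^{\Tr}$; this contraction kills $\sigma_E\boldsymbol c_1$. What remains is $\boldsymbol h_{\boldsymbol\zeta}^{\Tr}L(\boldsymbol h_{\boldsymbol\zeta}\alpha\Phi_N)=0$, up to two adjustments. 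First, for $N=1$ the coefficient variation $-(\boldsymbol v\cdot D_{\xx})\boldsymbol r-(D_{\xx}\boldsymbol v)\boldsymbol r$ contributes at top degree. Second, the operator $L$ carries $+\matA$ where $M$ carries $-\matA^{\Tr}$. By \eqref{eq:core-harmonic-coefficients} we have $\boldsymbol b^{\Tr}\boldsymbol b=0$, and the zero-order term gives $\boldsymbol b^{\Tr}\matA\boldsymbol b=\gamma/\alpha^2$, which equals $\Div\boldsymbol b$ with the same sign. So I expect the $\gamma$ terms now to cancel rather than combine, leaving $\xxi\cdot D_{\xx}(\alpha\Phi_N)$ divisible by $\xxi^{\Tr}\xxi$. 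Since $\xxi\cdot D_{\xx}\alpha=\xxi^{\Tr}\matA\xxi=\gamma$, this does not immediately give $\xxi\cdot D_{\xx}\Phi_N=0$. Tracking these $\gamma$ contributions carefully, including the $N=1$ coefficient-variation term (which is $\xxi(\xxi\cdot\boldsymbol r)$-like and is killed by $\boldsymbol h^{\Tr}$ on the null cone), is the step I expect to be the main obstacle. If the identity comes out as $\xxi\cdot D_{\xx}(\Phi_N)=0 \pmod{\xxi^{\Tr}\xxi}$, the rest proceeds as in the adjoint case.

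\emph{Step 2: field independence.} Apply the directional descent \eqref{eq:core-directional-descent} to the highest field differential of $\Phi_N$. The prefactor $\xxi\cdot\boldsymbol\theta_{\xx}$ is dense-nonzero on both characteristic families, so Lemmas~\ref{lem:spanning} and~\ref{lem:descent} lower the field order step by step. At order zero, the transverse and harmonic tests remove the dependence on $\uu$ and $p$. Polynomial division \eqref{eq:s2-null-cone-division} then gives $\Phi_N=\Phi_N(t,\xx,\tau,\xxi)$ modulo $\xxi^{\Tr}\xxi$, and we fix the representative remainder as the polynomial.

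Unlike the adjoint case, there is no root argument forcing $\Phi_N$ to vanish; the translation example shows it need not. Homogeneity of degree $N$ follows because $\sigma_{q,N}$ is degree $N$ and the amplitude scales uniformly. Two boundary cases remain. The case $\alpha=0$ follows because both sides of \eqref{eq:sym-harmonic-response} are polynomial identities on the irreducible null cone, hence hold on its closure. The case $\xxi=0$ follows by continuity along the null cone toward $\xxi\to0$, exactly as at the end of Theorem~\ref{thm:harmonic}: $\boldsymbol b_0\to(0,\tau)^{\Tr}$, which gives the pressure-line statement via \eqref{eq:s2-pure-time-kernel}.
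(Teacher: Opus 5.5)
Your proposal has two genuine gaps: the transport identity in Step 1 is mispredicted and left unproved, and nothing in the argument shows that $\Phi_N$ is a polynomial.

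\textbf{The transport identity.} Step 1 is the heart of the argument. In $\boldsymbol h_{\boldsymbol\zeta}^{\Tr}L(\boldsymbol h_{\boldsymbol\zeta}H_N)$, with your $H_N=\alpha\Phi_N$, the two order-zero contributions are $(\boldsymbol b^{\Tr}\matA\boldsymbol b)H_N$ and $(\Div\boldsymbol b)H_N$. Both equal $\gamma H_N/\alpha^2$, so they \emph{add}. They cancelled for $M$ only because its multiplication term is $-\matA^{\Tr}$. The correct identity is therefore $\alpha\,\xxi\cdot D_{\xx}H_N=\gamma H_N \pmod{\xxi^{\Tr}\xxi}$, as in \eqref{eq:sym-harmonic-transport}.

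The identity you expected, $\xxi\cdot D_{\xx}H_N=0$ on the null cone, is false. The translation $q=-(\boldsymbol c\cdot D_{\xx})(\uu,p)$ has $H_1=-\alpha(\boldsymbol c\cdot\xxi)$ and $\xxi\cdot D_{\xx}H_1=-(\boldsymbol c\cdot\xxi)\gamma$. This is in general nonzero on the null cone whenever the strain rate is nonzero.

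With the correct sign your normalisation does work. Since $\xxi\cdot D_{\xx}\alpha=\gamma$, the identity becomes $\alpha^2\,\xxi\cdot D_{\xx}\Phi_N=0$ on the null cone. But the proposal leaves this as a conditional hope while arguing toward the opposite conclusion, so the key identity is not established.

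\textbf{Polynomiality.} Your $\Phi_N=H_N/\alpha$ is a rational function of the covector. Its denominator $\tau+\uu\cdot\xxi$ depends on the velocity, so it is not a polynomial with smooth jet-dependent coefficients. Three of your steps presuppose exactly the polynomiality you need to prove:
\begin{itemize}
\item the null-cone division \eqref{eq:s2-null-cone-division} in Step 2;
\item the claim that $\Phi_N$ is a polynomial homogeneous of degree $N$;
\item the extension to $\alpha=0$ via ``polynomial identities on the null cone''.
\end{itemize}

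The paper supplies this before any descent. Transverse kernel preservation gives $H_N(-\uu\cdot\xxi,\xxi)=-w_N\xxi^{\Tr}\xxi$, as in \eqref{eq:core-harmonic-root}. Division by $\alpha$, which is monic and linear in $\tau$, then gives $H_N=\alpha K_N-w_N\xxi^{\Tr}\xxi$ with $K_N$ a genuine polynomial of degree $N$, as in \eqref{eq:sym-division}. The transport identity becomes $\xxi\cdot D_{\xx}K_N=0\pmod{\xxi^{\Tr}\xxi}$ after cancelling $\alpha^2$. This cancellation is legitimate because $\alpha^2$ is not a zero divisor modulo the $\tau$-free form $\xxi^{\Tr}\xxi$. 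The descent and null-cone division are then run on $K_N$.

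Your proposal never invokes transverse kernel preservation for this purpose. An alternative repair exists: once $H_N/\alpha$ is shown pointwise to be field-independent on the null cone, its only possible pole $\tau=-\uu\cdot\xxi$ moves with $\uu$ while the function does not, so it is polynomial in $\tau$. Some argument of this kind must be supplied.
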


\begin{proof}
Follow the proof of Theorem~\ref{thm:harmonic} with $L$ in place of $M$.
The corrected test \eqref{eq:finite-harmonic-test} and its residual
\eqref{eq:finite-harmonic-error} involve only $L$ and are unchanged.
Kernel preservation gives the leading output
\eqref{eq:core-harmonic-output}--\eqref{eq:finite-harmonic-J},
$\boldsymbol c_0=\boldsymbol h_{\boldsymbol\zeta}H_N$ with
$H_N=\alpha Z_N-\boldsymbol W_N\xxi$. At degree $N$, the coefficient
variation in \eqref{eq:sym-linearisation} contributes only for $N=1$,
where its degree-one part is $(\xxi(\xxi\cdot\boldsymbol r),0)^{\Tr}$
and is annihilated by $\boldsymbol h_{\boldsymbol\zeta}^{\Tr}$ on the
null cone, as in \eqref{eq:core-harmonic-order-one}. Contraction of the
degree-$N$ identity with $\boldsymbol h_{\boldsymbol\zeta}^{\Tr}$,
which also removes the unknown next amplitude as in the proof of
Theorem~\ref{thm:harmonic}, therefore gives
$\boldsymbol h_{\boldsymbol\zeta}^{\Tr}L(\boldsymbol h_{\boldsymbol\zeta}H_N)=0$.
The product rule \eqref{eq:core-harmonic-coefficients} now meets the
term $+\matA\boldsymbol b$ in place of $-\matA^{\Tr}\boldsymbol b$.
With $\gamma=\xxi^{\Tr}\matA\xxi$, both
$\boldsymbol b^{\Tr}\matA\boldsymbol b$ and $\Div\boldsymbol b$ equal
$\gamma/\alpha^2$, so the two order-zero contributions add instead of
cancelling, and \eqref{eq:finite-harmonic-transport} is replaced by
\begin{equation}\label{eq:sym-harmonic-transport}
 \boldsymbol h_{\boldsymbol\zeta}^{\Tr}L(\boldsymbol h_{\boldsymbol\zeta}H_N)
 =-\frac{2}{\alpha^2}\bigl[\alpha\,\xxi\cdot D_{\xx}H_N-\gamma H_N\bigr]=0,
 \qquad
 \alpha\,\xxi\cdot D_{\xx}H_N=\gamma H_N\pmod{\xxi^{\Tr}\xxi}.
\end{equation}
The multiplication term $\gamma H_N$ prevents the directional descent
\eqref{eq:core-directional-descent}, which needs an identity without
lower-order multiplication. A division removes it. Transverse kernel
preservation gives \eqref{eq:core-harmonic-root} as before,
$H_N(-\uu\cdot\xxi,\xxi)=-w_N\xxi^{\Tr}\xxi$. The difference
$H_N(\tau,\xxi)-H_N(-\uu\cdot\xxi,\xxi)$ vanishes at the root of the
monic linear polynomial $\alpha$ in $\tau$, so division defines a
polynomial $K_N$ of degree $N$ with smooth finite-order coefficients:
\begin{equation}\label{eq:sym-division}
 H_N=\alpha K_N-w_N\,\xxi^{\Tr}\xxi,\qquad
 K_N=\frac{H_N(\tau,\xxi)-H_N(-\uu\cdot\xxi,\xxi)}{\tau+\uu\cdot\xxi}.
\end{equation}
Since $\xxi\cdot D_{\xx}\alpha=\gamma$, substitution in
\eqref{eq:sym-harmonic-transport} gives
$\alpha^2\,\xxi\cdot D_{\xx}K_N=0$ modulo $\xxi^{\Tr}\xxi$. That
quadratic form does not involve $\tau$, so modulo it a polynomial monic
in $\tau$ is not a zero divisor, and
\begin{equation}\label{eq:sym-K-transport}
 \xxi\cdot D_{\xx}K_N=0\pmod{\xxi^{\Tr}\xxi}.
\end{equation}
This is \eqref{eq:finite-harmonic-transport} with $K_N$ in place of
$H_N$. The descent
\eqref{eq:core-directional-descent}--\eqref{eq:finite-harmonic-field-free}
applies word for word and gives $K_N=\Phi_N(t,\xx,\tau,\xxi)$ modulo
$\xxi^{\Tr}\xxi$, with $\Phi_N$ independent of the fields. On the null
cone, \eqref{eq:finite-harmonic-J} and \eqref{eq:sym-division} give
$\sigma_{q,N}(-\xxi,\alpha)^{\Tr}
=\boldsymbol h_{\boldsymbol\zeta}H_N
=K_N(-\xxi,\alpha)^{\Tr}$ for $\alpha\ne0$, which is
\eqref{eq:sym-harmonic-response}; both sides are polynomials in $\tau$,
so the identity extends to $\alpha=0$. Continuity along the null cone
also includes $\xxi=0$ for $\tau\ne0$.
\end{proof}

The root argument of Theorem~\ref{thm:harmonic} cannot be applied here.
There, the field-independent polynomial vanished at $\tau=-\uu\cdot\xxi$
for every velocity and was therefore zero. Here $H_N$ already carries
the factor $\alpha$, which vanishes at that value of $\tau$ whatever
$\Phi_N$ is. At first order a translation realises this response:
it has $H_1=-\alpha(\boldsymbol c\cdot\xxi)$ and
$\Phi_1=-\boldsymbol c\cdot\xxi$.

\subsection{Orders above one}

\begin{theorem}[Reduction to first order]\label{thm:sym-reduction}
Every local symmetry has order at most one on Euler.
\end{theorem}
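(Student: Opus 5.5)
Suppose $q$ is a local symmetry of least order $N\ge2$. We derive a contradiction by showing that both degree-$N$ responses vanish, after which Lemma~\ref{lem:descent} lowers the order. The plan has four steps.

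\emph{Step 1 (harmonic response).} Proposition~\ref{prop:sym-harmonic} fixes the harmonic response. It equals $\Phi_N\boldsymbol h_{\boldsymbol\zeta}$, with $\Phi_N(t,\xx,\tau,\xxi)$ field-independent and homogeneous of degree $N$.

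\emph{Step 2 (transverse response).} The analogue of Theorem~\ref{thm:transport} for $L$ differs only in the sign of the order-zero term. Repeating its proof, the output operator contributes $+\matA$ in place of $-\matA^{\Tr}$. The transported equation is therefore
\[
\matP\bigl[Y\matS-(\matA^{\Tr}\xxi)\cdot\partial_{\xxi}\matS+\matA\matS-\matS\matA\bigr]\matP=0.
\]
The pressure and harmonic-gradient comparisons in Theorem~\ref{thm:dependence} again reduce $\matS$ to dependence on $(t,\ww,\xxi)$. The symmetric part of $\matA$ now enters as $\mathsfbi G\matS-\matS\mathsfbi G$ rather than $-\mathsfbi G\matS-\matS\mathsfbi G^{\Tr}$. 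The same stabiliser argument as in \eqref{eq:core-stabilizer-block} then gives $\mathsfbi H\mathsfbi Z-\mathsfbi Z\mathsfbi H=0$ on the plane $\xxi^\perp$ for trace-free $\mathsfbi H$ fixing $\ww$, so $\mathsfbi Z$ is scalar. Hence $\matS=\psi\matP$. Invariance under the remaining $\mathsfbi G$ makes $\psi$ a function of $(t,\chi)$, or field-independent after the position dependence is restored. I expect $\psi=\psi(t,\xx,\xxi)$, homogeneous of degree $N$.

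\emph{Step 3 (matching the branches).} This is the key step. The transverse and harmonic families meet where $\alpha=0$ and $\xxi^{\Tr}\xxi=0$ with complex $\xxi$. There the kernel of $\sigma_E$ is the two-dimensional complex plane $\{\boldsymbol v:\xxi^{\Tr}\boldsymbol v=0,\ \pi=0\}$, which contains $\xxi$ itself. The transverse response is polynomial in $\xxi$, so it continues to the complex covector $(-\uu\cdot\xxi,\xxi)$. On that plane, $\sigma_{q,N}$ acting on the velocity block must be given by the continuation of $\psi\matP$ with the projection suitably interpreted. Equivalently, I would read $\matS$ directly as $\sigma^u_{\boldsymbol r,N}$ restricted to $\xxi^\perp$, which avoids dividing by $q_{\xxi}$. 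The harmonic formula \eqref{eq:sym-harmonic-response} at $\alpha=0$ gives $\sigma_{q,N}(-\xxi,0)^{\Tr}=\Phi_N(-\xxi,0)^{\Tr}$. So $\xxi$ is an eigenvector with eigenvalue $\Phi_N(-\uu\cdot\xxi,\xxi)$, which therefore agrees with $\psi$ on the null cone.

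The point is that $\Phi_N$ is field-independent as a polynomial in $(\tau,\xxi)$, but it is evaluated at $\tau=-\uu\cdot\xxi$. Meanwhile $\psi$ (or the $\tau$-independent continuation) carries no $\uu$-dependence, so varying $\uu$ forces $\Phi_N(\tau,\xxi)$ to be independent of $\tau$ on the null cone, to the extent allowed modulo $q_{\xxi}$.

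\emph{Step 4 (order count and conclusion).} Combining, the degree-$N$ symbol restricted to characteristic kernels coincides with that of $\Phi_N(t,\xx,D_{\xx})$ applied to $(\uu,p)$. Substituting $\Phi_N(t,\xx,D_{\xx})(\uu,p)$ into the linearised equations \eqref{eq:sym-determining}, or equivalently using the degree-$N$ terms of \eqref{eq:sym-linearisation}, produces a commutator term from $[\Phi_N,\uu\cdot\nabla]$. Its highest symbol involves $\xxi\cdot\partial_{\xxi}$-type derivatives of $\Phi_N$ contracted with $\matA$. I would show that this must vanish for $N\ge2$ by varying the trace-free $\matA$ freely, which forces $\partial_{\xxi}\Phi_N=0$ and hence $\Phi_N=0$ when $N\ge2$.

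Both responses then vanish, and Lemma~\ref{lem:descent} lowers the order, contradicting minimality. The main obstacle is Step 3: the continuation of the transverse response to complex null covectors, and a precise comparison modulo $q_{\xxi}$. If that comparison fails, the fallback is to use the $(\boldsymbol v,\pi)=(\xxi,0)$ test directly in both constructions.
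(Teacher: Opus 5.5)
Your outline follows the paper's plan: the harmonic response from Proposition~\ref{prop:sym-harmonic}, the transport identity with $+\matA\matS-\matS\matA$, and a comparison where $\alpha=q_{\xxi}=0$. But it misses the idea that closes the argument. The covariance analysis does not make the transverse scalar field-independent. Invariance together with degree-$N$ homogeneity gives $\matS=c_N(t)\chi^N\matP$ with $\chi=\ww\cdot\xxi$, as your own phrase ``a function of $(t,\chi)$'' suggests. A field-independent $\psi$ obeying $[(\mathsfbi G\ww)\cdot\partial_{\ww}-(\mathsfbi G^{\Tr}\xxi)\cdot\partial_{\xxi}]\psi=0$ for all trace-free $\mathsfbi G$ would be constant in $\xxi$ and hence zero. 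So your expectation $\psi=\psi(t,\xx,\xxi)$ is either unjustified or already the conclusion.

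This vorticity dependence is what finishes the proof. Take a nonzero null $\xxi$ with $\tau=-\uu\cdot\xxi$. By coprimality the velocity block continues polynomially as $\mathsfbi V(\xxi)=c_N\chi^N\mathsfbi I+\boldsymbol d(\xxi)\xxi^{\Tr}$, and with \eqref{eq:sym-harmonic-response} this gives $c_N(t)(\ww\cdot\xxi)^N=\Phi_N(t,\xx,-\uu\cdot\xxi,\xxi)$. The vorticity is a free first-order datum, independent of $\uu$ and absent from the right-hand side, so $c_N=0$. Then $\Phi_N(t,\xx,\cdot\,,\xxi)$ vanishes on the open set of values $-\uu\cdot\xxi\in\C$, hence identically. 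Both degree-$N$ responses vanish and Lemma~\ref{lem:descent} applies. You vary only $\uu$, which yields $\tau$-independence of $\Phi_N$ on the null cone and leaves $\Phi_N$ undetermined.

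Step~4 therefore has no valid input. Agreement on the complex null cone fixes the real transverse response only modulo $q_{\xxi}$. So you have not shown that the degree-$N$ symbol of $q$ coincides with that of $\Phi_N(t,\xx,D_{\xx})(\uu,p)$ on the real transverse kernels, and the commutator argument cannot start.

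Step~2 also has a gap. The symmetry covariance equation is available only for harmonic-gradient tests, that is, symmetric $\mathsfbi G$, and only in the projected form $\matP\matGamma\matP=0$. A symmetric trace-free $\mathsfbi G$ with $\mathsfbi G\ww=\mathsfbi G\xxi=0$ vanishes when $\ww$ and $\xxi$ are independent, so your stabiliser argument has no admissible test matrices. The bracket step of Lemma~\ref{lem:covariance} supplies nonsymmetric tests and also removes the vorticity derivatives, but it does not apply directly to the projected raw equation. The paper instead passes to $\matS^\sharp=\mathsfbi C_{\xxi}\matS$. Since $\mathsfbi C_{\xxi}\matGamma=0$, this matrix satisfies exactly \eqref{eq:s5-Dh} and is homogeneous of degree $N+1$. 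Lemma~\ref{lem:covariance} then gives $\matS^\sharp=c_N\chi^N\mathsfbi C_{\xxi}$, and $\mathsfbi C_{\xxi}^2=-q_{\xxi}\matP$ returns $\matS=c_N\chi^N\matP$.
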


\begin{proof}
Let $q=(\boldsymbol r,h)$ have order at most $N\ge2$, and let
$\matS=\matP\sigma^u_{\boldsymbol r,N}(-\uu\cdot\xxi,\xxi)\matP$ be its
transverse response \eqref{eq:s4-7}.

\emph{Transport.} The proof of Theorem~\ref{thm:transport} applies with
$L$ in place of $M$. The corrected test
\eqref{eq:finite-transverse-test}--\eqref{eq:finite-transverse-error}
is the same, kernel preservation gives
\eqref{eq:core-transverse-output-constraints}, and at degree $N$ the
order-zero term of $L$ contributes $+\matA\matS\boldsymbol v_0$ in
place of $-\matA^{\Tr}\matS\boldsymbol v_0$. The coefficient variation
in \eqref{eq:sym-linearisation} has degree one, which is below $N$, and
drops out. With \eqref{eq:core-transverse-product},
\begin{equation}\label{eq:sym-transport}
 \matP\left[Y\matS-(\matA^{\Tr}\xxi)\cdot\partial_{\xxi}\matS
                  +\matA\matS-\matS\matA\right]\matP=0.
\end{equation}
For $N=1$ the degree-one part of the coefficient variation is
$-(\xxi\cdot\boldsymbol r)\boldsymbol v_0$, which is transverse and
survives the projection; this is why the present argument stops at
first order.

\emph{Dependence and covariance.} The pressure and velocity comparisons
in the proof of Theorem~\ref{thm:dependence} use the derivative terms of
\eqref{eq:s5-forced-equation} and the fact that its multiplication
terms contain neither pressure data nor the velocity value. Both facts
hold for \eqref{eq:sym-transport}. They give
$\matS=\matS(t,j^m_{\xx}\ww,\xxi)$ and, for every harmonic gradient
$\boldsymbol g$ with $\boldsymbol g(0)=0$ and
$\mathsfbi G=D_{\boldsymbol z}\boldsymbol g(0)$, which is symmetric and
trace-free,
\begin{equation}\label{eq:sym-covariance-raw}
 \matP\matGamma\matP=0,\qquad
 \matGamma=\mathrm d_{\ww}\matS
 \left[(\ww\cdot D_{\boldsymbol z})\boldsymbol g
       -(\boldsymbol g\cdot D_{\boldsymbol z})\ww\right]
 -(\mathsfbi G\xxi)\cdot\partial_{\xxi}\matS
 +\mathsfbi G\matS-\matS\mathsfbi G.
\end{equation}
The matrix terms differ from those of \eqref{eq:s5-finite-jet-control}.
Differentiating $\matS\xxi=0$ along the same increments gives
$\matGamma\xxi=0$. Together with $\matP\matGamma\matP=0$ this makes
$\matGamma$ map every vector into the line spanned by $\xxi$, so
$\mathsfbi C_{\xxi}\matGamma=0$.

\emph{Transformation of the covariance equation.} Left multiplication
by the cross-product matrix converts \eqref{eq:sym-covariance-raw}
to the covariance equation of Lemma~\ref{lem:covariance}.
This multiplication retains all information on the amplitude plane:
\begin{equation}\label{eq:sym-cross-product-inverse}
 \mathsfbi C_{\xxi}^2=-q_{\xxi}\matP,\qquad q_{\xxi}>0.
\end{equation}
Put
\begin{equation}\label{eq:sym-sharp}
 \matS^\sharp=\mathsfbi C_{\xxi}\matS.
\end{equation}
It is transverse, $\matS^\sharp\xxi=0$ and $\xxi^{\Tr}\matS^\sharp=0$,
and depends on the same arguments as $\matS$. The $\xxi$-derivative in
\eqref{eq:sym-covariance-raw} acts on $\mathsfbi C_{\xxi}$ as
$-\mathsfbi C_{\mathsfbi G\xxi}$, and \eqref{eq:cross-product-covariance}
with symmetric $\mathsfbi G$ reads
$\mathsfbi G\mathsfbi C_{\xxi}+\mathsfbi C_{\xxi}\mathsfbi G
=-\mathsfbi C_{\mathsfbi G\xxi}$. Hence
\begin{equation}\label{eq:sym-sharp-covariance}
 \mathrm d_{\ww}\matS^\sharp
 \left[(\ww\cdot D_{\boldsymbol z})\boldsymbol g
       -(\boldsymbol g\cdot D_{\boldsymbol z})\ww\right]
 -(\mathsfbi G^{\Tr}\xxi)\cdot\partial_{\xxi}\matS^\sharp
 -\mathsfbi G\matS^\sharp-\matS^\sharp\mathsfbi G^{\Tr}
 =\mathsfbi C_{\xxi}\matGamma=0,
\end{equation}
which is the covariance equation \eqref{eq:s5-Dh} for $\matS^\sharp$.
The matrix $\matS^\sharp$ depends smoothly on a finite divergence-free
vorticity jet on an open neighbourhood and is defined for every real $\xxi\ne0$ by
\eqref{eq:s4-7} and \eqref{eq:sym-sharp}. Under every nonzero real
rescaling of $\xxi$ it is homogeneous of degree $N+1$.
Lemma~\ref{lem:covariance}, specifically
\eqref{eq:covariance-scalar-form}, therefore gives
\begin{equation}\label{eq:sym-sharp-form}
 \matS^\sharp=\beta\mathsfbi C_{\xxi},\qquad
 \beta=c_N(t)\chi^N.
\end{equation}
Using \eqref{eq:sym-cross-product-inverse} and $\matS=\matP\matS\matP$,
\begin{equation}\label{eq:sym-S-form}
 \matS=c_N(t)\,\chi^N\matP,\qquad \chi=\ww\cdot\xxi.
\end{equation}
A symmetry's transverse response is a multiple of the identity on the
amplitude plane, where an adjoint response \eqref{eq:s6-5} was a
multiple of the rotation $\mathsfbi C_{\xxi}$.

\emph{Intersection of the characteristic branches.} The transverse response
\eqref{eq:sym-S-form} depends on vorticity, while the polynomial
$\Phi_N$ in \eqref{eq:sym-harmonic-response} is independent of the
fields. At the intersection of the complex characteristic branches,
we shall obtain
\begin{equation}\label{eq:sym-intersection}
 c_N(t)\,(\ww\cdot\xxi)^N=\Phi_N(t,\xx,-\uu\cdot\xxi,\xxi),
 \qquad \xxi^{\Tr}\xxi=0,\quad\xxi\ne0.
\end{equation}
The freely variable vorticity will force $c_N=0$. To justify this
comparison, the denominator in the real projector $\matP$ must first
be removed: it vanishes on the complex null cone. Use the full
polynomial velocity block
\begin{equation}\label{eq:sym-polynomial-block}
 \mathsfbi V(\xxi)
   =\sigma^u_{\boldsymbol r,N}(-\uu\cdot\xxi,\xxi).
\end{equation}
For real $\xxi\ne0$ and
$\boldsymbol v\perp\xxi$, kernel preservation makes
$\mathsfbi V(\xxi)\boldsymbol v$ transverse, and \eqref{eq:sym-S-form}
gives $\matP\mathsfbi V\matP=c_N\chi^N\matP$. The rows of
$\mathsfbi V-c_N\chi^N\mathsfbi I$ therefore annihilate $\xxi^\perp$
and, by the coprimality of the coordinates used in
\eqref{eq:core-harmonic-root}, are polynomial multiples of $\xxi^{\Tr}$:
\begin{equation}\label{eq:sym-block}
 \mathsfbi V(\xxi)=c_N(t)\chi^N\mathsfbi I+\boldsymbol d(\xxi)\xxi^{\Tr},
 \qquad
 \mathsfbi V(\xxi)\xxi=c_N(t)\chi^N\xxi\quad(\xxi^{\Tr}\xxi=0).
\end{equation}
The first identity in \eqref{eq:sym-block} is polynomial in $\xxi$,
so it extends to complex $\xxi$ at fixed real Euler jets.
For a nonzero null $\xxi$ with $\tau=-\uu\cdot\xxi$, both
$\alpha$ and $q_{\xxi}$ vanish. Such covectors belong to neither
test family in \eqref{eq:s2-characteristic-families}.
The polynomial identity \eqref{eq:sym-harmonic-response} already
extends to this set. Comparing its velocity rows with
\eqref{eq:sym-block} proves \eqref{eq:sym-intersection}.
The vorticity is a free first-order datum, independent of the velocity
value, and the right-hand side of \eqref{eq:sym-intersection} does not
contain it; so $c_N=0$, and
then $\Phi_N(t,\xx,-\uu\cdot\xxi,\xxi)=0$ for all nearby velocities.
For a fixed null $\xxi$, its real and imaginary parts are orthogonal
and of equal length, so $\uu\cdot\xxi$ fills an open subset of $\C$ as
$\uu$ varies, and the polynomial $\Phi_N(t,\xx,\,\cdot\,,\xxi)$
vanishes identically. Both degree-$N$ responses of $q$ are now zero:
the transverse response, by \eqref{eq:sym-block} and kernel
preservation, and the harmonic response, by
\eqref{eq:sym-harmonic-response}. Lemma~\ref{lem:descent} lowers the
order, and repetition ends at order one.
\end{proof}

For the adjoint problem the intersection of the characteristic branches carries no
information, because $\mathsfbi C_{\xxi}\xxi=0$ makes the vortical
response vanish there as well; this is why the exclusion of higher
orders in Section~\ref{sec:8} required the forced identities. For
symmetries the two families constrain each other, and no separate
exclusion argument is needed.

\subsection{First order}

Let $q=(\boldsymbol r,h)$ be a local symmetry of order at most one.
By Proposition~\ref{prop:sym-harmonic}, $\Phi_1$ is linear, and we
name its coefficients with the signs of \eqref{eq:sym-point-characteristic}:
\begin{equation}\label{eq:sym-first-order-phi}
 \Phi_1=-X^0(t,\xx)\,\tau-\boldsymbol X(t,\xx)\cdot\xxi,\qquad
 q_1=-X^0D_t(\uu,p)-(\boldsymbol X\cdot D_{\xx})(\uu,p).
\end{equation}
The pair $q_1$ has symbol $\Phi_1\mathsfbi I$, hence the same harmonic
response as $q$. Subtracting it removes that response.
Put $(\boldsymbol R,R^0)=q-q_1$. Its
harmonic response is zero, and it satisfies
$L(\boldsymbol R,R^0)=-Lq_1$ on Euler, where the differentiated Euler
equations cancel and only derivatives of the coefficients remain:
\begin{equation}\label{eq:sym-point-residual}
 Lq_1=-\binom{(YX^0)\uu_t+\matA(Y\boldsymbol X)+p_t\nabla X^0
              +(D_{\xx}\boldsymbol X)^{\Tr}\nabla p}
             {\nabla X^0\cdot\uu_t+\tr[(D_{\xx}\boldsymbol X)\matA]}.
\end{equation}
This source has order at most one, so $(\boldsymbol R,R^0)$ satisfies
\eqref{eq:sym-forced-hypotheses} with $n=1$, and kernel preservation
\eqref{eq:core-characteristic-restriction} applies to its degree-one
symbol.

\emph{Harmonic response in free coordinates.} On Euler the compatible
first-order jets have the independent coordinates
\begin{equation}\label{eq:sym-free-first-order}
 (t,\xx,\uu,p,\matA,\boldsymbol b_p,s),\qquad
 \tr\matA=0,\quad \boldsymbol b_p=\nabla p,\quad s=p_t,
 \qquad \uu_t=-\matA\uu-\boldsymbol b_p.
\end{equation}
The harmonic amplitude $(-\xxi,\alpha)$ changes $\matA$ by
$-\xxi\xxi^{\Tr}$, $\boldsymbol b_p$ by $\alpha\xxi$ and $s$ by
$\alpha\tau$. For each component $F$ of $(\boldsymbol R,R^0)$, the
vanishing harmonic response reads
\begin{equation}\label{eq:sym-first-order-harmonic}
 -\mathrm d_{\matA}F[\xxi\xxi^{\Tr}]+\alpha F_{\boldsymbol b_p}\cdot\xxi
 +\alpha\tau F_s=0,\qquad \xxi^{\Tr}\xxi=0.
\end{equation}
Here $\mathrm d_{\matA}F$ is the differential in the trace-free
velocity-gradient entries, with the other arguments fixed, extended
complex linearly. The null square has zero trace, so this evaluation
is well-defined.
The coefficients of
$\tau^2$ and $\tau$ give $F_s=0$ and $F_{\boldsymbol b_p}\cdot\xxi=0$
for every null $\xxi$; null covectors span $\C^3$, so
$F_{\boldsymbol b_p}=0$. The constant term says that $\mathrm d_{\matA}F$
annihilates the null squares $\xxi\xxi^{\Tr}$, whose real and
imaginary parts span the trace-free symmetric matrices. Hence the
remainder depends on the first derivatives only through the skew part
of $\matA$, that is, through the vorticity:
\begin{equation}\label{eq:sym-first-order-vorticity}
 \boldsymbol R=\boldsymbol R(t,\xx,\uu,p,\ww),\qquad
 R^0=R^0(t,\xx,\uu,p,\ww).
\end{equation}

\emph{Transverse kernel preservation.} A transverse increment
$(\boldsymbol v,0)$ with $\boldsymbol v\perp\xxi$ changes $\ww$ by
$\xxi\times\boldsymbol v$, so the degree-one responses of
\eqref{eq:sym-first-order-vorticity} are
$\boldsymbol R_{\ww}(\xxi\times\boldsymbol v)$ and
$R^0_{\ww}\cdot(\xxi\times\boldsymbol v)$. Kernel preservation
requires the scalar response to vanish and the vector response to be
transverse:
\begin{equation}\label{eq:sym-first-order-kernel}
 R^0_{\ww}\cdot(\xxi\times\boldsymbol v)=0,\qquad
 \xxi^{\Tr}\boldsymbol R_{\ww}(\xxi\times\boldsymbol v)=0.
\end{equation}
As $\xxi$ varies, the first identity gives $R^0_{\ww}=0$, and the
second makes every real $\xxi$ an eigenvector of $\boldsymbol R_{\ww}^{\Tr}$,
so $\boldsymbol R_{\ww}=k\mathsfbi I$. Equality of the mixed
derivatives $\partial_{\omega_j}\partial_{\omega_i}R_i$ for $i\ne j$
makes $k$ independent of the vorticity. Integrating,
\begin{equation}\label{eq:sym-first-order-form}
 \boldsymbol R=k\ww+\boldsymbol U(t,\xx,\uu,p),\qquad
 R^0=U^0(t,\xx,\uu,p),\qquad k=k(t,\xx,\uu,p).
\end{equation}

\emph{The vorticity term.} On Euler, the vorticity equation
$Y\ww=\matA\ww$ and the multiplication term $+\matA$ in $L$ give
\begin{equation}\label{eq:sym-quadratic}
 Y(k\ww)+\matA(k\ww)=(Yk)\ww+2k\matA\ww,\qquad
 Yk=k_t+\uu\cdot k_{\xx}-k_{\uu}\cdot\boldsymbol b_p
      +k_p(s+\uu\cdot\boldsymbol b_p).
\end{equation}
In the coordinates \eqref{eq:sym-free-first-order}, $Yk$ does not
contain $\matA$, and every other term of the momentum equation for
$q=q_1+(\boldsymbol R,R^0)$ is at most linear in $\matA$: the source
\eqref{eq:sym-point-residual}, because $\uu_t=-\matA\uu-\boldsymbol b_p$,
and $L(\boldsymbol U,U^0)$, because $Y\boldsymbol U$ and $\nabla U^0$
are linear in the first derivatives. With the other coordinates fixed,
the equation is polynomial in the trace-free entries of $\matA$.
It holds on an open set of these entries, so its quadratic part,
$2k\matA\ww$, vanishes identically. The skew part of $\matA$
annihilates $\ww$, while its symmetric part can map $\ww$ to a
nonzero vector. Hence $k=0$.

\emph{Point coefficients.} The symmetry is now the point characteristic
\eqref{eq:sym-point-characteristic} with coefficients
$X^0,\boldsymbol X$ depending on $(t,\xx)$ and $\boldsymbol U,U^0$
depending on $(t,\xx,\uu,p)$. Write
$\mathsfbi B=D_{\xx}\boldsymbol X$, $\boldsymbol T=\nabla X^0$, and let
subscripts on $\boldsymbol U,U^0$ denote partial derivatives with the
other arguments fixed. In the coordinates
\eqref{eq:sym-free-first-order}, the momentum and divergence equations
of \eqref{eq:sym-determining} become
\begin{equation}\label{eq:sym-point-equations}
 \begin{aligned}
 0={}&(\boldsymbol U_p-\boldsymbol T)s
  +\left[-\boldsymbol U_{\uu}+(U^0_p+X^0_t+\uu\cdot\boldsymbol T)\mathsfbi I
         -\mathsfbi B^{\Tr}+\boldsymbol U_p\uu^{\Tr}\right]\boldsymbol b_p\\
  &+\matA\boldsymbol Z+\matA^{\Tr}U^0_{\uu}
  +\boldsymbol U_t+(\partial_{\xx}\boldsymbol U)\uu+\partial_{\xx}U^0,\\
 0={}&\Div_{\xx}\boldsymbol U+\tr[(\boldsymbol U_{\uu}-\mathsfbi B)\matA]
  +\boldsymbol T\cdot\matA\uu+(\boldsymbol U_p+\boldsymbol T)\cdot\boldsymbol b_p,\\
 \boldsymbol Z={}&\boldsymbol U+(X^0_t+\uu\cdot\boldsymbol T)\uu
               -\boldsymbol X_t-\mathsfbi B\uu.
 \end{aligned}
\end{equation}
Here $\boldsymbol U_{\uu}$ is the velocity Jacobian, $U^0_{\uu}$ a
column, and $\partial_{\xx}$ and $\Div_{\xx}$ hold the field values
fixed. The coefficients of $s$ in the first equation and of
$\boldsymbol b_p$ in the second give
$\boldsymbol U_p=\boldsymbol T=-\boldsymbol T$, so $X^0=X^0(t)$ and
$\boldsymbol U_p=0$. The coefficient of the trace-free matrix $\matA$
in the first equation is $\matA\boldsymbol Z+\matA^{\Tr}U^0_{\uu}$;
taking for $\matA$ the elementary matrices with a single off-diagonal
entry gives $\boldsymbol Z=0$ and $U^0_{\uu}=0$. Then $\boldsymbol Z=0$
determines $\boldsymbol U$, the coefficient of $\boldsymbol b_p$
determines the symmetric part of $\mathsfbi B$, and the trace of that
relation integrates in $p$:
\begin{equation}\label{eq:sym-point-steps}
 \begin{gathered}
 \boldsymbol U=\boldsymbol X_t+(\mathsfbi B-X^{0\prime}\mathsfbi I)\uu,\qquad
 \mathsfbi B+\mathsfbi B^{\Tr}=(U^0_p+2X^{0\prime})\mathsfbi I,\\
 U^0=2(\varsigma-X^{0\prime})p+\varrho(t,\xx),\qquad
 \varsigma=\tfrac13\Div_{\xx}\boldsymbol X,\qquad
 \mathsfbi B+\mathsfbi B^{\Tr}=2\varsigma\mathsfbi I.
 \end{gathered}
\end{equation}
The divergence equation in \eqref{eq:sym-point-equations} reduces to
$3(\varsigma_t+\uu\cdot\nabla\varsigma)=0$; velocity values range over
an open set, so $\varsigma$ is constant. Differentiating
$\mathsfbi B+\mathsfbi B^{\Tr}=2\varsigma\mathsfbi I$ and combining the
three index permutations, as in \eqref{eq:s9-killing-equation}, makes
all second spatial derivatives of $\boldsymbol X$ vanish, so
$\boldsymbol X=\varsigma\xx+\boldsymbol\Omega(t)\times\xx+\boldsymbol b(t)$
with $\boldsymbol\Omega(t)$ a vector. The momentum equation now reads
\begin{equation}\label{eq:sym-point-final}
 (2\mathsfbi C_{\boldsymbol\Omega'}-X^{0\prime\prime}\mathsfbi I)\uu
 +\boldsymbol X_{tt}+\nabla\varrho=0.
\end{equation}
The skew and symmetric parts of the velocity coefficient give
$\boldsymbol\Omega'=0$ and $X^{0\prime\prime}=0$, and the remaining
terms give $\varrho=-\boldsymbol b''(t)\cdot\xx+\varpi(t)$. With
$X^0=\varkappa_1t+\varkappa_0$ this is exactly the family
\eqref{eq:sym-family}.

\emph{Sufficiency.} For the family \eqref{eq:sym-family}, direct
differentiation on unrestricted jets gives
\begin{equation}\label{eq:sym-sufficiency}
 \begin{aligned}
 Y\boldsymbol r+\matA\boldsymbol r+\nabla h
   &=\bigl[(\varsigma-2\varkappa_1)\mathsfbi I
          +\mathsfbi C_{\boldsymbol\Omega}\bigr]\boldsymbol F
     -X^0D_t\boldsymbol F-(\boldsymbol X\cdot D_{\xx})\boldsymbol F,\\
 \Div\boldsymbol r&=-\varkappa_1G-X^0D_tG-\boldsymbol X\cdot D_{\xx}G.
 \end{aligned}
\end{equation}
The right-hand sides vanish on Euler, so every member of the family is
a local symmetry. This completes the proof of Theorem~\ref{thm:symmetries}.

The vorticity pair illustrates the effect of the different
order-zero terms. Equations \eqref{eq:s9-helicity-adjoint} and
\eqref{eq:sym-quadratic} give, on Euler,
\begin{equation}\label{eq:sym-vorticity-comparison}
 M(\ww,0)=0,\qquad L(\ww,0)=\binom{2\matA\ww}{0}.
\end{equation}
Thus the vorticity pair is an adjoint symmetry and fails the ordinary
symmetry equations on general Euler jets. The pressure-gauge pair
$(0,\varpi(t))$ solves both determining equations. It is also an exact
multiplier for incompressibility, since
\begin{equation}\label{eq:sym-pressure-gauge-multiplier}
 \Div(\varpi(t)\uu)=\varpi(t)G.
\end{equation}

\section{Conclusions}\label{sec:conclusions}

For smooth three-dimensional homogeneous incompressible Euler flow,
every local conservation current of arbitrary finite differential order
is equivalent to a combination of the classical currents
\eqref{eq:intro-mass}--\eqref{eq:intro-general-momentum}.
The proof establishes the stronger classification of all finite-order
local adjoint symmetries. The characteristic tests span all admissible
highest-order variations, and their identities determine the possible
highest derivative dependence. Freedom in the compatible
vorticity gradients then excludes every order above one; helicity
accounts for the first-order term.
The unrestricted identities \eqref{eq:s9-current-identities} verify
a multiplier representative for every classified adjoint symmetry.
Proposition~\ref{prop:current-correspondence} then gives the complete
classification of current classes.

Theorem~\ref{thm:symmetries} also classifies all finite-order local
symmetry characteristics. Every such characteristic agrees on Euler
with \eqref{eq:sym-point-characteristic}, where the coefficients are
given by \eqref{eq:sym-family}. The proof uses the same characteristic
families and highest-derivative comparisons as the adjoint problem.
The remaining first-order equations determine precisely the classical
point generators \eqref{eq:intro-point-generators}.

This establishes completeness of the familiar local balances in the
primitive variables, including currents with time derivatives and smooth
non-polynomial derivative dependence. The current equivalence explains
how higher-order vorticity expressions can represent the same, or a
trivial, conservation-law class. Generalised enstrophies for planar flow
and Ertel quantities involving additional transported scalars retain
their stated roles in their respective systems. Global topological
invariants, nonlocal functionals and conservation at weak regularity
require their own hypotheses and analysis.

\subsubsection*{Acknowledgements}
The author thanks K.~Druzhkov for careful reading, helpful discussions,
and suggestions that clarified the arguments and the presentation.
OpenAI Codex and Anthropic Claude assisted manuscript preparation and
proof analysis. Selected algebraic identities were checked with SymPy
and Maple 2026. 

%\subsubsection*{Verification material}
%Exact-arithmetic verification scripts accompany the manuscript source.
%The documentation distinguishes checks performed for this version from
%records retained from earlier versions. These finite algebraic checks
%support the displayed identities; the arbitrary-order classification
%is established by the proof in the text.

\begingroup
\footnotesize
\interlinepenalty=10000
\bibliographystyle{ieeetr}
\bibliography{references}

@article{CraikCriminale1986,
  author  = {Craik, A. D. D. and Criminale, W. O.},
  title   = {Evolution of wavelike disturbances in shear flows: a class of exact solutions of the {Navier--Stokes} equations},
  journal = {Proc. R. Soc. Lond. A},
  year    = {1986},
  volume  = {406},
  number  = {1830},
  pages   = {13--26},
  doi     = {10.1098/rspa.1986.0061}
}

@article{LifschitzHameiri1991,
  author  = {Lifschitz, A. and Hameiri, E.},
  title   = {Local stability conditions in fluid dynamics},
  journal = {Phys. Fluids A},
  year    = {1991},
  volume  = {3},
  number  = {11},
  pages   = {2644--2651},
  doi     = {10.1063/1.858153}
}

@article{Serre1984,
  author  = {Serre, D.},
  title   = {Les invariants du premier ordre de l'\'{e}quation d'{Euler} en dimension trois},
  journal = {Physica D},
  volume  = {13},
  number  = {1--2},
  pages   = {105--136},
  year    = {1984},
  doi     = {10.1016/0167-2789(84)90273-2}
}

@article{CheviakovOberlack2014,
  author = {Cheviakov, A. F. and Oberlack, M.},
  title = {Generalized {Ertel}'s theorem and infinite hierarchies of conserved quantities for three-dimensional time-dependent {Euler} and {Navier--Stokes} equations},
  journal = {J. Fluid Mech.},
  volume = {760},
  pages = {368--386},
  year = {2014},
  doi = {10.1017/jfm.2014.611}
}

@article{AncoBluman2002b,
  author  = {Anco, S. C. and Bluman, G.},
  title   = {Direct construction method for conservation laws of partial differential equations. {Part II}: General treatment},
  journal = {Eur. J. Appl. Math.},
  volume  = {13},
  pages   = {567--585},
  year    = {2002},
  doi     = {10.1017/S0956792501004661}
}

@article{Moffatt1969,
  author = {Moffatt, H. K.},
  title = {The degree of knottedness of tangled vortex lines},
  journal = {J. Fluid Mech.},
  volume = {35},
  number = {1},
  pages = {117--129},
  year = {1969},
  doi = {10.1017/S0022112069000991}
}

@article{EncisoPeraltaSalasTorres2016,
  author = {Enciso, A. and Peralta-Salas, D. and Torres de Lizaur, F.},
  title = {Helicity is the only integral invariant of volume-preserving transformations},
  journal = {Proc. Natl Acad. Sci. USA},
  volume = {113},
  number = {8},
  pages = {2035--2040},
  year = {2016},
  doi = {10.1073/pnas.1516213113}
}

@article{IzosimovKhesin2017,
  author = {Izosimov, A. and Khesin, B.},
  title = {Classification of {Casimirs} in {2D} hydrodynamics},
  journal = {Mosc. Math. J.},
  volume = {17},
  number = {4},
  pages = {699--716},
  year = {2017},
  doi = {10.17323/1609-4514-2017-17-4-699-716}
}

@article{DeRosa2020,
  author = {De Rosa, L.},
  title = {On the helicity conservation for the incompressible {Euler} equations},
  journal = {Proc. Amer. Math. Soc.},
  volume = {148},
  number = {7},
  pages = {2969--2979},
  year = {2020},
  doi = {10.1090/proc/14952}
}

@article{WangWeiWuYe2024,
  author = {Wang, Y. and Wei, W. and Wu, G. and Ye, Y.},
  title = {On the energy and helicity conservation of the incompressible {Euler} equations},
  journal = {J. Nonlinear Sci.},
  volume = {34},
  pages = {art.~63},
  year = {2024},
  doi = {10.1007/s00332-024-10040-8}
}

@article{GusyatnikovaYumaguzhin1989,
  author  = {Gusyatnikova, V. N. and Yumaguzhin, V. A.},
  title   = {Symmetries and conservation laws of {Navier--Stokes} equations},
  journal = {Acta Appl. Math.},
  volume  = {15},
  number  = {1--2},
  pages   = {65--81},
  year    = {1989},
  doi     = {10.1007/BF00131930}
}

@incollection{KhorkovaVerbovetsky1995,
  author    = {Khor'kova, N. G. and Verbovetsky, A. M.},
  title     = {On symmetry subalgebras and conservation laws for the
               {$k$--$\varepsilon$} turbulence model and the
               {Navier--Stokes} equations},
  editor    = {Lychagin, V. V.},
  booktitle = {The Interplay between Differential Geometry and
               Differential Equations},
  series    = {American Mathematical Society Translations, Series 2},
  volume    = {167},
  pages     = {61--90},
  publisher = {American Mathematical Society},
  address   = {Providence, RI},
  year      = {1995},
  doi       = {10.1090/trans2/167/04}
}

@book{BlumanCheviakovAnco2010,
  author    = {Bluman, G. W. and Cheviakov, A. F. and Anco, S. C.},
  title     = {Applications of Symmetry Methods to Partial Differential Equations},
  series    = {Applied Mathematical Sciences},
  volume    = {168},
  publisher = {Springer},
  address   = {New York},
  year      = {2010},
  doi       = {10.1007/978-0-387-68028-6}
}

@book{CheviakovZhao2024,
  author    = {Cheviakov, A. and Zhao, P.},
  title     = {Analytical Properties of Nonlinear Partial Differential
               Equations: with Applications to Shallow Water Models},
  series    = {CMS/CAIMS Books in Mathematics},
  volume    = {10},
  publisher = {Springer},
  address   = {Cham},
  year      = {2024},
  doi       = {10.1007/978-3-031-53074-6}
}

@book{Batchelor2000,
  author = {Batchelor, G. K.},
  title = {An Introduction to Fluid Dynamics},
  publisher = {Cambridge University Press},
  address = {Cambridge},
  year = {2000},
  doi = {10.1017/CBO9780511800955}
}

@article{Kraichnan1973,
  author = {Kraichnan, R. H.},
  title = {Helical turbulence and absolute equilibrium},
  journal = {J. Fluid Mech.},
  volume = {59},
  number = {4},
  pages = {745--752},
  year = {1973},
  doi = {10.1017/S0022112073001837}
}

@article{AlexakisBiferale2018,
  author = {Alexakis, A. and Biferale, L.},
  title = {Cascades and transitions in turbulent flows},
  journal = {Phys. Rep.},
  volume = {767--769},
  pages = {1--101},
  year = {2018},
  doi = {10.1016/j.physrep.2018.08.001}
}

@article{BesseFrisch2017,
  author = {Besse, N. and Frisch, U.},
  title = {Geometric formulation of the {Cauchy} invariants for incompressible
           {Euler} flow in flat and curved spaces},
  journal = {J. Fluid Mech.},
  volume = {825},
  pages = {412--478},
  year = {2017},
  doi = {10.1017/jfm.2017.402}
}

@article{Salmon1988,
  author = {Salmon, R.},
  title = {Hamiltonian fluid mechanics},
  journal = {Annu. Rev. Fluid Mech.},
  volume = {20},
  pages = {225--256},
  year = {1988},
  doi = {10.1146/annurev.fl.20.010188.001301}
}

@article{AncoWebb2020,
  author = {Anco, S. C. and Webb, G. M.},
  title = {Hierarchies of new invariants and conserved integrals in inviscid fluid flow},
  journal = {Phys. Fluids},
  volume = {32},
  number = {8},
  pages = {086104},
  year = {2020},
  doi = {10.1063/5.0011649}
}

@book{Olver1993,
  author = {Olver, P. J.},
  title = {Applications of {Lie} Groups to Differential Equations},
  edition = {2},
  series = {Graduate Texts in Mathematics},
  volume = {107},
  publisher = {Springer},
  address = {New York},
  year = {1993},
  doi = {10.1007/978-1-4612-4350-2}
}

@article{KelbinCheviakovOberlack2013,
  author = {Kelbin, O. and Cheviakov, A. F. and Oberlack, M.},
  title = {New conservation laws of helically symmetric, plane and
           rotationally symmetric viscous and inviscid flows},
  journal = {J. Fluid Mech.},
  volume = {721},
  pages = {340--366},
  year = {2013},
  doi = {10.1017/jfm.2013.72}
}

@article{Moreau1961,
  author = {Moreau, J.-J.},
  title = {Constantes d'un {\^{i}}lot tourbillonnaire en fluide parfait barotrope},
  journal = {C. R. Acad. Sci. Paris},
  volume = {252},
  pages = {2810--2812},
  year = {1961}
}

@article{Ertel1942,
  author = {Ertel, H.},
  title = {Ein neuer hydrodynamischer {Wirbelsatz}},
  journal = {Meteorol. Z.},
  volume = {59},
  number = {9},
  pages = {277--281},
  year = {1942}
}

@article{ConstantinETiti1994,
  author = {Constantin, P. and E, W. and Titi, E. S.},
  title = {{Onsager}'s conjecture on the energy conservation for
           solutions of {Euler}'s equation},
  journal = {Commun. Math. Phys.},
  volume = {165},
  number = {1},
  pages = {207--209},
  year = {1994},
  doi = {10.1007/BF02099744}
}

@article{Isett2018,
  author = {Isett, P.},
  title = {A proof of {Onsager}'s conjecture},
  journal = {Ann. Math.},
  volume = {188},
  number = {3},
  pages = {871--963},
  year = {2018},
  doi = {10.4007/annals.2018.188.3.4}
}

@article{Olver1982,
  author  = {Olver, P. J.},
  title   = {A nonlinear {Hamiltonian} structure for the {Euler} equations},
  journal = {J. Math. Anal. Appl.},
  volume  = {89},
  number  = {1},
  pages   = {233--250},
  year    = {1982},
  doi     = {10.1016/0022-247X(82)90100-7}
}

@article{BihloPopovych2020,
  author  = {Bihlo, A. and Popovych, R. O.},
  title   = {Zeroth-order conservation laws of two-dimensional
             shallow water equations with variable bottom topography},
  journal = {Stud. Appl. Math.},
  volume  = {145},
  number  = {2},
  pages   = {291--321},
  year    = {2020},
  doi     = {10.1111/sapm.12320}
}
\endgroup
\end{document}